\def\VersionForArXiV{}
	\newcommand{\LongVersion}[1]{#1}
	\newcommand{\ACMVersion}[1]{}
	\newcommand{\LongVersion}[1]{}
	\newcommand{\ACMVersion}[1]{#1}
\newcommand{\LongVersionActuallyNot}[1]{#1}
  \newcommand{\myparagraph}[1]{\noindent\textbf{#1}\quad}
  \newcommand{\myparagraph}[1]{\paragraph{#1}}
  \providecommand\BibTeX{{%
    \normalfont B\kern-0.5em{\scshape i\kern-0.25em b}\kern-0.8em\TeX}}}
\newenvironment{ienumeration}
	{\ifdefined\VersionForArXiV\begin{enumerate}\else\begin{inparaenum}[\itshape i\upshape)]\fi}
	{\ifdefined\VersionForArXiV\end{enumerate}\else\end{inparaenum}\fi}
	\definecolor{mygreen}{rgb}{0,0.6,0}
	\definecolor{mygray}{rgb}{0.5,0.5,0.5}
	\definecolor{mymauve}{rgb}{0.58,0,0.82}
	\definecolor{weborange}{RGB}{255,165,0}
\lstdefinestyle{log}{
	backgroundcolor=\color{white},   %
	basicstyle=\scriptsize,        %
	breakatwhitespace=false,         %
	breaklines=true,                 %
	captionpos=b,                    %
	commentstyle=\color{mygreen},    %
	deletekeywords={...},            %
	escapeinside={\%*}{*)},          %
	extendedchars=true,              %
	frame=single,	                   %
	keepspaces=true,                 %
	keywordstyle=\color{red!70!black}\bfseries,       %
	morekeywords={@, open, close, update},            %
	numbers=left,                    %
	numbersep=5pt,                   %
	numberstyle=\tiny\color{mygray}, %
	rulecolor=\color{black},         %
	showspaces=false,                %
	showstringspaces=false,          %
	showtabs=false,                  %
	stepnumber=1,                    %
	stringstyle=\color{mymauve},     %
	tabsize=2,	                   %
	classoffset=1, %
	otherkeywords={@},
	morekeywords={@},
	keywordstyle=\color{weborange},
	classoffset=0,
}
\footnotesize\printfield{doi}}
	\definecolor{USPNcobalt}{HTML}{293358}
	\definecolor{USPNocre}{HTML}{8b7d6d}
	\definecolor{USPNblanc}{HTML}{ffffff}
	\definecolor{USPNceruleen}{HTML}{354878}
	\definecolor{USPNsable}{HTML}{ad947e}
\crefname{line}{\text{line}}{\text{lines}} %
\crefname{item}{\text{item}}{\text{items}} %
\crefname{example}{\text{Example}}{\text{Examples}} %
\crefname{assumption}{\text{Assumption}}{\text{Assumptions}} %
\crefname{algorithm}{\text{Procedure}}{\text{Procedures}}
\tikzstyle{every node}=[initial text=]
	\tikzstyle{location}=[circle, minimum size=12pt, draw=black, fill=blue!10, inner sep=1pt] %
\tikzstyle{invariant}=[draw=black, dotted, inner sep=1pt] %
\tikzstyle{final}=[double distance=3pt]
\tikzstyle{accepting}=[final]
\tikzstyle{PTPMOPT}=[,dashed,color=red,semithick]
\newcommand{\styleact}[1]{\ensuremath{\textcolor{coloract!80!black}{\mathrm{#1}}}}
\newcommand{\stylebenchmark}[1]{\textcolor{red!30!black}{\textsc{#1}}}
\newcommand{\stylevar}[1]{\ensuremath{\textcolor{colorclock!80!black}{#1}}}
	\definecolor{coloract}{rgb}{0.50, 0.70, 0.30}
	\definecolor{colorclock}{rgb}{0.4, 0.4, 1}
	\definecolor{colorconst}{rgb}{0.50, 0.20, 0.00}
	\definecolor{colordisc}{rgb}{1, 0, 1}
	\definecolor{colorloc}{rgb}{0.4, 0.4, 0.65}
	\definecolor{colorparam}{rgb}{1, 0.6, 0.0}
	\definecolor{colorvar}{rgb}{0.6, 0.7, 1}
	\definecolor{colorlvar}{rgb}{0.4, 0.4, .5}
	\definecolor{colordparam}{rgb}{.9, 0.8, 0.0}
\pgfplotsset{compat=1.12}
\newcommand{\init}{_0}
\newcommand{\setQ}{{\mathbb Q}}
\newcommand{\setZ}{{\mathbb Z}}
 \newcommand{\interval}{\mathcal{I}}
\newcommand{\Variables}{\mathbb{X}} %
\newcommand{\variable}{x} %
\newcommand{\variablei}[1]{\variable_{#1}} %
\newcommand{\variabledoti}[1]{\ensuremath{\dot{\variable}_{#1}}} %
\newcommand{\VariablesCard}{M}
\newcommand{\update}[2]{\ensuremath{[#1]_{#2}}}
\newcommand{\updates}{\mu}
\newcommand{\Guard}{\Phi}
\newcommand{\GuardP}{\Guard(\mathbf{x})}
\newcommand{\GuardD}{\Guard(\dot{\mathbf{x}})}
\newcommand{\guard}{g}
\newcommand{\A}{\ensuremath{\mathcal{M}}}
\newcommand{\signal}{\sigma}
\newcommand{\assign}{\leftarrow}
\newcommand{\clockabs}{\ensuremath{t_\mathit{abs}}} %
\newcommand{\clockrel}{\ensuremath{t_\mathit{rel}}} %
\newcommand{\varval}{v} %
\newcommand{\wordVal}{u} %
\newcommand{\completeWordVal}{\varval} %
\newcommand{\compOp}{\bowtie}
\newcommand{\Val}{({\Rgeqzero})^{\Variables}}
\newcommand{\edge}{e}
\newcommand{\Edges}{E}
\newcommand{\longueflecheRel}[1]{\stackrel{#1}{\mapsto}}
\newcommand{\flecheRel}{{\rightarrow}}
\newcommand{\invariant}{\mathrm{Inv}}
\newcommand{\Duration}{\mathrm{Dur}}
\newcommand{\Flow}{\mathcal{F}}
\newcommand{\flow}{f}
\newcommand{\Lmon}{\mathcal{L}_{\mathrm{mon}}}
\newcommand{\Lg}{\Lmon}
\newcommand{\partialLg}{\mathcal{L}^{\mathrm{p}}_{\mathrm{mon}}}
\newcommand{\loc}{\ell} %
\newcommand{\Loc}{L} %
\newcommand{\LocFinal}{F}
\newcommand{\product}{\mathrel{||}}
\newcommand{\R}{{\mathbb{R}}}
\newcommand{\Rgeqzero}{\R_{\geq 0}}
\newcommand{\sinit}{s\init} %
\newcommand{\Sinit}{S\init} %
\newcommand{\States}{S} %
\newcommand{\VarValInit}{\mathrm{Init}}
\newcommand{\wloc}{w\loc}
\newcommand{\word}{\textcolor{colorok}{w}}
\newcommand{\DPValuate}[4]
	{\ensuremath{\reset{#3}{\valuate{\valuate{#1}{#2}}{#4}}}}  %
\newcommand{\partfun}{\nrightarrow}
\newcommand{\domain}[1]{\mathrm{dom}(#1)}
\newcommand{\reset}[2]{\ensuremath{[#1]_{#2}}}
\newcommand{\valuate}[2]{\ensuremath{#2(#1)}}
\newcommand{\Rp}{{\mathbb{R}_{>0}}}
\newcommand{\Rnn}{{\Rgeqzero}}
\newcommand{\disjointUnion}{\sqcup}
\newcommand{\matching}{\mathsf{C}}
\newcommand{\Init}{\VarValInit}
\newcommand{\run}{\rho}
\newcommand{\Resulti}[1]{\mathit{Result}_{#1}}
\newcommand{\Conf}{\mathit{State}}
\newcommand{\tbcolor}{\cellcolor{green!25}}
\newcommand{\TO}{\cellcolor{red!25} T.O.}
\newcommand{\defProblem}[3] {%
\noindent\fcolorbox{black}{blue!15}{ %
	\smallskip
	
	\begin{minipage}{.95\columnwidth}
		\textbf{#1 problem:}\\
		\textsc{Input}: #2\\
		\textsc{Problem}: #3
	\end{minipage}
}
	
	\medskip
	
}
\let\oldquote\quote
\renewcommand\quote{\em\oldquote}
\definecolor{vertfonce}{rgb}{0.0, 0.5, 0.0}
\definecolor{rougefonce}{rgb}{1, 0.0, 0.0}
	\definecolor{cellcolor}{rgb}{.8, .8, 1}
\theoremstyle{plain}
\newtheorem{theorem}{Theorem} %
\newtheorem{lemma}[theorem]{Lemma}
\theoremstyle{definition}
\newtheorem{definition}[theorem]{Definition}
\newtheorem{example}[theorem]{Example}
\theoremstyle{remark}
\newcommand{\stylealgo}[1]{\ensuremath{\textsf{#1}}}
\newcommand{\TransWord}{\stylealgo{TQW2LHA}}
\newcommand{\phaver}{PHAVer}
\newcommand{\phaverlite}{PHAVerLite}
\newcommand{\masakiTool}{\textsc{HAMoni}}
\newcommand{\Accc}{\stylebenchmark{ACCC}}
\newcommand{\Acci}{\stylebenchmark{ACCI}}
\newcommand{\Accd}{\stylebenchmark{ACCD}}
\newcommand{\AccdSafety}{\stylebenchmark{\Accd}}
\newcommand{\NAV}{\stylebenchmark{NAV}}
\newcommand{\UnsafeNAV}{\stylebenchmark{UnsafeNAV}}
\newcommand{\SharedGasBurner}{\stylebenchmark{GasBurner}}
 	\definecolor{colorok}{RGB}{80,80,150}
	\definecolor{colorok}{RGB}{0,0,0}
\newcommand{\eg}{\textcolor{colorok}{e.\,g.,}\xspace}
\newcommand{\ie}{\textcolor{colorok}{i.\,e.,}\xspace}
\newcommand{\wrt}{\textcolor{colorok}{w.r.t.}\xspace}
\newcommand{\ouracks}{%
	This work is partially supported
	by JST ACT-X Grant No.\ JPMJAX200U, 
	by JST ERATO HASUO Metamathematics for Systems Design Project (No.\ JPMJER1603),
	by JSPS Grant-in-Aid No.\ 18J22498,
	and
	by ANR-NRF ProMiS (ANR-19-CE25-0015 / 2019 ANR NRF 0092).
}
\newcommand{\ourkeywords}{monitoring, cyber-physical systems, hybrid automata}
\newcommand{\ourabstract}{%
	Monitoring of hybrid systems attracts both scientific and practical attention. However, monitoring algorithms suffer from the methodological difficulty of only observing sampled discrete-time signals, while real behaviors are continuous-time signals. To mitigate this problem of sampling uncertainties, we introduce a \emph{model-bounded monitoring} scheme, where we use prior knowledge about the target system to prune interpolation candidates. Technically, we express such prior knowledge by linear hybrid automata (LHAs)---the LHAs are called \emph{bounding models}. We introduce a novel notion of \emph{monitored language} of LHAs, and we reduce the monitoring problem to the membership problem of the monitored language. We present two partial algorithms---one is via reduction to reachability in LHAs and the other is a direct one using polyhedra---and show that these methods, and thus the proposed model-bounded monitoring scheme, are efficient and practically relevant.
}
	\newcommand{\orcidID}[1]{\href{https://orcid.org/#1}{\includegraphics[height=1em]{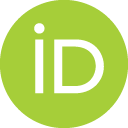}}}
\begin{document}

\title{Model-Bounded Monitoring of Hybrid Systems\LongVersion{\thanks{%
	This is the author version of the manuscript of the same name published in the ACM Transactions on Cyber-Physical Systems.
	The final version is available at \href{https://doi.org/10.1145/3529095}{\nolinkurl{10.1145/3529095}}.
	\ouracks{}
}}
}

\ifdefined\VersionForArXiV\else
\author{Masaki Waga}
\email{mwaga@fos.kuis.kyoto-u.ac.jp}
\orcid{0000-0001-9360-7490}
\affiliation{%
  \institution{Kyoto University}
  \streetaddress{Yoshida-honmachi, Sakyo-ku}
  \city{Kyoto}
  \postcode{606-8501}
  \country{Japan}
}
\author{\'Etienne Andr\'e}
\orcid{0000-0001-8473-9555}
\affiliation{%
  \institution{Université de Lorraine, CNRS, Inria, LORIA}
  \streetaddress{54\,506 Vandœuvre-lès-Nancy}
  \city{Nancy}
  \country{France}
}

\author{Ichiro Hasuo}
\email{i.hasuo@acm.org}
\orcid{0000-0002-8300-4650}
\affiliation{%
  \institution{National Institute of Informatics}
  \streetaddress{2-1-2 Hitotsubashi}
  \city{Tokyo}
  \postcode{101-8430}
  \country{Japan}}
\additionalaffiliation{the Graduate University for Advanced Studies}

\renewcommand{\shortauthors}{Masaki Waga, \'Etienne Andr\'e, and Ichiro Hasuo}

\begin{abstract}
	\ourabstract{}
\end{abstract}

\begin{CCSXML}
<ccs2012>
   <concept>
       <concept_id>10010520.10010553</concept_id>
       <concept_desc>Computer systems organization~Embedded and cyber-physical systems</concept_desc>
       <concept_significance>500</concept_significance>
       </concept>
   <concept>
       <concept_id>10011007.10011074.10011099.10011692</concept_id>
       <concept_desc>Software and its engineering~Formal software verification</concept_desc>
       <concept_significance>300</concept_significance>
       </concept>
   <concept>
       <concept_id>10010520.10010570.10010573</concept_id>
       <concept_desc>Computer systems organization~Real-time system specification</concept_desc>
       <concept_significance>100</concept_significance>
       </concept>
 </ccs2012>
\end{CCSXML}

\ccsdesc[500]{Computer systems organization~Embedded and cyber-physical systems}
\ccsdesc[300]{Software and its engineering~Formal software verification}
\ccsdesc[100]{Computer systems organization~Real-time system specification}

\keywords{\ourkeywords{}}
\fi
\LongVersion{
    \author{}
	\date{}
}

\maketitle

\LongVersion{%
	\newcommand{\keywords}[1]
	{%
		\small\textbf{\textit{Keywords---}} #1
	}

	\noindent{}\textbf{Masaki Waga\orcidID{0000-0001-9360-7490}}
	\\
	{\em\small{}Kyoto University, Japan}
	\\
	{\em\small{}National Institute of Informatics, Japan}

	\smallskip

	\noindent{}\textbf{Étienne André\orcidID{0000-0001-8473-9555}}
	\\
	{\em\small{}Université de Lorraine, CNRS, Inria, LORIA, Nancy, France}

	\smallskip

	\noindent{}\textbf{Ichiro Hasuo\orcidID{0000-0002-8300-4650}}
	\\
	{\em\small{}National Institute of Informatics, Japan}
	\\
	{\em\small{}The Graduate University for Advanced Studies, Japan}

	\begin{abstract}
		\ourabstract{}
	\end{abstract}

	\keywords{\ourkeywords{}}
}
\section{Introduction}\label{section:introduction}
\myparagraph{Monitoring} 
Pervasiveness and safety criticalness of \emph{cyber-physical systems (CPSs)}---where physical dynamics are controlled by software---pose their quality assurance as a pressing industrial and social problem. A number of research efforts have aimed at their correctness proofs, with software science and control theory collaborating hand-in-hand.

However, such \emph{exhaustive verification} is often very hard with real-world examples. This is because \emph{white-box models} of real-world CPSs are hard to find---the difficulty can be because of 1) the systems' complexity,
2) uncertainties in their operation environments, and 3) third-party black-box components.
Mathematically, formal verification is to give a \emph{proof} that a system is correct, and a white-box model is a \emph{definition} of the system. Without a white-box model, there is no definition to build a proof on.

\emph{Monitoring} is attracting attention as a light-weight yet feasible alternative in quality assurance of CPSs. Monitoring consists in checking whether a sequence of data 
satisfies  a specification expressed using some formalism. It can be used offline (\eg{} for extracting interesting parts from a huge log) and online (\eg{} for alerting to unsafe phenomena). See the related work in this section for references.

\begin{figure}[tbp]
\centering
 \includegraphics[bb=102 312 893 550,clip,width=\ACMVersion{.8\linewidth}\LongVersion{\linewidth}]{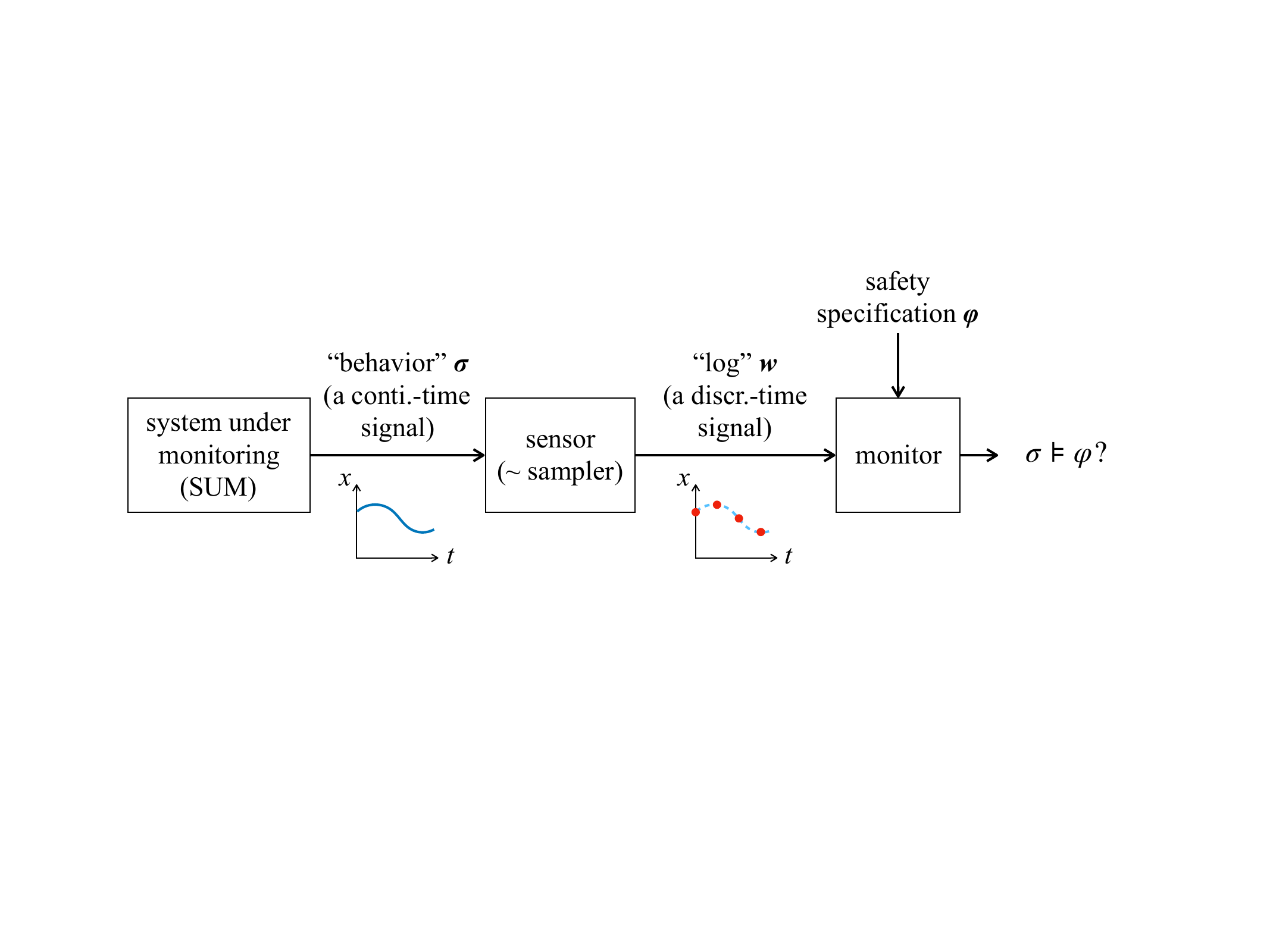}
\caption{Hybrid system monitoring and sampling uncertainties}
\label{figure:hybridSysMonitoring}
\end{figure}

\ACMVersion{\vspace*{.2em}}
\myparagraph{Hybrid System Monitoring} 
In this paper, we study monitoring of CPSs, with a particular emphasis on their \emph{hybrid} aspect (\ie{} the  interplay between continuous and discrete worlds).

We sketch the workflow of hybrid system monitoring in \cref{figure:hybridSysMonitoring}. We are given a specific behavior $\sigma$ of the system under monitoring (SUM) and a specification $\varphi$ (in this paper, we focus on safety specifications). The problem is to decide whether  $\sigma$ is safe or not, in the sense that $\sigma$ satisfies $\varphi$. We assume a computer solves this problem. Therefore, an input to a monitor must be a discrete-time signal $\word$, obtained from the continuous-time signal $\sigma$ via sampling. 
We shall call such $\word$ a \emph{log} of the SUM induced by the behavior $\sigma$.

\vspace*{.2em}
\begin{wrapfigure}[4]{r}{0pt}
\raisebox{-.5\height}{\includegraphics[bb=412 527 612 601,clip,width=.3\linewidth]{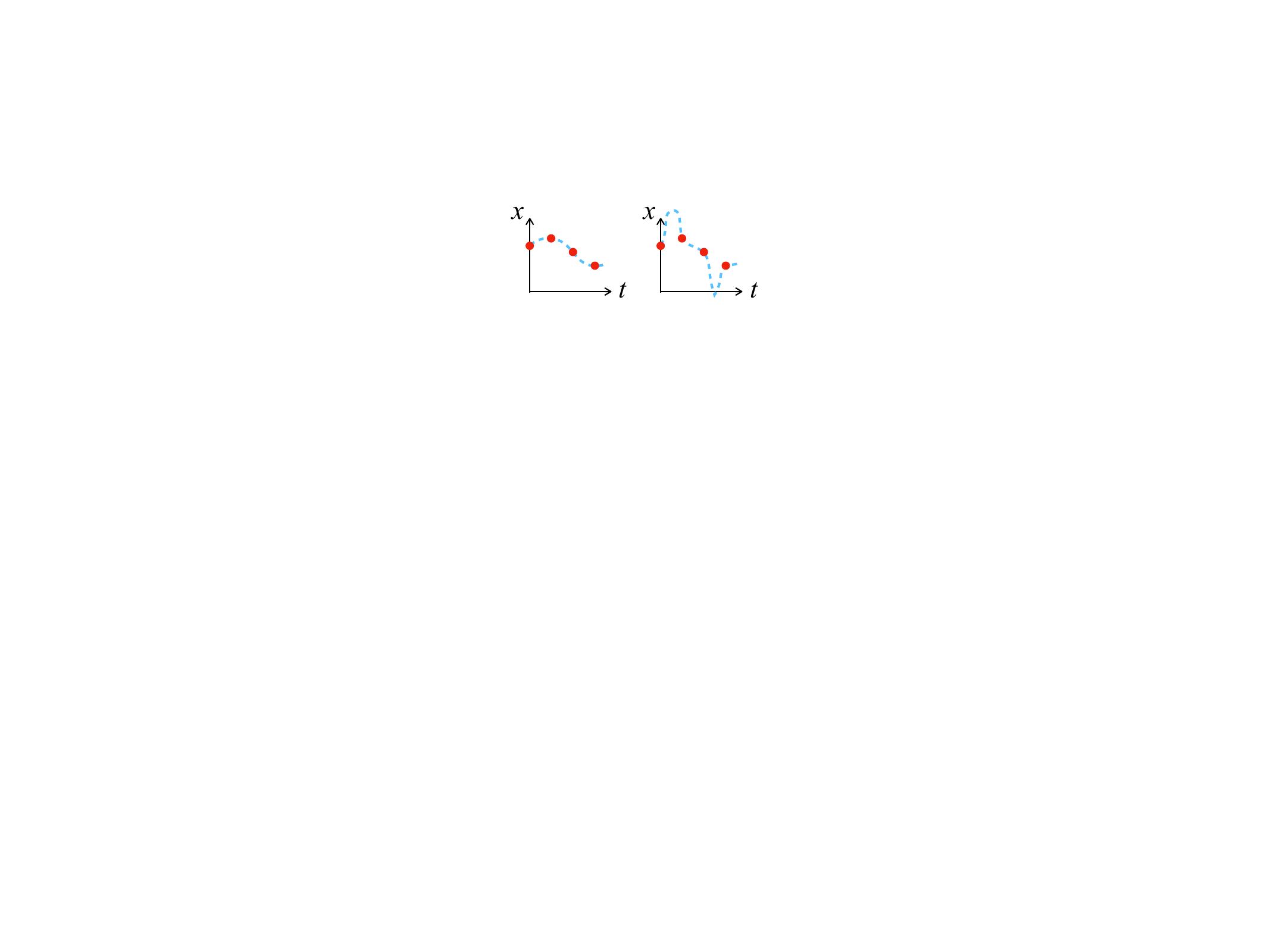}}
\caption{$\word$ and $\sigma$}
\label{figure:samplingUncertainties}
\end{wrapfigure}
\myparagraph{Sampling Uncertainties in Hybrid System Monitoring} There is a methodological difficulty  already  in the high-level schematics in \cref{figure:hybridSysMonitoring}: 
\begin{quote}
 By looking only at a sampled log $\word$, how can a monitor conclude anything about the real behavior $\sigma$? 
\end{quote}
The same log $\word$ can result from different behaviors $\sigma$. \cref{figure:samplingUncertainties} shows 
an example, where we cannot decide if a safety property ``$x$ is always nonnegative'' is satisfied by $\sigma$. In other words, the way we interpolate  the log $\word$ and recover $\sigma$ is totally arbitrary. Thus, we cannot exclude potential violations of any safety specification unless the specification happens to talk only about values at sampling instants.

\begin{wrapfigure}{r}{0pt}
\includegraphics[bb=411 527 612 601,clip,width=.3\linewidth]{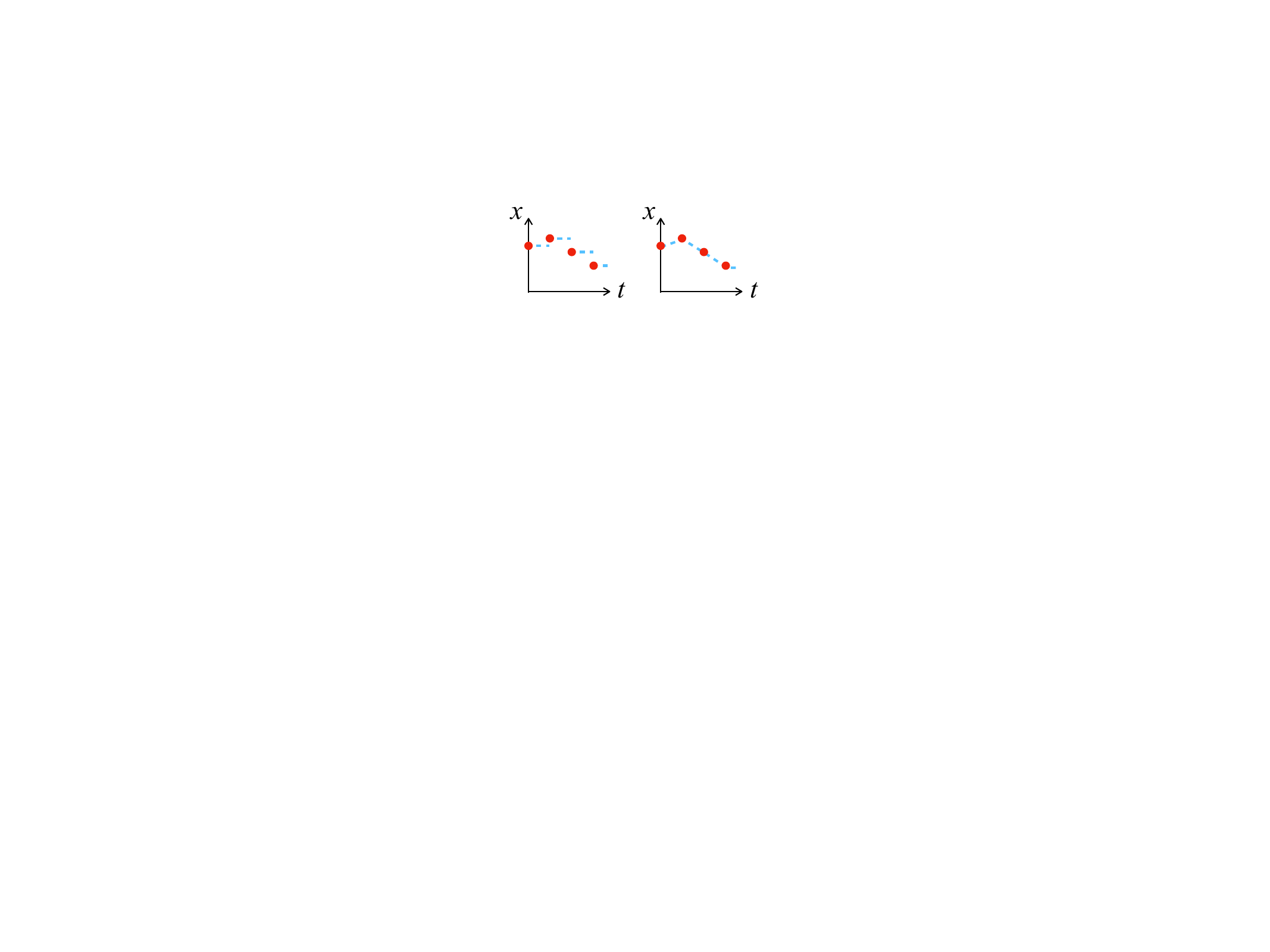}
\end{wrapfigure}
This issue of \emph{sampling uncertainties} is often ignored in the hybrid system monitoring literature. They typically employ heuristic interpolation methods, such as  piecewise-constant and piecewise-linear interpolation (above).
Use of these heuristic interpolation methods is often justified, typically when the sampling rate is large enough.
However, in \emph{networked monitoring} scenarios where a sensor and a monitor are separated by, \eg{} a wireless network, the sampling rate is small, and the interpolation of a log becomes a real issue. Network monitoring is increasingly common in IoT applications, and smaller sampling rates (\ie{} longer sampling intervals) are preferred for energy efficiency. %
\begin{example}
 [automotive platooning]
 \label{example:automotivePlatooningLog}
Consider a situation where two vehicles drive one after the other, with their distance kept small. Such \emph{automotive platooning} attracts interest as a measure for enhanced road capacity as well as for fuel efficiency (by reducing air resistance). 

Assume that the monitoring is conducted on a remote server.
Each vehicle intermittently sends its position to the server via the Internet.
Thus, the remote monitor only has a coarse-grained log.
Concretely, a log $\word$ is given in \cref{figure:automotivePlatooningLog}, by the position $\variablei{1}, \variablei{2}$ (meters) of each of the two vehicles, sampled at time $t=0, 10, 20$ (seconds). 
\begin{figure}[tbp]
\begin{subfigure}[c]{.49\linewidth}
\includegraphics[bb=127 335 474 418,clip,width=\linewidth]{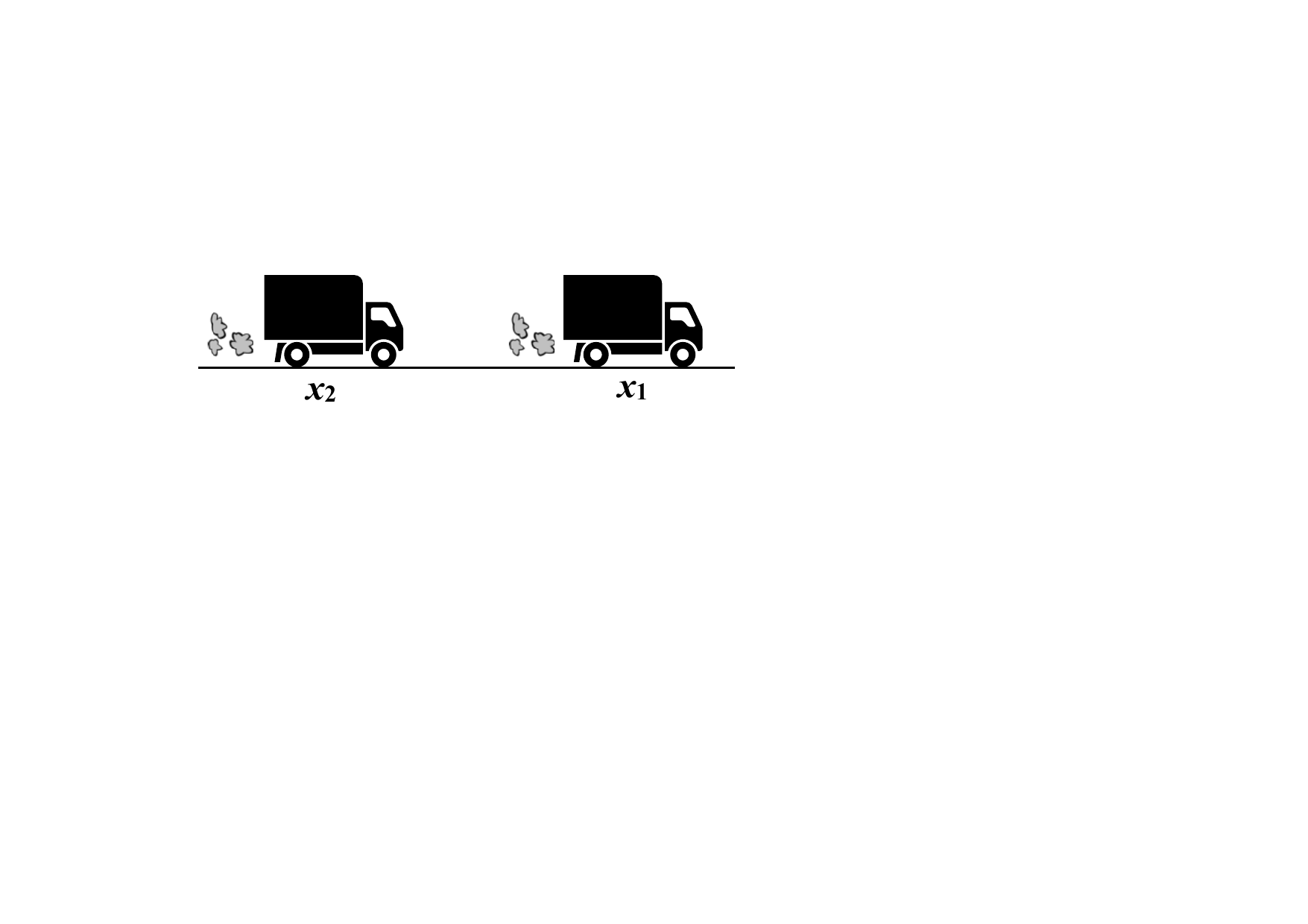}
\caption{Automotive platooning}
\label{figure:platooning}
\end{subfigure}
\begin{subfigure}[c]{.49\linewidth}
 	\centering
		\footnotesize
		
		\scalebox{1}{
		\begin{tikzpicture}[shorten >=1pt, scale=.4, yscale=.6, xscale=0.7, auto]
		\draw[->] (0, 0) --++ (0, 7.0);
		\draw[->] (0, 0) --++ (11.0, 0) node[anchor=north]{$t$};
                 \draw (0, 0) -- (0, -.2)node [below] (t1) {\scriptsize{$0$}};
                 \draw (5.0, 0) -- (5.0, -.2)node [below] (t2) {\scriptsize{$10$}};
                 \draw (10.0, 0) -- (10.0, -.2)node [below] (t3) {\scriptsize{$20$}};

                 \fill[color=red] (0, 1.5) coordinate (x1_1) circle[radius=7pt] node[color=black,left=5pt] {\scriptsize{40}};
                 \node[color=blue%
		 ] at (0, 0.75) (x2_1) {\pgfuseplotmark{triangle*}} node[color=black,left=5pt] {\scriptsize{35}};

                 \fill[color=red] (5.0, 4.5) coordinate (x1_2) circle[radius=7pt];
                 \draw[thick, dashed] (x1_2) -- (-.2, 4.5) node [left=0.5pt] {\scriptsize{$123$}};

                 \node[color=blue%
		 ] at (5.0, 4.0) (x2_2) {\pgfuseplotmark{triangle*}};
                 \draw[thick, dashed] (x2_2) -- (-.2, 4.0);
                 \node [below left=0.2pt] at (-.2, 4.2)  {\scriptsize{$117$}};

                 \fill[color=red] (10.0, 6.5) coordinate (x1_3) circle[radius=7pt];
                 \draw[thick, dashed] (x1_3) -- (-.2, 6.5) node [left=0.5pt] {\scriptsize{$203$}};

                 \node[color=blue%
		 ] at (10.0, 6.05) (x2_3) {\pgfuseplotmark{triangle*}};
                 \draw[thick, dashed] (x2_3) -- (-.2, 6.05);
                 \node [below left=0.2pt] at (-.2, 6.3)  {\scriptsize{$201$}};

                 \draw[thick, dashed] (x1_2) -- (t2);
                 \draw[thick, dashed] (x1_3) -- (t3);

                 \fill[color=red] (5.0, 4.5) coordinate circle[radius=7pt];
                 \fill[color=red] (10.0, 6.5) coordinate circle[radius=7pt];
                 \node[color=blue%
		 ] at (5.0, 4.0) {\pgfuseplotmark{triangle*}};
                 \node[color=blue%
		 ] at (10.0, 6.05) {\pgfuseplotmark{triangle*}};

		\end{tikzpicture}
		}
	 \caption{The log  $\word$. The red circles are $\variablei{1}$ and the blue triangles are
 $\variablei{2}$.}
	 \label{figure:automotivePlatooningLog}
\end{subfigure}
\caption{A leading example: automotive platooning}
\label{figure:leading_example}
\end{figure}

Let us now ask this question: \emph{have the two vehicles touched each other}? Physical contact of the vehicles is not observed in \cref{figure:automotivePlatooningLog}, but we cannot be sure what happened between the sampling instants. 
The piecewise-constant and piecewise-linear interpolation can only answer this question approximately. Moreover, such approximation is not of much help in the current example where sampling intervals are long.
\end{example}

\vspace*{.2em}
\myparagraph{Interpolation Assisted by System Knowledge} 
The following idea underpins the current work.
\begin{quote}
Prior knowledge about a system 
 is a powerful tool to bound sampling uncertainties.
\end{quote}
The latter means excluding some candidates when we recover a behavior $\sigma$ from a word $w$ by interpolation (cf.\ \cref{figure:samplingUncertainties}). 
For the log in \cref{figure:automotivePlatooningLog}, for example, we can say $\variablei{1}$ never reached $10^4$, knowing that the vehicle cannot accelerate that quickly. 

 Putting this idea to actual use requires a careful choice of a knowledge representation formalism. 
\begin{itemize}
 \item For one, it is desired to be \emph{expressive}. The above ``acceleration rate'' argument can be formulated in terms of  Lipschitz constants, but it is nice to also include mode switching---an important feature of hybrid systems. 
 \item For another, a formalism should be \emph{computationally tractable}. Monitoring is a practice-oriented method that often tries to process a large amount of data with limited computing resources (especially in embedded applications). Therefore, inference over knowledge represented in the chosen formalism should better be efficient. 
\end{itemize}
Note that these two concerns---expressivity and computational tractability---are in a trade-off.

\vspace*{.2em}
\noindent\myparagraph{Bounding Models Given by LHAs}
In this paper, we express such prior knowledge about a system using a \emph{linear hybrid automaton (LHA)}~\cite{HPR94}. 
This LHA
is called a \emph{bounding model}, and serves as an overapproximation of the target system.

    LHA is one of the well-known subclasses of hybrid automata (HA); an example is in \cref{figure:LHAsystemmodelExample}.  LHA's notable simplifying feature is that  flow dynamics is restricted to a conjunction of linear (in)equalities over the derivatives $\variabledoti{1},\variabledoti{2},\cdots,  \variabledoti{\VariablesCard}$. Its expressivity is limited---for example, a flow specification $\dot{\bf x}= {\bf A}{\bf x}+{\bf b}$ is not allowed since the variables $\bf x$  occur there. Differential inclusions are allowed, nevertheless (such as $\variabledoti{1} \in [7.5,8.5]$ and $\variabledoti{1}-\variabledoti{2}\le 1$); these are useful in expressing known safety envelopes, as  in \cref{figure:LHAsystemmodelExample}.
Most importantly, analysis of LHAs is tractable, with convex polyhedra providing an efficient means to study the reachability problem. 

\vspace*{.2em}
\myparagraph{Model-Bounded Hybrid System Monitoring} 
Our proposal is a  scheme that we call \emph{model-bounded  monitoring} of hybrid systems.
Its workflow is in \cref{figure:modelAwareMonitoring};
its features are as follows.

\ACMVersion{%
  \begin{figure*}[tbp]
  \begin{minipage}[c]{.65\linewidth}
}
\LongVersion{
\begin{figure}
}
 \centering
 \includegraphics[bb=48 179 788 421,clip,width=\LongVersion{\linewidth}\ACMVersion{.9\linewidth}]{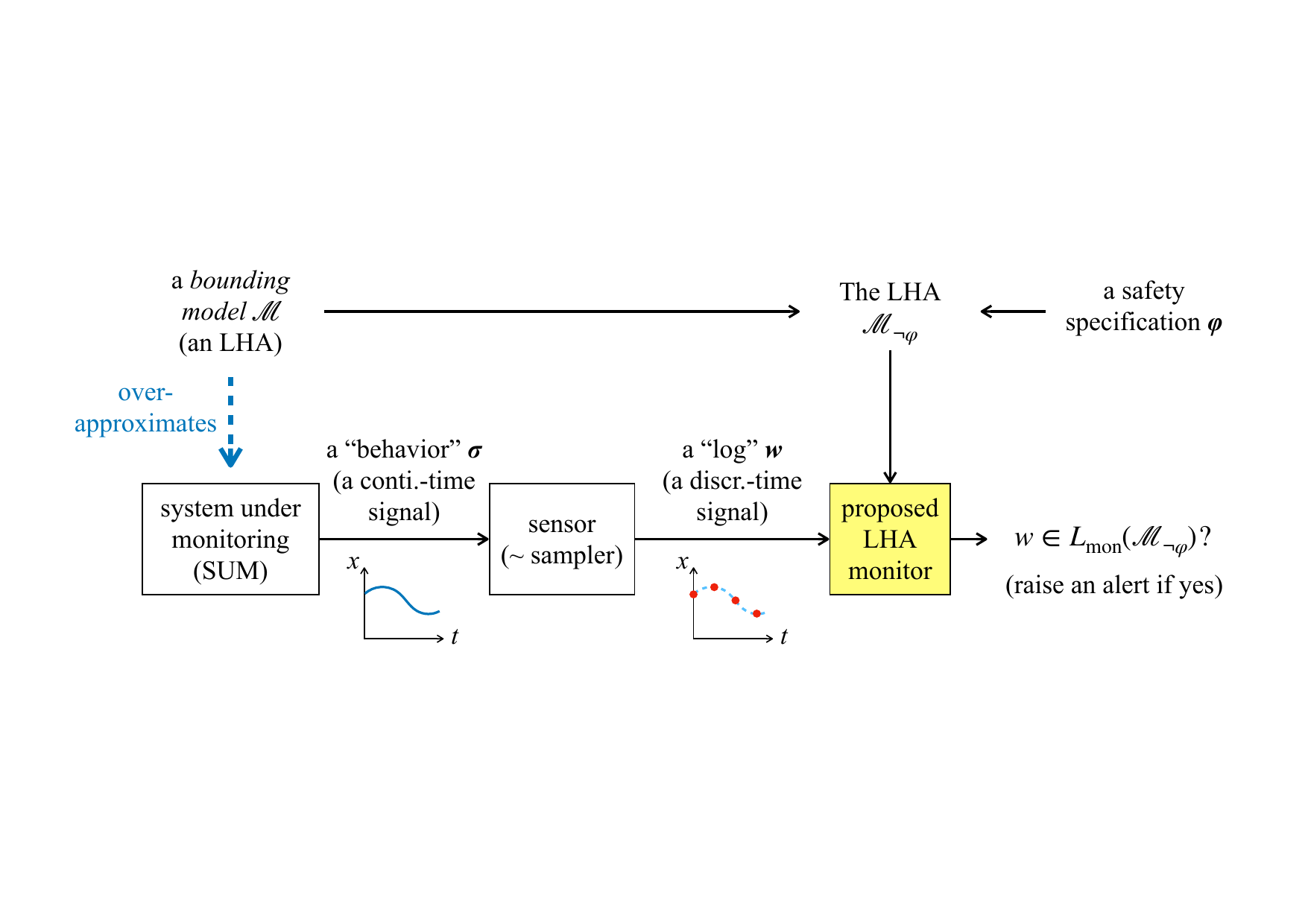}
 \caption{Model-bounded monitoring of hybrid systems}%
 \label{figure:modelAwareMonitoring}
\ACMVersion{
  \end{minipage}
  \hfill
  \begin{minipage}[c]{.33\linewidth}
}
\LongVersion{\end{figure}
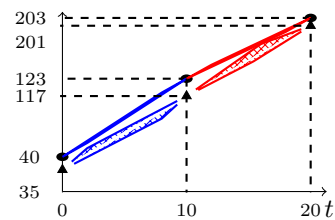
\begin{figure}
}
\centering
          \begin{tikzpicture}[shorten >=1pt, scale=.33, yscale=.7, auto]
		\draw[->] (0, 0) --++ (0, 10.5);
		\draw[->] (0, 0) --++ (10.75, 0) node[anchor=north]{$t$};
                 \draw (0, 0) -- (0, -.2)node [below] (t1) {\scriptsize{$0$}};
                 \draw (5.0, 0) -- (5.0, -.2)node [below] (t2) {\scriptsize{$10$}};
                 \draw (10.0, 0) -- (10.0, -.2)node [below] (t3) {\scriptsize{$20$}};

                 \fill[color=black] (0, 2.0) coordinate (x1_1) circle[radius=7pt] node[color=black,left=5pt] {\scriptsize{40}};
                 \node[color=black] at (0, 1.25) (x2_1) {\pgfuseplotmark{triangle*}} node[color=black,left=5pt] {\scriptsize{35}};

                 \fill[color=black] (5.0, 6.5) coordinate (x1_2) circle[radius=7pt];
                 \draw[thick, dashed] (x1_2) -- (-.2, 6.5) node [left=0.5pt] {\scriptsize{$123$}};

                 \draw[thick, blue, pattern=north west lines, pattern color=blue] (x1_1) -- (1.0, 2.8) -- (x1_2) -- (4.0, 5.7) -- (x1_1);

                 \node[color=black] at (5.0, 5.5) (x2_2) {\pgfuseplotmark{triangle*}};
                 \draw[thick, dashed] (x2_2) -- (-.2, 5.5) node [left=0.5pt] {\scriptsize{$117$}};

                 \draw[thick, blue, pattern=crosshatch, pattern color=blue] (x2_1) -- (1.0, 2.5) -- (x2_2) -- (4.0, 4.2) -- (x2_1);

                 \fill[color=black] (10.0, 10.0) coordinate (x1_3) circle[radius=7pt];
                 \draw[thick, dashed] (x1_3) -- (-.2, 10.0) node [left=0.5pt] {\scriptsize{$203$}};
                 \draw[thick, red, pattern=north west lines, pattern color=red] (x1_2) -- (6.0, 7.3) -- (x1_3) -- (9.0, 9.2) -- (x1_2);

                 \node[color=black] at (10.0, 9.55) (x2_3) {\pgfuseplotmark{triangle*}};
                 \draw[thick, dashed] (x2_3) -- (-.2, 9.55) node [below left=0.5pt] {\scriptsize{$201$}};
                 \draw[thick, red, pattern=crosshatch, pattern color=red] (x2_2) -- (8.0, 8.5) -- (x2_3) -- (7.7, 7.5) -- (x2_2);

                 \draw[thick, dashed] (x1_2) -- (t2);
                 \draw[thick, dashed] (x1_3) -- (t3);

         \end{tikzpicture}
	\caption{Model-bounded monitoring of the log $w$ in \cref{figure:automotivePlatooningLog}. The bounding model $\A$ in \cref{figure:LHAsystemmodelExample} confines  interpolation  to the hatched area. Thus no collision in $t\in [0,10]$; potential collision in $t\in [10, 20]$. 
   }
 \label{figure:examples:acc:signal}
\ACMVersion{\end{minipage}}
\LongVersion{
\end{figure}
\begin{figure*}
}

\centering
\begin{subfigure}[c]{.55\linewidth}
\centering
  \scalebox{.75}{
  \begin{tikzpicture}[shorten >=1pt, scale=2, yscale=1, auto] %
		\node[location] (l0) at (0,0) [align=center]{$\loc_0$\\ $\variabledoti{1} \in [7.5,8.5]$\\ $\variabledoti{2} \in [8.0,9.0]$};
                 \draw[<-] (l0) -- node[above, align=center] {$\variablei{1} = 40$ \\ $\variablei{2} = 35$} ++(-1.5cm,0);
		\node[location] (l1) at (2.25,0) [align=center]{$\loc_1$\\ $\variabledoti{1} \in [11.0,13.0]$\\ $\variabledoti{2} \in [9.0,11.0]$};
		\path[->] 
		(l0) edge [bend left=10] node {$\variablei{1} - \variablei{2} \leq 4$} (l1)
		(l1) edge [bend left=10] node {$\variablei{1} - \variablei{2} \geq 4$} (l0)
	;
  \end{tikzpicture}
  }
 \caption{A bounding model $\A$ for the  platooning example, expressed as an LHA}
 \label{figure:LHAsystemmodelExample}
\end{subfigure}
\hfill
\begin{subfigure}[c]{.4\linewidth}
\centering
 \scalebox{.6}{
 \begin{tikzpicture}[shorten >=1pt, scale=2, yscale=1.3, auto] %
  \node[location] (l0) at (0,0) [align=center]{$\loc_0$\\ $\variabledoti{1} \in [7.5,8.5]$\\ $\variabledoti{2} \in [8.0,9.0]$};
  \draw[<-] (l0) -- node[above, align=center] {$\variablei{1} = 40$ \\ $\variablei{2} = 35$} ++(-1.5cm,0);
  \node[location] (l1) at (2.5,0) [align=center]{$\loc_1$\\ $\variabledoti{1} \in [11.0,13.0]$\\ $\variabledoti{2} \in [9.0,11.0]$};
  \node[location,accepting] (l0bad) at (0,-1.40) [align=center] {$\loc_2$\\ $\variabledoti{1} \in [7.5,8.5]$\\ $\variabledoti{2} \in [8.0,9.0]$};
  \node[location,accepting] (l1bad) at (2.5,-1.40) [align=center] {$\loc_3$\\ $\variabledoti{1} \in [11.0,13.0]$\\ $\variabledoti{2} \in [9.0,11.0]$};

  \path[->] 
  (l0) edge [bend left=10] node {$\variablei{1} - \variablei{2} \leq 4$} (l1)
  (l1) edge [bend left=10] node {$\variablei{1} - \variablei{2} \geq 4$} (l0)
  (l0bad) edge [bend left=10] node {$\variablei{1} - \variablei{2} \leq 4$} (l1bad)
  (l1bad) edge [bend left=10] node {$\variablei{1} - \variablei{2} \geq 4$} (l0bad)
  (l0) edge [] node {$\variablei{1} - \variablei{2} \leq 0$} (l0bad)
  (l1) edge [] node {$\variablei{1} - \variablei{2} \leq 0$} (l1bad)
  ;
 \end{tikzpicture}}
 \caption{The LHA $\A_{\neg\varphi}$ for 
 $\varphi=(x_{1}-x_{2}>0)$}
 \label{figure:collision:product}
\end{subfigure}
\caption{LHAs for the automotive platooning example}
\label{figure:LHAExamplesOriginalAndProduct}
\end{figure*}
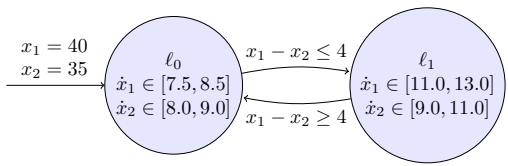
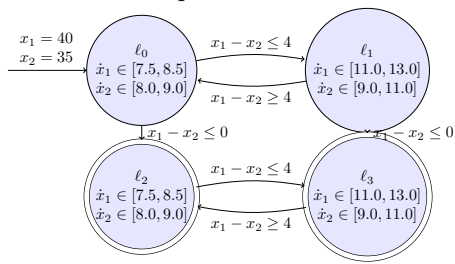

\begin{enumerate}
 \item We use our prior knowledge about the SUM in order to reduce sampling uncertainties. The knowledge is expressed by an LHA;\@ it is called a \emph{bounding model}.
 \item 
We restrict to a safety specification $\varphi$ given by a conjunction of linear (in)equalities.%
\footnote{This restriction is for the ease of presentation.  Extension to LTL specifications should not be hard: an LTL formula can be translated to an automaton; and it can then be combined with a bounding model $\mathcal{M}$. This is future work.}
 We interpret $\varphi$ globally (``$\sigma(t)$  satisfies $\varphi$ at any time $t$'').  
We obtain an LHA $\mathcal{M}_{\lnot \varphi}$ by taking the synchronized product of the bounding model $\mathcal{M}$ and the automaton to monitor the violation of $\varphi$. 
 \item We introduce the notion of \emph{monitored language} $\Lmon$ of an LHA. Roughly speaking, it is  the set of ``logs which have a corresponding signal accepted by the LHA.''
The notion differs from known  language notions for LHA, in that mode switches in an LHA need not be visible in a log (modes may  change between sampling instants). 
 \item\label{item:featureCorrectness} We show the following  meta-level correctness result: $w\in \Lmon(\mathcal{M}_{\lnot\varphi})$ if and only if there exists a continuous-time signal $\sigma$ such that
       \begin{enumerate}
	\item\label{item:featureCorrectness:a} $\sigma$ induces $\word$ by sampling, 
	\item\label{item:featureCorrectness:b} $\sigma$ conforms with the bounding model $\mathcal{M}$, and
	\item\label{item:featureCorrectness:c} $\sigma$ violates the safety specification $\varphi$. 
       \end{enumerate}
\end{enumerate}
\noindent Our main technical contribution consists of 
\begin{ienumeration}
\item
 the introduction of the new language notion $\Lmon$, 
\item
 the use of $\Lmon$ in the proposed model-bounded monitoring scheme, and
\item
(partial) algorithms that solve    $\Lmon$ membership. 
\end{ienumeration}
Used in the scheme in \cref{figure:modelAwareMonitoring}, these algorithms check if the given log $\word$ belongs to $\Lmon(\mathcal{M}_{\lnot \varphi})$, whose answer is then used for the safety analysis of the (unknown) actual behavior $\sigma$. The last point is discussed  in the next paragraph about usage scenarios.

 We present two (partial) algorithms: one reduces the $\Lmon$ membership problem to the reachability problem of LHAs, translating a log $\word$ into an LHA. The other is a direct algorithm that relies on polyhedra computation. These algorithms  are necessarily partial since  $\Lmon$ membership  is undecidable (\cref{theorem:undecidability}). However, their positive and negative answers are guaranteed to be correct. Moreover, we observe that the latter direct algorithm terminates in most benchmarks, especially when a bounding model's dimensionality is not too large. 

\begin{example}\label{example:automotivePlatooningMLnotVarphi}
We continue \cref{example:automotivePlatooningLog}. %
For the log $w$ in \cref{figure:automotivePlatooningLog}, the bounding model $\A$ in \cref{figure:LHAsystemmodelExample} confines potential interpolation between the samples to the hatched areas in \cref{figure:examples:acc:signal}. The two areas are separate in $t\in[0,10]$, which means the two cars were safe in the period. For $t\in[10,20]$, the two areas overlap, suggesting potential collision. 

The above analysis is automated by our automata-theoretic framework in 
 \cref{figure:modelAwareMonitoring}. We shall sketch its workflow.

Let $\varphi$ be the safety specification  $x_{1}-x_{2}>0$ (``no physical contact'')\footnote{For simplicity, we modeled the cars as points. It is straightforward to use a more realistic model, \eg{} a car model by a rectangle.}. The formal construction of $\A_{\lnot\varphi}$ (\cref{def:Mlnotvarphi}) yields the LHA in \cref{figure:collision:product}. In $\A_{\lnot\varphi}$,  the original LHA $\A$ (\cref{figure:LHAsystemmodelExample}) is duplicated, and  once $\varphi$ is violated, the execution can move from the first copy (the top two states in \cref{figure:collision:product}) to the second (the bottom states). The bottom states are accepting---they detect violation of $\varphi$. 

Now we use one of our algorithms to solve the membership problem, \ie{} if the log $w$ belongs to $\Lmon(\A_{\lnot\varphi})$. Solving this membership problem amounts to computing the hatched areas in \cref{figure:examples:acc:signal}---it is done relying on polyhedra computation---and checking if the specification $\varphi$ is violated.

\end{example}

\vspace*{.2em}
\myparagraph{Usage Scenarios}
The 
 scheme in \cref{figure:modelAwareMonitoring} is used as follows. As a basic prerequisite, we assume that the bounding model $\mathcal{M}$ \emph{overapproximates} the  SUM: for each continuous-time signal $\sigma$, 
\begin{quote}
 \textbf{(soundness of a bounding model)}\; 

 $\sigma$ is a behavior of the SUM $\Longrightarrow$  $\sigma$ is a run of~$\mathcal{M}$.
\end{quote}
 We do not require the other implication. Due to the limited expressivity of LHAs (that is the price for computational tractability), $\mathcal{M}$ would not tightly describe  the SUM.\@

 Assume first that our monitor did \emph{not} raise an alert (\ie{} $\word\not\in \Lmon(\A_{\lnot\varphi})$). Let  $\sigma_{0}$ be the (unknown) actual behavior of the SUM that is behind the log $\word$.
By the feature~\ref{item:featureCorrectness} of the scheme, we conclude that  $\sigma_{0}$ was safe.
Indeed,  $\sigma_{0}$  satisfies \cref{item:featureCorrectness:a} by definition.
It comes from the SUM, and thus by the  soundness assumption, $\sigma_{0}$ satisfies \cref{item:featureCorrectness:b}.
Hence \cref{item:featureCorrectness:c} must fail. 

Let us turn to the case where our monitor \emph{did} raise an alert ($w\in \Lmon(\mathcal{M}_{\lnot\varphi})$). This can be a false alarm. For one, the existence of unsafe $\sigma$ (as in the feature~\ref{item:featureCorrectness}) does not imply that the actual behavior $\sigma_{0}$ was unsafe.
For another, \cref{item:featureCorrectness:b} does not guarantee that $\sigma$ is indeed a possible behavior of the SUM, since we only assume \emph{soundness} of the bounding model. Nevertheless, a positive answer of our monitor comes with a reachability witness (a trace) in $\mathcal{M}_{\lnot\varphi}$, which serves as a useful clue for further examination. 

Summarizing, our monitor's alert can be false, while the absence of an alert proves safety.  We can thus say our model-bounded monitoring scheme is \emph{sound}.

\vspace*{.2em}
\myparagraph{Bounding Models}
We note that the roles of bounding models are different from  common roles played by  \emph{system models}. A system model aims to describe the system's behaviors in a \emph{sound} and \emph{complete} manner. In contrast, bounding models focus on overapproximation, trading completeness for computational tractability that is needed in monitoring applications. 
 The \emph{overapproximating}  nature of a bounding model  is less of a problem in monitoring, compared to other exhaustive applications such as model checking. In the latter, approximation errors accumulate over time, leading to increasingly loose overapproximation. In contrast, in our usage, a bounding model is used to interpolate between samples (\cref{figure:examples:acc:signal}). Here overapproximation errors get reset to zero by new samples.
This observation also leads that even if the bounding model is loose, it can be useful if the sampling interval is short.

Because we assume that we have an overapproximation of the actual system, we can indeed formally verify the safety of the system by the reachability analysis.
However, due to the overapproximation, the given LHA usually contain unsafe behaviors, \ie{} the unsafe locations are reachable in the LHA.\@ Most of the benchmarks in our experiments are certainly unsafe.
In contrast, in model-bounded monitoring, even if the unsafe locations are reachable in the LHA, the monitored behavior can be safe, \ie{} the unsafe locations are unreachable from the current sample. Therefore, monitoring is still useful even if we have a model overapproximating the actual system.

Bounding models can arise in different ways, including:
\begin{itemize}
 \item \textbf{(Adding margins to a system model)}
       If  a system model is given as an LHA,  we can use it as a bounding model. A more realistic scenario is to add some margins to address potential perception and actuation errors. LHAs' feature that they allow differential \emph{inclusions} is particularly useful here. An example is in \cref{figure:addMargins}, where perception and actuation  uncertainties are addressed by the additional margins in the transition guards and flow dynamics, respectively.
 \item \textbf{(LHA approximation of a system model)}
       LHA is one of the subclasses of HA for which reachability is attackable (it is hopeless for general HA). Consequently,  tools have been proposed for analyzing LHA, including \phaverlite{}~\cite{BZ19} and its predecessor \phaver{}~\cite{Frehse2008}.
       Moreover, for their application, overapproximation of other dynamics by LHAs has been studied and tool-supported.
       See \eg{}~\cite[Section~3.2]{Frehse2008}. These  techniques can be used to obtain an LHA bounding model  from a more complex model. 

	\item \textbf{(From a third-party vendor)}
	HA is a well-accepted formalism in academia and industry.  It is conceivable that a system vendor provides an LHA as the system's ``safety specification''.
	It serves as a bounding model. 
\end{itemize}

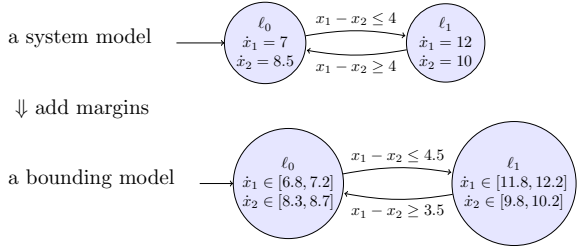
\begin{figure}[tbp]
\centering
               \scalebox{.8}{a system model}
              \raisebox{-.5\height}{		
               \scalebox{.6}{
		\begin{tikzpicture}[shorten >=1pt, scale=2, yscale=1, auto] %
		\node[location] (l0) at (0,0) [align=center]{$\loc_0$\\ $\variabledoti{1} =7$\\ $\variabledoti{2} =8.5$};
                 \draw[<-] (l0) -- node[above, align=center] {}
                 ++(-1cm,0);
		\node[location] (l1) at (2,0) [align=center]{$\loc_1$\\ $\variabledoti{1} =12$\\ $\variabledoti{2} =10$};
		\path[->] 
		(l0) edge [bend left=10] node {$\variablei{1} - \variablei{2} \leq 4$} (l1)
		(l1) edge [bend left=10] node {$\variablei{1} - \variablei{2} \geq 4$} (l0)
	;
		\end{tikzpicture}
		}}

        \scalebox{.8}{ $\Downarrow$ add margins\phantom{add margins}\phantom{add margins}\phantom{add margins}}

               \scalebox{.8}{a bounding model}
        \vspace{.3em}
              \raisebox{-.5\height}{		
               \scalebox{.6}{
		\begin{tikzpicture}[shorten >=1pt, scale=2, yscale=1, auto] %
		\node[location] (l0) at (0,0) [align=center]{$\loc_0$\\ $\variabledoti{1} \in [6.8, 7.2]$\\ $\variabledoti{2} \in[8.3, 8.7]$};
                 \draw[<-] (l0) -- node[above, align=center] {}
                 ++(-1cm,0);
		\node[location] (l1) at (2.5,0) [align=center]{$\loc_1$\\ $\variabledoti{1} \in[11.8,12.2]$\\ $\variabledoti{2} \in[9.8, 10.2]$};
		\path[->] 
		(l0) edge [bend left=10] node {$\variablei{1} - \variablei{2} \leq 4.5$} (l1)
		(l1) edge [bend left=10] node {$\variablei{1} - \variablei{2} \geq 3.5$} (l0)
	;
		\end{tikzpicture}
		}}	

        \caption{\emph{Adding margins} to obtain bounding models. The top model gets loosened by perception uncertainties (margin $0.5$) and actuation uncertainties (margin 0.2)}
	\label{figure:addMargins}
\end{figure}%

\vspace*{.2em}
\myparagraph{Contributions} We summarize our main contributions.
\begin{itemize}
 \item We tackle the issue of sampling uncertainties in hybrid system monitoring, proposing the  \emph{model-bounded monitoring} scheme (\cref{figure:modelAwareMonitoring}) as a countermeasure. The scheme uses LHAs as \emph{bounding models}. 
 \item We introduce the novel technical notion of \emph{monitored language} $\Lmon$ for LHAs. In $\Lmon$, unlike in other language notions, input words and mode switches do not necessarily synchronize.  We show that  $\Lmon$ membership is undecidable, yet we introduce two partial algorithms. 
 \item We establish soundness of our model-bounded monitoring scheme: absence of an alert guarantees that every possible behavior $\sigma$ behind the log $w$ is safe.
 \item The practical relevance and algorithmic scalability is demonstrated by experiments, using benchmarks that are mainly taken from automotive platooning scenarios. 
\end{itemize}

\vspace*{.2em}
\myparagraph{Related Work}\label{paragraph:related}
In 
the 
IoT
 applications~\cite{GBMP13},  energy efficiency is of paramount importance. Energy efficiency demands longer sampling and communication intervals; the current work presents an automatic and sound method to  mitigate the  uncertainties that result from those longer intervals. 

In the context of quality assurance of CPSs, monitoring of \emph{digital} (\ie{} discrete-valued) or \emph{analog} (\ie{} continuous-valued) signals takes an important role.
There have been many works on signal monitoring using various logic, \eg{} \emph{signal temporal logic (STL)}~\cite{MN04,FP09}, \emph{timed regular expressions (TREs)}~\cite{UFAM14}, \emph{timed automata}~\cite{BFNMA18}, or \emph{timed symbolic weighted automata (TSWAs)}~\cite{Waga19}.
However, in most of the existing works, interpolation of the sampled signals is limited to only piecewise-constant or piecewise-linear.

There are a few works on monitoring utilizing  system models.
In~\cite{ZLD12}, a set of predictive words are generated through a static analysis of the monitored program and monitored against LTL (linear temporal logic).
In~\cite{PJTFMP17}, the system model and the monitored property are given as timed automata to construct a monitor predicting the satisfaction (or violation) of the monitored property.
In~\cite{BGF18}, the stochastic system model is trained as a hidden Markov model and utilized for monitoring against a deterministic finite automaton.\@
In~\cite{QD20}, statistical models on the monitored signals are utilized to predict the future signals for robust monitoring against~STL.

Overall, prediction (\ie{} \emph{extrapolation}) of the future behaviors is the main purpose of the existing model-based monitoring works~\cite{ZLD12,PJTFMP17,BGF18,QD20} to the best of our knowledge.
Our approach utilizes system knowledge for \emph{interpolation} of the infrequently sampled signals. %
One of the exceptions is monitoring of timed distributed traces using symbolic overapproximation by an SMT solver, where the system model is given as a rectangular hybrid automaton~\cite{DJZM12}.
Although their problem setting is close to ours, their SMT-based algorithm is not suitable to online monitoring. 
In contrast, our automata-based approach makes an application to online monitoring straightforward, especially for the procedure in \cref{section:algorithm2}.
Another exception is an algorithm to give a bound of continuous-time robust semantics for metric temporal logic formulas from discrete-time samples~\cite[Section 4]{FP09}.
Although their problem setting is also close to ours, they use only one function to bound the signal behavior while we use LHAs.
Thanks to the expressivity of LHAs, our approach may bound the signal behavior more finely, which may cause less false alarms.
In monitoring, a system model can also be used to monitor the compliance of the monitored behavior with the system model to validate the verification result obtained from the %
model.
In~\cite{MP16,MP18}, a monitor is constructed from a system model in differential dynamic logic~\cite{DBLP:conf/lics/Platzer12a}.
The largest difference between their method and ours is the role of the system model:
they do not assume the compliance of the actual behavior with the system model but they check the compliance at runtime;
we assume the model compliance and utilize it for interpolation of the infrequently sampled signals.

There are also several papers that use an approximate assumption of the actual system behavior in the safety assurance of CPS.\@
For example, \cite{PC07} utilizes Lipschitz constants to allow a numeric method for model checking.

There are existing language notions for LHAs~\cite{AKV98}.
These are different from the notion $\Lmon$ that we introduce; hence the results in~\cite{AKV98} do not subsume ours. The key difference is if the input word and mode switches must synchronize; see \cref{example:timed_quantitative_word} and the preceding discussions.
A recent line of work is that of timed pattern matching~\cite{UFAM14,WHS17,BFNMA18,UM18,AHW18,WA19,WAH19,Waga19}, that takes as input a log and a specification, and decides \emph{where} in the log the specification is satisfied or violated.
Through the construction of the \emph{matching automata}~\cite{BFNMA18,Waga19}, our monitoring problem can also decide where in the log the specification is satisfied or violated.
Thus, our work can also be seen as an extension of timed pattern matching concerning the system models.

\vspace*{.2em}
\myparagraph{About this manuscript}
This manuscript is an extension of the article with the same title published in the proceedings of ICCPS 2021~\cite{WAH21}.
We significantly improved the content, by notably
	discussing model-bounded monitoring with an imprecise model,
        adding missing proofs and a new lemma,
        extending the related work and perspectives,
	adding new examples and figures for illustration, and
	adding new benchmarks \UnsafeNAV{} and \SharedGasBurner{} to evaluate the precision.

\vspace*{.2em}
\myparagraph{Outline}
We recall LHAs in \cref{section:preliminaries}.
After we introduce monitored languages $\Lmon$ for LHAs in \cref{section:monitoredLang}, model-bounded monitoring is formalized in \cref{section:monitoring} and we prove its correctness.
We show that $\Lmon$ membership is undecidable in \cref{sec:symbInt}.
We present two partial algorithms:
\begin{ienumeration}%
	\item the one in \cref{section:algorithm1} 
     relies on an existing model checker \phaverlite{} via suitable translation and
	\item the one in \cref{section:algorithm2} is a dedicated algorithm.
\end{ienumeration}%
We perform extensive experiments in \cref{section:experiments} and conclude in \cref{section:conclusion,section:perspectives}. 

\section{Preliminaries: Linear Hybrid Automata}\label{section:preliminaries}
Let $\interval(\setQ)$ be the set of closed intervals on~$\setQ$, \ie{} of the form $[a,b]$, where $a,b\in \setQ$ and $a \leq b$.
For a partial function $f\colon X \partfun Y$,  the domain $\{x \in X \mid f(x) \text{ is defined}\}$ is denoted by~$\domain{f}$.
We fix a set~$\Variables = \{ \variablei{1}, \dots, \variablei{\VariablesCard} \} $ of real-valued \emph{variables}.
A \emph{(variable) }valuation is a function
$\varval : \Variables \rightarrow \R$.
When $\Variables$ is clear from the context,  a valuation $\varval$ is expressed by the tuple $(\varval(\variablei{1}),\varval(\variablei{2}),\dots,\varval(\variablei{\VariablesCard}))$.
Given $\updates\colon \Variables \partfun \interval(\setQ)$, we define the \emph{update} of a valuation~$\varval$, written $\reset{\varval}{\updates}$, as follows: $\update{\varval}{\updates}(\variable) \in \updates(\variable)$ if $\variable \in \domain{\updates}$, and $\update{\varval}{\updates}(\variable)=\varval(\variable)$ otherwise.

We assume ${\compOp} \in \{\leq, =, \geq\}$.
Let $\GuardP$ be the set of \emph{linear systems} over $\Variables$ defined by a finite conjunction of inequalities of the form $a_1\variablei{1} + a_2\variablei{2} + \dots + a_{\VariablesCard}\variablei{\VariablesCard} \compOp d$, with $d, a_1, a_2, \dots, a_{\VariablesCard}\in \setZ$. 
We let $\top = \bigwedge \emptyset$, and let $\bot$ be the contradiction.
The set $\GuardD$ is defined similarly; it consists of constraints over derivatives $\variabledoti{1},\dotsc,\variabledoti{\VariablesCard}$. 
\subsection{Syntax}
\begin{definition}
 [linear hybrid automata (LHA)~\cite{HPR94}]
 \label{def:LHA}
 An \emph{LHA} is a tuple
 $\A = (\Loc, \LocFinal, \Variables, \VarValInit, \Flow, \invariant, \Edges)$, where:
 \begin{enumerate}
  \item $\Loc$ is a finite set of locations,
  \item $\LocFinal \subseteq \Loc$ is the set of accepting locations,
  \item $\Variables$ is a finite set of variables,
  \item $\VarValInit\colon \Loc \to \GuardP$ is the initial variable valuation for each location, %
  \item $\Flow\colon \Loc \to \GuardD$ is the \emph{flow}, assigning to each $\loc \in \Loc$ the set of the derivatives (``rates'') %
        $\{(\variabledoti{1},\variabledoti{2},\cdots,  \variabledoti{\VariablesCard}) \mid (\variabledoti{1},\variabledoti{2},\cdots, \variabledoti{\VariablesCard}) \models \Flow(\loc)\}$,
  \item $\invariant \colon \Loc \to \GuardP$ is the \emph{invariant} for each location,
  \item $\Edges$ is a finite set of \emph{edges} $\edge = (\loc, \guard, \updates, \loc')$
        where
        \begin{ienumeration}
         \item $\loc, \loc'\in \Loc$ are the source and target locations,
         \item $\guard \in \GuardP$ is the guard, and
         \item $\updates\colon \Variables \partfun \interval(\setQ)$ is the update function.
        \end{ienumeration}
 \end{enumerate}
\end{definition}

\noindent
Note that \cref{def:LHA} allows for non-deterministic initial locations.  A location~$\loc$ that cannot be initial is  such that $\VarValInit(\loc) = \bot$. %

LHAs can be \emph{composed} using synchronized product (see \eg{}~\cite[Definition~4]{Raskin05}) in a way similar to finite-state automata.
The synchronized product of two LHAs is known to be an LHA~\cite{HPR94}.
Of importance is that, in a composed location, the global flow constraint is the \emph{intersection} of the local component flow constraints.

\begin{example}
	Consider the LHA in \cref{figure:LHAsystemmodelExample}, where $\VarValInit$ is such that $\VarValInit(\loc_0) = \{ \variablei{1} = 40 \land \variablei{2} = 35 \}$ and $\VarValInit(\loc_1) = \bot$.
	This LHA, giving a bounding model for an automotive platooning system (\cref{example:automotivePlatooningLog}), contains 2 locations and 2 variables $\Variables = \{ \variablei{1} , \variablei{2} \}$.
	This LHA features no invariant (\ie{} all invariants are $\top$).
	Note that this LHA fits into a subclass in which the derivatives for the flows are all in bounded, constant intervals.

       In this LHA (\cref{figure:LHAsystemmodelExample}), 
		$\variablei{1}$ (resp.~$\variablei{2}$) denotes the position 
of Vehicle~1 (resp.~2), initially 40 and~35 respectively.
		In~$\loc_0$, both vehicles run roughly at the same speed, although Vehicle~2 can be slightly faster (\eg{} due to smaller air resistance, as it follows Vehicle~1).
		When the distance between both vehicles becomes less than~4, they enter mode~$\loc_1$, where Vehicle~1 drives faster than in $\loc_0$.

In the LHA $\A_{\lnot\varphi}$ in \cref{figure:collision:product}, the vertical edges are enabled once the specification $x_{1}-x_{2}>0$ is violated, that is, once the two vehicles touch each other. 
\end{example}
\begin{example}
	Consider the LHA in \cref{figure:translation:word}, where
	\begin{ienumeration}
		\item $\Variables = \{ \variablei{1} , \variablei{2}, \clockabs, \clockrel \}$,
		\item $\VarValInit$ is such that $\VarValInit(\loc_0) = \{ \variablei{1} = 40 \land \variablei{2} = 35 \land \clockabs = 0 \land \clockrel = 0 \}$ and $\VarValInit(\wloc_i) = \bot$ for $1 \leq i \leq 3$, and
		\item $\variabledoti{1} = \variabledoti{2} = \dot{\clockabs} = \dot{\clockrel} = 1$ in all locations (not depicted in \cref{figure:translation:word}).
	\end{ienumeration}%
	We depict invariants using a box under the location.
\end{example}

\LongVersionActuallyNot{
\subsection{Semantics}

}

We recall the standard  semantics of LHAs, called concrete semantics. It is formulated as a timed transition system~\cite{HMP91}. 

\begin{definition}[concrete semantics of an LHA]\label{def:concreteSem}
 Given an LHA $\A = (\Loc, \LocFinal, \Variables, \VarValInit, \Flow, \invariant, \Edges)$, %
 the \emph{concrete semantics} of $\A$ is given by the timed transition system (TTS) $(\States, \Sinit, \flecheRel)$, with
 \begin{itemize}
  \item $\States = \big\{ (\loc, \varval) \in \Loc \times \R^\VariablesCard \mid \varval \models \invariant(\loc) \big\}$,%
  \item $\Sinit = \big\{(\loc,\varval) \mid \loc \in \Loc, \varval \in \VarValInit(\loc) \big\} \cap \States$,
  \item  $\flecheRel$ consists of the discrete and continuous transition relations:
         \begin{enumerate}
          \item discrete transitions: $(\loc, \varval) \longueflecheRel{\edge} (\loc', \varval')$, %
                if
                there exists
                $\edge = (\loc, \guard, \updates, \loc') \in \Edges$
                such that
                $\varval \models \guard$,
                $\varval' \in \update{\varval}{\updates}$.
          \item continuous transitions: $(\loc, \varval) \longueflecheRel{d,\flow} (\loc, \varval')$, with
                the delay $d \in \Rgeqzero$ and
                the flow $\flow\colon \Variables \to \R$ satisfying, 
                $\flow \models \Flow(\loc)$,
                $\forall d' \in [0, d], (\loc, \varval+d' \flow ) \in \States$, and
                $\varval' = \varval+d \flow$, where 
                $\varval+d' \flow$ is the valuation satisfying $\valuate{\variable}{(\varval+d' \flow)} = \valuate{\variable}{\varval} +d' \valuate{\variable}{\flow}$ for any $\variable \in \Variables$.
         \end{enumerate}
 \end{itemize}
\end{definition}

\begin{definition}[(accepting) run]\label{def:runAndAcceptingRun}
 Given an LHA~$\A$ with concrete semantics $(\States, \Sinit, \flecheRel)$, we refer to the states of~$\States$ as the \emph{concrete states} of~$\A$.
 A \emph{run} of~$\A$ is an alternating sequence $\run = s_0, \flecheRel_1, s_1, \dots, \flecheRel_n, s_n$ of concrete states $s_i \in \States$ and transitions $\flecheRel_i \in \flecheRel$ satisfying  $\sinit \in \Sinit$ and 
 $s_0 \to_1 s_1 \dots \to_n s_n$.
 For a run $\run$, the \emph{duration} $\Duration(\run) \in \Rnn$ is the sum of the delays in $\run$.
 We denote the $i$-th prefix $s_0 \to_1 s_1 \dots \to_i s_i$ of $\run$ by $\run[i]$.
 A run is \emph{accepting} if its last state $(\loc, \varval)$ satisfies $\loc \in \LocFinal$.
\end{definition}

\begin{example}
 \label{example:run}
 Let $\A$ be the LHA in \cref{figure:LHAsystemmodelExample}.
 The sequence 
 \begin{math}
  \run = \bigl(\loc_0, \varval_0\bigr) 
  \longueflecheRel{10, (8.3,8.2)} \bigl(\loc_0, \varval_1\bigr)
  \longueflecheRel{\frac{4}{3}, (7.5,9)} \bigl(\loc_0, \varval_2\bigr)
  \longueflecheRel{\edge_1} \bigl(\loc_1, \varval_2\bigr)
  \longueflecheRel{\frac{2}{3}, (12,9)} \bigl(\loc_1, \varval_3\bigr)
  \longueflecheRel{\edge_2} \bigl(\loc_0, \varval_3\bigr)
  \longueflecheRel{8, (7.75,8.25)} \bigl(\loc_0, \varval_4\bigr)
 \end{math}
 is a run of $\A$, where
 $\varval_0 = (40,35)$,
 $\varval_1 = (123, 117)$,
 $\varval_2 = (133, 129)$,
 $\varval_3 = (141, 135)$,
 $\varval_4 = (203, 201)$, and
 $\edge_1$ and $\edge_2$ are the edges from $\loc_0$ and $\loc_1$, respectively.
\end{example}
\section{Monitored Languages of LHAs}\label{section:monitoredLang}
We introduce another semantics for LHAs besides concrete semantics (\cref{def:concreteSem}); it is called the \emph{monitored language}. The two semantics are used in \cref{figure:modelAwareMonitoring} in the following way: 
\begin{ienumeration}
 \item concrete semantics is (roughly) about whether a continuous-time signal $\signal$ (``behavior'') conforms with the LHA $\A$ and
 \item the monitored language $\Lmon(\A)$ is about whether  a discrete-time signal $\word$ (``log'') conforms with $\A$.
\end{ienumeration}

\begin{definition}
 [timed quantitative words]%
 \label{def:timed_quantitative_word}
 A \emph{timed quantitative word} $\word$ is a sequence $(\wordVal_1, \tau_1), (\wordVal_2, \tau_2), \dots,(\wordVal_m, \tau_m)$ of pairs $(\wordVal_i, \tau_i)$ of a valuation $\wordVal_i\colon \Variables \to \R$ and a timestamp $\tau_i \in \Rnn$ satisfying $\tau_i \leq \tau_{i+1}$ for each $i \in \{1,2,\dots,m-1\}$.

 For a timed quantitative word $\word = (\wordVal_1, \tau_1),\dots, (\wordVal_m, \tau_m)$, we let $|\word| = m$ and 
 for any $i \in \{1,\dots,m\}$, we let $\word[i] = (\wordVal_1, \tau_1),\dots, (\wordVal_i, \tau_i)$.
\end{definition}
\noindent We sometimes refer to pairs $(\wordVal_i, \tau_i)$ as \emph{samples}---these are the  red dots in \cref{figure:modelAwareMonitoring}. %

\begin{definition}[monitored language $\Lmon(\A)$]%
\label{def:monitoredLang}
Let $\run = s_0 \to_1 s_1 \to_2 \dots \to_n s_n$ be a run of an LHA $\A$ (\cref{def:concreteSem}), and
 $\word = (\wordVal_1,\tau_1),(\wordVal_2,\tau_2), \dots, (\wordVal_m,\tau_m) $ be a timed quantitative word.
We say $\word$
is \emph{associated} with $\run$
 if, for each $j \in \{1,2,\dots,m\}$, we have either of the following two. Here $\loc_{i}, \varval_{i}$ are so that  $s_{i} = (\loc_{i}, \varval_{i})$ for each $i \in\{0,1,\dots,n\}$. 
 \begin{enumerate}
 \item\label{def:monitoredLang:1} There exists $i \in \{0,1,2,\dots, n\}$ such that $\Duration(\run[i]) = \tau_j$ and $\wordVal_j = \varval_{i}$; or
 \item There exists $i \in \{0,1,2,\dots, n-1\}$ such that $\Duration(\run[i]) < \tau_j < \Duration(\run[i+1])$ and for any $\variable \in \Variables$, $\wordVal_j(\variable) = \varval_{i}(\variable) + (\tau_j - \Duration(\run[i]))f_{i}(\dot{\variable})$ holds, where ${\flecheRel_i} = {\longueflecheRel{d_i,\flow_i}}$. 
 \end{enumerate}
 Finally, the \emph{monitored language} $\Lg(\A)$ of an LHA $\A$ is the set of timed quantitative words associated with some accepting run of~$\A$.
\end{definition}
In the above definition of association of $\word$  to $\run$, note that the lengths of $\run$ and $\word$ can differ ($n\neq m$). 
Condition~\ref{def:monitoredLang:1} is when a sample in $\word$ happens to be simultaneous with some transition in~$\run$.
This special case is not required to happen at all, for $\word$ to be associated with~$\run$. 

For example, in \cref{figure:examples:acc:signal}, mode switches (\ie{} discrete transitions) in the LHA in \cref{figure:LHAsystemmodelExample} can occur at times other than $t=0,10,20$. 
This is in contrast to the language of hybrid automata in~\cite{AKV98}, where (observable) discrete transitions are always \emph{synchronous} with the word, much like the condition~\ref{def:monitoredLang:1}.

\begin{example}
 \label{example:timed_quantitative_word}
 Let $\A$ be the LHA in \cref{figure:LHAsystemmodelExample},
 $\run$ be the run of $\A$ in \cref{example:run}.
 The timed quantitative word $\word$ in \cref{figure:automotivePlatooningLog}
 is associated with $\run$. %
 We note that the sampling and the discrete transitions are \emph{asynchronous}:
 the sampling is at 0, 10, and 20, 
 while the discrete transitions are at $\frac{34}{3}$ and 12.
 This is in contrast to the \emph{synchronous} language in~\cite{AKV98}:
 the accepted words represent the discrete transitions, \eg{} at $\frac{34}{3}$ and 12.
\end{example}
\section{The Model-Bounded Monitoring Scheme}\label{section:monitoring}
Based on the technical definitions in \cref{section:monitoredLang}, we formally introduce the scheme that we sketched in \cref{figure:modelAwareMonitoring}. (Partial) algorithms for computing if $w\in \Lmon(\mathcal{M}_{\lnot\varphi})$ are introduced in later sections. 
Recall that we focus on safety specifications that are global and linear.
\begin{definition}
 [the LHA $\A_{\lnot\varphi}$]
 \label{def:Mlnotvarphi}
 Let $\A$ be an LHA, and $\varphi\in \GuardP$ (\cref{section:preliminaries}). The LHA $\A_{\lnot\varphi}$ is defined by 
\begin{itemize}
 \item making a copy $\A^{\circ}$ of $\A$,
 \item making each location $\loc^{\circ}$ of $\A^{\circ}$  non-initial, \ie{} $\Init(\loc^{\circ})=\bot$, 
 \item letting $\LocFinal$ consist of all the states $\loc^{\circ}$ of $\A^{\circ}$, and
 \item for each location $\loc\in \A$, creating an edge $(\loc,\neg \varphi, \emptyset,\loc^{\circ})$ from $\loc$ to its copy $\loc^{\circ}$, labeling the edge with the violation of safety specification $\varphi$ as a guard and no update.
\end{itemize}
\end{definition}
\noindent \cref{figure:collision:product} shows an example of  $\A_{\lnot\varphi}$.
In $\A_{\lnot\varphi}$,  having a single accepting sink state for a violation of $\varphi$ is not enough. After detecting violation, we are still obliged to check if the rest of a word $\word$ conforms with the bounding model $\A$. Thus, we maintain a copy of $\A$. 

\begin{lemma}\label{lem:Mlnotvarphi}
The following are equivalent for each sequence $\run$.
\begin{itemize}
 \item Both of the following hold:  
       \begin{ienumeration}
        \item $\run$ is a (non-necessarily accepting) run of  $\A$ and
        \item $\run$ violates $\varphi$ at a certain time instant.
       \end{ienumeration}
 \item There is an accepting run $\run'$ of $\A_{\lnot\varphi}$ such that  \\
 \begin{math}
  \{ \word \mid \text{$\word$ is associated with $\run$}\}
  =
  \{ \word \mid \text{$\word$ is associated with $\run'$}\}
 \end{math}. \qed{}
\end{itemize}
\end{lemma}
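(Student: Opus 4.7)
The plan is to establish the equivalence by exhibiting an explicit bijection-like correspondence between, on the one hand, runs of $\A$ that violate $\varphi$ at some time instant and, on the other hand, accepting runs of $\A_{\lnot\varphi}$, arranged so that the associated timed quantitative words are preserved. The key structural observation is that $\A_{\lnot\varphi}$ consists of two disjoint copies of $\A$: the ``live'' copy (from which the run must start, since only its locations have $\Init \neq \bot$) and the sink copy $\A^{\circ}$ (whose locations are all accepting), connected solely by the guarded edges $(\loc,\neg\varphi,\emptyset,\loc^{\circ})$. Thus any accepting run must leave the live copy exactly once, via such an edge, which fires only when $\neg\varphi$ holds.

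For the forward direction, I would take a run $\run$ of $\A$ together with a time instant $t^{*}\in[0,\Duration(\run)]$ at which $\varphi$ is violated. Either $t^{*}$ coincides with some state $s_i=(\loc_i,\varval_i)$ with $\varval_i\models \neg\varphi$, or it falls strictly inside a continuous transition $s_i\longueflecheRel{d_i,\flow_i} s_{i+1}$. In the latter case I would split this continuous transition into two at time $t^*$, yielding an intermediate state $(\loc_i,\varval^{*})$ with $\varval^{*}\models \neg\varphi$; note this is legitimate because invariants are convex and the flow is unchanged. From $(\loc_i,\varval^{*})$ I insert the discrete edge $(\loc_i,\neg\varphi,\emptyset,\loc_i^{\circ})$ (which has empty update, so the valuation is preserved) and then mirror the remaining suffix of $\run$ inside $\A^{\circ}$, using the fact that $\A^{\circ}$ is a verbatim copy of $\A$ (same flows, invariants, guards, updates). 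The resulting $\run'$ terminates in $\LocFinal$ and is therefore accepting.

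For the backward direction, I would observe that in any accepting $\run'$ of $\A_{\lnot\varphi}$, the final location lies in $\A^{\circ}$, and the only edges entering $\A^{\circ}$ are the copy-crossing ones guarded by $\neg\varphi$, which have no matching edge going back. Hence $\run'$ factors uniquely as a prefix inside $\A$, a single crossing discrete edge at some location $\loc$, and a suffix inside $\A^{\circ}$. Contracting the crossing edge (which does not update any variable and consumes zero time) and re-interpreting the $\A^{\circ}$-suffix as a run in $\A$ yields a run $\run$ of $\A$. The guard $\neg\varphi$ was satisfied at the crossing, certifying violation at that instant.

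The main obstacle, and the heart of what remains to check, is the equality of the associated-word sets. Splitting a continuous transition into two with the same flow $\flow_i$ does not change the piecewise-linear trajectory or its duration, so clause~\ref{def:monitoredLang:1} of \cref{def:monitoredLang} is unaffected and the interpolation clause is satisfied with the same values; the inserted discrete edge has zero duration and empty update, so it does not alter any timestamp, any $\Duration(\run[\cdot])$, or any valuation visible to a sample. Consequently a sample $(\wordVal_j,\tau_j)$ is associated with $\run$ exactly when it is associated with $\run'$, and conversely. Combining this with the bijective correspondence above yields the claimed equality of the two sets of associated words.
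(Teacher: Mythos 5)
Your proposal is correct and follows essentially the same route as the paper: the paper dispatches this lemma as ``easy by definition,'' noting only that $\run$ and $\run'$ can differ solely in the locations they visit and that both violation of $\varphi$ and association of a word $\word$ are insensitive to locations---which is precisely the content of your explicit split-at-$t^{*}$, cross-via-the-$\neg\varphi$-edge, mirror-in-$\A^{\circ}$ construction. Your version simply fills in the details (legitimacy of splitting a continuous transition, the zero-duration empty-update crossing edge, and the check that neither operation changes the associated-word set).
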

The proof is easy by definition. The runs $\run$ and $\run'$ can differ only in the locations they visit---in an LHA, an enabled transition is not always taken. Note, however, that violation of $\varphi$ and $\word$'s association are two properties that are insensitive to locations. 

We are ready to state the correctness of  our scheme (\cref{figure:modelAwareMonitoring}). The proof is straightforward by \cref{lem:Mlnotvarphi} and \cref{def:monitoredLang}. 
\begin{theorem}[correctness]\label{thm:soundness}
  In the setting of \cref{def:Mlnotvarphi}, let $\word$ be a timed quantitative word. We have $w\in \Lmon(\A_{\lnot\varphi})$  if and only if there is a (non-necessarily accepting) run $\run$ of $\A$ such that: \begin{ienumeration}
 \item $\word$ is associated with $\run$ and
 \item $\run$ violates $\varphi$ at some time instant.
\end{ienumeration}
\qed{}
\end{theorem}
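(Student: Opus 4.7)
The plan is to establish both directions of the equivalence by a direct appeal to \cref{lem:Mlnotvarphi}, which already bridges runs of $\A$ violating $\varphi$ and accepting runs of $\A_{\lnot\varphi}$. The key structural observation underpinning this is that \cref{def:Mlnotvarphi} builds $\A_{\lnot\varphi}$ by duplicating $\A$ into a second copy $\A^{\circ}$ whose locations all lie in $\LocFinal$, and by adding jump edges labelled with the guard $\neg\varphi$ and no update from each $\loc$ to its copy $\loc^{\circ}$. Because the flows, invariants, and other dynamics in $\A^{\circ}$ coincide with those of $\A$, the only behavioural difference between a run of $\A$ and a corresponding run of $\A_{\lnot\varphi}$ is whether and when the copy-switching edge is fired.

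For the forward direction, I would assume $\word \in \Lmon(\A_{\lnot\varphi})$ and unfold \cref{def:monitoredLang}: there is an accepting run $\run'$ of $\A_{\lnot\varphi}$ with which $\word$ is associated. Applying \cref{lem:Mlnotvarphi} to $\run'$ yields a run $\run$ of $\A$ that violates $\varphi$ at some time instant and whose set of associated timed quantitative words coincides with that of $\run'$. Since $\word$ was associated with $\run'$, it is also associated with $\run$, delivering both conclusions.

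The backward direction is the converse. Given $\run$ with $\word$ associated and $\run$ violating $\varphi$ at some instant, \cref{lem:Mlnotvarphi} hands back an accepting run $\run'$ of $\A_{\lnot\varphi}$ whose associated word set equals that of $\run$. Hence $\word$ is associated with $\run'$, and by \cref{def:monitoredLang} we conclude $\word \in \Lmon(\A_{\lnot\varphi})$.

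The only mildly delicate point, which is essentially the content of \cref{lem:Mlnotvarphi} rather than of the theorem itself, is justifying that the copy-switching edge can be inserted at the precise moment of violation without disturbing either the timing of the run or the interpolation conditions that define association in \cref{def:monitoredLang}: the new edge has empty update and the violation instant $\tau$ need not be a sampling instant of $\word$, so the cases in \cref{def:monitoredLang} carry over verbatim from $\run$ to $\run'$. Once that lemma is in hand, the present theorem is an immediate bookkeeping step, so I expect no further obstacle.
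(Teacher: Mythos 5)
Your proposal is correct and matches the paper's own argument, which simply observes that the theorem follows from \cref{lem:Mlnotvarphi} together with unfolding \cref{def:monitoredLang}; you spell out the two directions and correctly identify that the only delicate point (inserting the copy-switching edge at the violation instant without affecting word association) is the content of the lemma, not of the theorem.
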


Identifying a run $\run$ with a behavior $\signal$, and  association of $\word$ to $\run$ with sampling, the theorem establishes the feature~\ref{item:featureCorrectness} of our scheme (\cref{section:introduction}). 

The consequence in the safety analysis of  the real SUM (instead of its bounding model $\A$) is discussed in the ``usage scenario'' paragraph of \cref{section:introduction}. In particular, due to potential gaps between the SUM and the bounding model $\A$, an alert of our monitor can be false, while the absence of an alert proves safety. Overall, \emph{our model-bounded monitoring scheme is sound}.

\begin{example}
  \label{example:model_bounded_monitoring}
 We show how the illustration in \cref{example:automotivePlatooningMLnotVarphi} is formalized by the monitored language $\Lg(\A_{\lnot \varphi})$ of the LHA $\A_{\lnot \varphi}$.
 Let $\A_{\neg\varphi}$ be the LHA in \cref{figure:collision:product} and $\word$ be the timed quantitative word in \cref{figure:automotivePlatooningLog}.
 We have $\word[2] \not\in \Lg(\A_{\lnot\varphi})$ since all the runs with which  $\word[2]$ is associated are not accepting. That is, the log $\word$ is safe until time $t=10$. 

 However, for the full log, we have $\word \in \Lg(\A_{\neg\varphi})$, because of the following accepting run $\run$ with which $\word$ is associated:
 \begin{math}
  \run = \bigl[\,(\loc_0,\wordVal_1) \longueflecheRel{10,(8.3, 8.2)}  (\loc_0,\wordVal_2) \longueflecheRel{4,(7.5, 9.0)} (\loc_0,\varval) \longueflecheRel{\edge} (\loc_2,\varval) \longueflecheRel{6,\left(\frac{25}{3},8.0\right)} (\loc_2,\wordVal_2) \,\bigr]
 \end{math},  where $\varval = (153, 153)$.
 \LongVersionActuallyNot{The above discussions result in the illustration in \cref{figure:examples:acc:signal}: 
 there is no collision in $t \in [0,10]$ (guaranteed modulo the model soundness assumption, \cref{section:introduction}), while there is a potential collision in $t \in (10,20]$.}
\end{example}
\paragraph{Model-bounded monitoring with an imprecise bounding model}%
\label{newtext:imprecise}
In our model-bounded monitoring scheme, the bounding model~$\A$ must overapproximate the real SUM at least locally.
When the bounding model does not include the actual behavior~$\sigma$, model-bounded monitoring may fail to detect a violation of~$\varphi$.
We have the following two cases of the failure.

The first case is such that an observed sample $(\wordVal_i,\tau_i)$ is unreachable from the previous sample $(\wordVal_{i-1},\tau_{i-1})$ according to $\A$.
In this case, since there is no run $\run$ of the LHA $\A_{\neg \varphi}$ with which $\word$ is associated, model-bounded monitoring cannot detect any safety violation.

The second case is such that $(\wordVal_i,\tau_i)$ is reachable from $(\wordVal_{i-1},\tau_{i-1})$ but $\A$ does not capture the behavior in-between them.
In this case, even if there is a run $\run$ of $\A_{\neg \varphi}$ with which $\word$ is associated,
the guard of the transition to the unsafe locations $\loc^{\circ}$ may not be satisfied.
Therefore, model-bounded monitoring may fail to detect a violation of~$\varphi$.
Nevertheless, when the deviation of the bounding model~$\A$ from the real SUM is smaller than the distance between the actual behavior~$\sigma$ and the threshold of~$\varphi$,
since the guard of the transition to the unsafe locations remains satisfied, model-bounded monitoring can detect a violation of~$\varphi$.
This is much like the discussion on the robustness degree in~\cite{FP09}.
More formal discussion is a future work.

In order to ensure correctness of our approach,
it is therefore preferable to use even a very rough over-approximated bounding model, rather than a tighter---but possibly incorrect---bounding model. We believe that building a rough over-approximated bounding model is often relatively easy, as soon as one knows at least a little from the system domain.

\section{Membership for Monitored Languages: Symbolic Interpolation}\label{sec:symbInt}
The rest of the paper is devoted to solving the membership problem of $\Lmon(\A)$, a core computation task in \cref{figure:modelAwareMonitoring}. We will present two (partial) algorithms: they are symbolic algorithms that iteratively update polyhedra.

\smallskip

\defProblem{The $\Lmon$ membership}{
An LHA $\A$ and
a timed quantitative word $\word$.}{
Return the set $\matching(\word, \A)$ of indices $i$ satisfying
$\wordVal[i] \in \Lg(\A)$. In particular, $\word\in \Lg(\A)$ iff $|\word|\in \matching(\word, \A)$. 
}%
\begin{example}
 \label{example:monitoring}
 Let $\A_{\neg\varphi}$ be the LHA in \cref{figure:collision:product} and $\word$ be the timed quantitative word in \cref{figure:automotivePlatooningLog}.
 We have $\matching(\word, \A_{\neg\varphi}) = \{3\}$,
 meaning $\word[1], \word[2] \not\in \Lmon(\A_{\neg\varphi})$, and $\word[3] (= \word) \in \Lmon(\A_{\neg\varphi})$. %
This result corresponds to the illustration in  \cref{figure:examples:acc:signal}.
\end{example}

The following ``no-go'' theorem is not very surprising, given previous results from  the hybrid automata literature. 
\ACMVersion{The proof is by a reduction from the bounded-time reachability of LHAs.}
\begin{theorem}
 [undecidability]
 \label{theorem:undecidability}
 For an LHA $\A$ and a timed quantitative word $\word$,
 it is undecidable to decide the emptiness of $\matching(\word, \A)$.
\end{theorem}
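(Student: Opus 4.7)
\quad The plan is to reduce the bounded-time reachability problem for LHAs---given an LHA $\A$, a unique initial location $\loc_0$ with valuation $\varval_0$, a target location $\loc_f$, and a bound $T\in\setQplus$, decide whether some run of $\A$ reaches $\loc_f$ within total duration at most $T$---to the complement of $\matching$-emptiness. Bounded-time reachability is already known to be undecidable for LHAs in the sense of \cref{def:LHA}, so a reduction of this shape suffices.

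From a reachability instance $(\A, \loc_0, \varval_0, \loc_f, T)$ I would construct an LHA $\A'$ and a timed quantitative word $\word$ as follows. Let $\A'$ extend $\A$ with a fresh variable $\clockabs$ satisfying $\dot{\clockabs}=1$ in every location and with the invariant $\clockabs\leq T$ appended everywhere; tighten $\VarValInit$ so that $\loc_0$ is the unique initial location, with exact initial valuation $\varval_0\cup\{\clockabs=0\}$; and set $\LocFinal=\{\loc_f\}$. Let $\word$ be the single-sample timed quantitative word whose unique sample has valuation $\varval_0\cup\{\clockabs\mapsto 0\}$ and timestamp $0$.

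I would then prove $\matching(\word,\A')\neq\emptyset$ iff $\loc_f$ is reachable in $\A$ from $(\loc_0,\varval_0)$ within time $T$. Case~2 of \cref{def:monitoredLang} is vacuous because it requires $0<\tau_1$; case~1 with $i=0$ forces any associated run to begin in the prescribed initial state; acceptance forces its last location to be $\loc_f$; and the invariant $\clockabs\leq T$ caps the total duration by $T$. Conversely, any bounded-time reaching run of $\A$ lifts to an accepting run of $\A'$ with which $\word$ is associated. The only delicate bookkeeping will be checking that null-delay discrete transitions cannot create spurious association indices $i>0$ with $\Duration(\run[i])=0$ beyond what case~1 already pins down at $i=0$, and that the freshly added clock does not interact badly with the existing flows, guards, and updates of $\A$. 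Once these corner cases are dispatched, undecidability of $\matching$-emptiness follows immediately from the known undecidability of bounded-time LHA reachability.
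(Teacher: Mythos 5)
Your proof is correct and follows the same high-level route as the paper's: a reduction from bounded-time reachability of LHAs (undecidable by the result the paper cites, \cite{BDGORW11}) to $\Lmon$ membership for a single-sample word. Only the gadget differs. The paper places the unique sample at the \emph{end} of the time horizon, with all-zero values at timestamp $\tau$; to make that sample matchable it adds a fresh accepting sink location with zero derivatives for the original variables, reached from the original accepting locations by edges guarded by $t \leq \tau$ that reset every variable to $0$. You instead place the sample at time $0$, where it is trivially matched by the (forced) initial state, and you encode the time bound by the global invariant $\clockabs \leq T$ rather than by a guard on an exit edge. Both encodings are sound; yours avoids the freeze-location and reset machinery and reduces the whole question to plain existence of an accepting run of $\A'$, while the paper's keeps the input word's valuation independent of the initial condition. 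The corner case you flag about extra indices $i>0$ with $\Duration(\run[i])=0$ is in fact harmless: association is existentially quantified over $i$, and $i=0$ already witnesses it, so additional zero-duration prefixes cannot break either direction of the equivalence.
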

\begin{proof} 
 We prove the claim by a reduction from the bounded-time reachability of the LHA $\A = (\Loc, \LocFinal, \Variables, \VarValInit, \Flow, \invariant, \Edges)$.
 Let $\tau \in \Rp$ be the time bound of the reachability checking.
 Let $\wordVal_0 \colon \Variables \to \R$ be the valuation satisfying $\wordVal_0(\variable) = 0$ for any $\variable \in \Variables$,
 $\word = (\wordVal_0, \tau)$, and
 $\A' = (\Loc \disjointUnion \{\loc_{f}\}, \{\loc_{f}\}, \Variables \disjointUnion \{t\}, \VarValInit, \Flow', \invariant', \Edges \disjointUnion \Edges')$, where:
 \begin{itemize}
  \item $\Flow'$ is $\Flow'(\loc) = \{\Flow(\loc) \land \dot{t} = 1\}$ for any $\loc \in \Loc$, and
        $\Flow'(\loc_{f})$ is such that $\dot{\variable} = 0 $ for any $\variable \in \Variables$ and $\dot{t} = 1$;
  \item $\invariant'$ is $\invariant'(\loc) = \invariant(\loc)$ for $\loc \in \Loc$ and $\invariant'(\loc_{f}) = \top$; and
  \item $\Edges' = \{(\loc, \{t \leq \tau \}, \updates, \loc_{f}) \mid \loc \in \LocFinal, \updates(\variable) = 0 \text{ for any } \variable \in \Variables\}$.
 \end{itemize}
 We have $(\wordVal_0, \tau) \in \Lg(\A')$ if and only if
 $\LocFinal$ is reachable in $\A$ within $\tau$.
 Since we have $|\word| = 1$, the bounded-time reachability of the LHA $\A$, which is undecidable~\cite{BDGORW11}, is reduced to the emptiness checking of $\matching(\word, \A')$.
\end{proof}
Given the undecidability result, we can think of 
 restricting the class of the models. 
For example, the problem becomes decidable under some monotonicity constraints~\cite{BDGORW13}, or if the number of discrete transitions within a time unit is bounded~\cite{BWWL19}. %

Nevertheless, in practice, as we observe in \cref{section:experiments}, our partial algorithms below perform effectively for many benchmarks on the full LHA class---especially our latter, direct algorithm.

Our two partial algorithms  have the following features. 
\begin{itemize}
 \item 
       Instead of solving one-way reachability (forward or backward) as many existing algorithms do, they solve \emph{interpolation} between two points, \ie{} samples (see \cref{figure:examples:acc:signal}).
 \item 
       They work in a \emph{one-shot manner}, collecting the linear constraints representing the interpolations from the given origin to the end with only bounded-time forward reachability analysis. 
       Instead, a naive method  would iterate between forward and backward reachability analysis.
\end{itemize}
\section{Method I:  via Reduction to LHA Reachability Analysis}\label{section:algorithm1}

In our first solution, we reduce the  $\Lmon$ membership problem to reachability analysis of LHAs.
In practice, we will use \phaverlite{}, one of the most efficient tools for reachability analysis of hybrid systems according to~\cite{BZ19}.

The idea of reducing monitoring to reachability analysis of extensions of finite-state automata is not new and was already proposed in the literature, \eg{}~\cite{AHW18}.
While both~\cite{AHW18} and the method we introduce in this section are symbolic; the differences are in the formalism and problem.
On the one hand,~\cite{AHW18} uses parametric timed automata as a parametric specification and performs \emph{parametric} timed pattern matching (which can be seen as parametric monitoring).
On the other hand, we use LHAs for the bounding model and we perform symbolic monitoring.
An extension for a parametric setting is future work, which is technically not very demanding.

Our workflow is as follows:
\begin{enumerate}
	\item We transform the input timed quantitative word $\word$ into an LHA $\A_{\word}$ (that is in fact only timed, \ie{} it only uses clocks), that uses two extra variables:
	\begin{enumerate}
		\item $\clockabs$ measures the absolute time since the beginning of the word; and
		\item $\clockrel$ measures the (relative) time since the last sampled timed quantitative word.
	\end{enumerate}
	
	\item We perform the synchronized product $\A \product \A_{\word}$ of the given LHA $\A$ with the transformed LHA $\A_{\word}$.
	
	\item We run the reachability analysis procedure for the product LHA $\A \product \A_{\word}$, to derive all possible locations $\wloc_i$ of $\A_{\word}$ such that $(\loc, \wloc_i)$ is reachable in $\A \product \A_{\word}$ with $\clockrel = 0$, where $\loc$ is an accepting location of the given LHA $\A$. %
\end{enumerate}

We explain these steps in the following.

\subsection{Transforming the Timed Quantitative Word into an LHA}

First; we transform the input timed quantitative word $\word$ into an LHA.
The resulting LHA $\A_{\word}$ is a simple sequence of locations with guarded transitions in between, also resetting~$\clockrel$.

The LHA $\A_{\word}$ features an absolute time clock $\clockabs$ (initially~0, of rate~1 and never reset), and can test all variables of the system in guards (these are not reset in this LHA, though).
In more detail, we simply convert each sample $(\wordVal_i, \tau_i)$ of the timed quantitative word $\word$ into a guard of the LHA testing for the timestamp using the absolute time clock~$\clockabs$, and for the value of the variables.
The invariant of the location preceding a timestamp $\tau_i$ also features the clock constraint $\clockabs \leq \tau_i$ (this is not crucial for correctness but limits the state space explosion).
The transitions are all labeled with a fresh action~\styleact{sample} (which could be replaced with an unobservable action, but such actions are not accepted by the \phaverlite{} model checker).
Each transition resets~$\clockrel$.
	Overall, this procedure shares similarities with the one that transforms a timed word into a timed automaton, proposed in~\cite{AHW18}.
Let \TransWord{} denote this procedure.
	For example, consider the timed quantitative word~$\word$ in \cref{figure:automotivePlatooningLog}.  %
	The result $\A_{\word}$ of $\TransWord(\word)$ is given in \cref{figure:translation:word}.
\begin{figure}[tb]
	\centering
	\small
		\scalebox{1.0}{\begin{tikzpicture}[shorten >=1pt,node distance=2.3cm,on grid,auto]
		\node[location] (w0) {$\wloc_0$};
		\node[location] (w1) [right of=w0] {$\wloc_1$};
		\node[location] (w2) [right of=w1] {$\wloc_2$};
		\node[location] (w3) [right of=w2]{$\wloc_3$};
		
		\node[invariant, below] at (w0.south) {$\stylevar{\clockabs} \leq 0$};
		\node[invariant, below] at (w1.south) {$\stylevar{\clockabs} \leq 10$};
		\node[invariant, below] at (w2.south) {$\stylevar{\clockabs} \leq 20$};

		\draw[<-] (w0) -- node[above, align=center] {$\variablei{1} = 40$ \\ $\variablei{2} = 35$ \\ $\clockabs = 0$ \\ $\clockrel = 0$} ++(-1.2cm,0);

		\path[->]
			(w0) edge
				node[above][align=center] {$\stylevar{\clockabs} = 0$ \\ $\land \stylevar{\variablei{1}} = 40$ \\ $\land \stylevar{\variablei{2}} = 35$ \\ $\styleact{sample}$}
				node[below]{$\clockrel \assign 0$}
				(w1)
			
			(w1) edge
				node[above] [align=center] {$\stylevar{\clockabs} = 10$ \\ $\land \stylevar{\variablei{1}} = 123$ \\ $\land \stylevar{\variablei{2}} = 117$ \\$\styleact{sample}$}
				node[below]{$\clockrel \assign 0$}
				(w2)
			
			(w2) edge
				node[above] [align=center] {$\stylevar{\clockabs} = 20$ \\ $\land \stylevar{\variablei{1}} = 203$ \\ $\land \stylevar{\variablei{2}} = 201$ \\ $\styleact{sample}$} 
				node[below]{$\clockrel \assign 0$}
				(w3)
		;
		\end{tikzpicture}}
		
	\caption{$\TransWord$ applied to the timed quantitative word in \cref{figure:automotivePlatooningLog}. Here, \emph{i}) $\Variables = \{ \variablei{1} , \variablei{2}, \clockabs, \clockrel \}$, \emph{ii})
		 $\VarValInit$ is such that $\VarValInit(\loc_0) = \{ \variablei{1} = 40 \land \variablei{2} = 35 \land \clockabs = 0 \land \clockrel = 0 \}$ and $\VarValInit(\wloc_i) = \bot$ for $1 \leq i \leq 3$, and \emph{iii}) $\dot{\clockabs} = \dot{\clockrel} = 1$ and $\variabledoti{1}, \variabledoti{2} \in \R$ in all locations. %
Invariants are boxed under the location.}\label{figure:translation:word}
\end{figure}
\subsection{Reachability Analysis Using \phaverlite{}}

We perform the synchronized product $\A_{\word} \product \A$ (``parallel composition'') of the LHA $\A_{\word}$ constructed from $\word$ together with the given LHA $\A$.

Then, we run the reachability analysis, setting as target the states for which both of the following conditions hold:
\begin{ienumeration}
	\item the monitor is in an accepting location and
	\item $\clockrel = 0$.
\end{ienumeration}
The latter condition ensures that only the states such that we just sampled a word are accepting.
Thanks to the latter condition, we can take into account of the next sample without any explicit backward reachability analysis.

This is intuitively because of the following.
While $\clockrel > 0$ and we are at $\wloc_i$ in $\A_{\word}$, we compute all the reachable valuations from $\word[i]$ by the forward reachability analysis. Here, we non-deterministically make an assumption of the next sample including the ones incompatible with the actual sample $\word[i+1]$.
When we take the transition from $\wloc_i$ to $\wloc_{i+1}$ and $\clockrel = 0$, we require that the assumption of the next sample must be compatible with the actual sample $\wloc_{i+1}$ and accept only if the accepting locations are reachable by an interpolation between $\word[i]$ and $\word[i+1]$.

\begin{example}%
 \label{example:indirect_method}
	Let us exemplify the need for the latter condition.
	Consider the LHA $\A_{\neg \varphi}$ in \cref{figure:collision:product} and the timed quantitative word $\word$ in \cref{figure:automotivePlatooningLog} transformed into the LHA $\A_{\word}$ in \cref{figure:translation:word} (only the time frame in $[0,10]$ is of interest in this example).
	Clearly, this log is safe \wrt{} the LHA $\A_{\neg \varphi}$ in \cref{figure:collision:product}, that is neither $\loc_2$ nor~$\loc_3$ are reachable, since the distance between both vehicles cannot be $\leq 0$ in the $[0,10]$ time frame.
	
        If we simply run the reachability procedure looking for~$\loc_2$ or~$\loc_3$ as target (without condition on~$\clockrel$), the procedure will output that at least~$\loc_2$ is reachable.
	Indeed, it is possible that vehicle~1 runs at the minimal rate of 7.5 while vehicle~2 runs at the maximal rate of~9.
	In that case, after 10 time units, vehicle~1 (resp.~2) reaches $x$-coordinate 115 (resp.~125), and thus their distance is $\leq 0$, making~$\loc_2$ reachable.
	While this behavior is indeed possible from the knowledge we have of the first sample, it is actually impossible knowing the full log and in particular the second sample.
	This phenomenon is illustrated in the part of \cref{figure:examples:acc:signal} restricted to the $[0,10]$ time frame: the blue part depicts all possible valuations knowing the first and second sample.
\end{example}

Hence, adding the condition $\clockrel = 0$ forces the model checker to take into consideration the next sample before making a decision concerning the reachability of a possible target location.

\section{Method II: Direct Method by Polyhedra Computation}\label{section:algorithm2}

In our second solution, we directly solve the  $\Lmon$ membership problem.
We iteratively compute the runs of the LHA $\A$ associated with the prefixes of the timed quantitative word $\word$ utilizing bounded reachability analysis.
This is our main contribution.

\begin{algorithm}[tbp]
 \caption{Outline of our incremental procedure for the $\Lmon$ membership problem}%
 \label{algorithm:incremental:outline}
 \DontPrintSemicolon{}
 \newcommand{\myCommentFont}[1]{\texttt{\footnotesize{#1}}}
 \SetCommentSty{myCommentFont}
 \KwIn{A timed quantitative word $\word = (\wordVal_1, \tau_1), (\wordVal_2, \tau_2),\dots, (\wordVal_n, \tau_n)$ and
 an LHA $\A = (\Loc, \LocFinal, \Variables, \VarValInit, \Flow, \invariant, \Edges)$.}
 \KwOut{The Boolean sequence $\Resulti{1},\dots,\Resulti{n}$, where
 $\Resulti{i} = \top \iff \word[i] \in \Lg(\A)$}
 $\Conf_{0} \gets \big\{(\loc, \varval) \mid \loc \in \Loc, \varval \in \VarValInit(\loc)\big\}$\;
 \label{algorithm:incremental:outline:init}
 \For{$i \gets 1$ \KwTo $n$}{
 \label{algorithm:incremental:outline:loop-begin}
 $\Conf'_{i} \gets$ reachable states from $\Conf_{i-1}$ in duration $\tau_i - \tau_{i-1}$ \;
 \label{algorithm:incremental:outline:reachability}
 $\Conf_{i} \gets \big\{(\loc,\varval) \in \Conf'_{i} \mid \varval = \wordVal_i \big\}$ \;
 \label{algorithm:incremental:outline:restrict}
 $\Resulti{i} \gets \exists (\loc, \varval) \in \Conf_{i}.\, \loc\in\LocFinal$ \;\nllabel{algo:accepting}
 \label{algorithm:incremental:outline:result}
 \label{algorithm:incremental:outline:loop-end}
 }
\end{algorithm}
\cref{algorithm:incremental:outline} shows an outline of our incremental procedure for the  $\Lmon$ membership problem.
\cref{algorithm:incremental:outline} incrementally constructs the intermediate states $\Conf_{i}$ and outputs the partial result $\Resulti{i}$ showing if $\word[i] \in \Lg(\A)$.
In \cref{algorithm:incremental:outline:init}, we construct the initial states $\Conf_{0}$.
We note that although $\Init(\loc)$ is, in general, an infinite set, it is given as a convex polyhedron, and we can represent $\Conf_{0}$ as a finite list of pairs of a location and a convex polyhedron.

From \cref{algorithm:incremental:outline:loop-begin} to \cref{algorithm:incremental:outline:loop-end} is the main part of  \cref{algorithm:incremental:outline}: we incrementally compute $\Conf_{i}$ and $\Resulti{i}$. 
In \cref{algorithm:incremental:outline:reachability}, we compute the reachable states $\Conf_{i}$ from $\Conf_{i-1}$ after the executions of duration $\tau_{i} - \tau_{i-1}$.
This part is essentially the same as the bounded-time reachability analysis, and thus, it is undecidable for LHAs in general~\cite{BDGORW11}.
Nevertheless, in practice, the reachable states $\Conf'_{i}$ are usually effectively computable as a finite union of convex polyhedra.

In \cref{algorithm:incremental:outline:restrict}, we require $\Conf_{i}$ to be the subset of $\Conf'_{i}$ compatible with the current observation $\wordVal_i$.
Thanks to this requirement, we can take into account of the next sample just by the forward reachability analysis.
Finally, in \cref{algorithm:incremental:outline:result}, we determine the partial result $\Resulti{i}$ by checking the reachability to the accepting locations.

\begin{example}%
 \label{ex:direct}
 Let $\word$ and $\A$ be the ones in \cref{example:monitoring}.
 In \cref{algorithm:incremental:outline:init} of \cref{algorithm:incremental:outline}, we let $\Conf_{0} = \{(\loc_0, (40, 35))\}$.
 In \cref{algorithm:incremental:outline:reachability}, we conduct the time-bounded reachability analysis of duration 10. 
 The result is as follows.
 
 {\small
 \begin{align*}
   \Conf'_{1} &= \{(\loc_0, \varval_1) \mid \varval_1 (\variable_1) \in [115,125], \varval_1(\variable_2) \in [115, 125]\}\\
  \cup &\{(\loc_0, \varval_1) \mid -3 \varval_1(\variable_1) + 11\varval_1(\variable_2) \geq 876, -2\varval_1(\variable_1) + 9\varval_1(\variable_2) \geq 789,\\
  & \qquad\qquad\,\, \varval_1(\variable_2) \leq 431/3, \varval_1(\variable_1) \leq 499/3, \varval_1(\variable_2) \geq 115, %
  \varval_1(\variable_1) \geq 115, 4\varval_1(\variable_1) - 7\varval_1(\variable_2) \geq - 415 \}\\
  \cup &\{(\loc_1, \varval_1) \mid 
  -3\varval_1(\variable_1) + 11\varval_1(\variable_2) \geq 876, -\varval_1(\variable_1) + 5\varval_1(\variable_2) \geq 456,\\
  &\qquad\qquad\,\, -3\varval_1(\variable_2) \geq -431, -3\varval_1(\variable_1) \geq -499, \varval_1(\variable_2) \geq 115, %
  \varval_1(\variable_1) \geq 115, 4\varval_1(\variable_1) - 7\varval_1(\variable_2) \geq -415
  \}\\
  \cup &\{(\loc_2, \varval_1) \mid 
  -18\varval_1(\variable_1) + 15\varval_1(\variable_2) \geq -415, -\varval_1(\variable_1) + 2\varval_1(\variable_2) \geq 115, %
  4\varval_1(\variable_1) - 7\varval_1(\variable_2) \geq -415, \varval_1(\variable_1) \geq 115
  \}\\
  \cup &\{(\loc_3, \varval_1) \mid
  \varval_1(\variable_1) \in [115,455/3],
  -3\varval_1(\variable_1) + 11\varval_1(\variable_2) \geq 920, %
  \varval_1(\variable_2) \leq 415/3, 4\varval_1(\variable_1) - 7\varval_1(\variable_2) \geq -415
  \}
\end{align*}
}

 In \cref{algorithm:incremental:outline:restrict}, we require $\variable_1 = 123$ and $\variable_2 = 117$, and we have $\Conf_{1} = \{(\loc_0, (123,127)), (\loc_1, (123,127))\}$.
 Since $\loc_0 \not\in \LocFinal$ and $\loc_1 \not\in \LocFinal$, we have $\Resulti{1} = \emptyset$.

 After incrementing $i$ in \cref{algorithm:incremental:outline:loop-begin},
 in \cref{algorithm:incremental:outline:reachability}, we again conduct the time-bounded reachability analysis. The result is as follows.
 
 {\small
 \begin{align*}
  \Conf'_{2} &= \{(\loc_0, \varval_2) \mid
  \varval_2(\variable_1) \in [198,253], \varval_2(\variable_2) \in [197, 227], %
  -2\varval_2(\variable_1) + 9\varval_2(\variable_2) \geq 1357, 4\varval_2(\variable_1) - 7\varval_2(\variable_2) \geq -657
  \}\\
  \cup &\{(\loc_1, \varval_2) \mid
  \varval_2(\variable_1) \in [233,253], \varval_2(\variable_2) \in [207, 227]
  \}\\
  \cup &\{(\loc_1, \varval_2) \mid
  -3\varval_2(\variable_1) + 11\varval_2(\variable_2) \geq 1540, -\varval_2(\variable_1) + 5\varval_2(\variable_2) \geq 784, \\
  &\qquad\qquad\,\, -3\varval_2(\variable_2) \geq -673, -3\varval_2(\variable_1) \geq -737, \varval_2(\variable_2) \geq 197, %
  \varval_2(\variable_1) \geq 198, 4\varval_2(\variable_1) - 7\varval_2(\variable_2) \geq -657
  \}\\
  \cup &\{(\loc_2, \varval_2) \mid
  -6\varval_2(\variable_1) + 5\varval_2(\variable_2) \geq -219, -\varval_2(\variable_1) + 2\varval_2(\variable_2) \geq 198, %
  4\varval_2(\variable_1) - 7\varval_2(\variable_2) \geq -657, \varval_2(\variable_1) \geq 198
  \}\\
  \cup &\{(\loc_3, \varval_2) \mid
  \varval_2(\variable_1) \in [198,231], \varval_2(\variable_2) \leq 219, %
  -3\varval_2(\variable_1) + 11\varval_2(\variable_2) \geq 1584, 4\varval_2(\variable_1) - 7\varval_2(\variable_2) \geq -657
  \}
 \end{align*}
 }
 
 In \cref{algorithm:incremental:outline:restrict}, we require $\variable_1 = 203$ and $\variable_2 = 201$.
 This time, we have $\Conf_{1} = \{(\loc, (203,201)) \mid \loc \in \Loc\}$.
 Thus, we have $\Resulti{2} = \{(203,201)\}$.
\end{example}

The intermediate states set $\Conf_{i}$ is the set of the last states of the runs of $\A$ associated with $\word[i]$ and of duration $\tau_i$.
Therefore, we have the following correctness theorem.

\begin{theorem}
 [correctness of \cref{algorithm:incremental:outline}]%
 \label{theorem:correctness:incremental}
 Given a timed quantitative word $\word$ and an LHA $\A$,
 \cref{algorithm:incremental:outline} returns the sequence
 $\Resulti{1},\dots,\Resulti{n}$ satisfying
 $\Resulti{i} = \top \iff \word[i] \in \Lg(\A)$ if it terminates.
\end{theorem}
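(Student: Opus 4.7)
The plan is to establish \cref{theorem:correctness:incremental} by proving a loop invariant of \cref{algorithm:incremental:outline} by induction on $i$, and then deriving the stated equivalence from the final line of the loop. The invariant I would prove is:
\begin{quote}
$(\ast)$ \quad $\Conf_{i}$ is exactly the set of states $(\loc,\varval)$ such that $\varval = \wordVal_{i}$ and there is a run $\run$ of $\A$ of duration $\tau_{i}$ whose last state is $(\loc,\varval)$ and such that $\word[i]$ is associated with $\run$ in the sense of \cref{def:monitoredLang}.
\end{quote}
Once $(\ast)$ is in hand, $\Resulti{i} = \top$ iff $(\ast)$ gives some $(\loc,\varval)\in\Conf_{i}$ with $\loc\in\LocFinal$, which by \cref{def:monitoredLang} is exactly $\word[i]\in\Lg(\A)$, yielding the theorem.

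For the base case $i=0$ I would argue that $\word[0]$ is the empty word, so being associated with $\run$ puts no constraint, and the duration-$0$ runs of $\A$ are exactly the initial states; this matches \cref{algorithm:incremental:outline:init}. For the inductive step I assume $(\ast)$ for $i-1$ and verify it for $i$. The $\subseteq$ direction: any $(\loc,\varval)\in\Conf_{i}$ is reached by lifting some $(\loc'',\varval'')\in\Conf_{i-1}$ through a sub-run of duration $\tau_{i}-\tau_{i-1}$ (computed at \cref{algorithm:incremental:outline:reachability}) and then restricting to $\varval=\wordVal_{i}$ (\cref{algorithm:incremental:outline:restrict}); concatenating the run witnessing $(\loc'',\varval'')\in\Conf_{i-1}$ with this sub-run yields a run $\run$ of duration $\tau_{i}$ ending at $(\loc,\varval)$, and since the first $i-1$ samples are associated with the prefix by the inductive hypothesis and $(\wordVal_{i},\tau_{i})$ matches the last state via condition~\ref{def:monitoredLang:1} of \cref{def:monitoredLang}, the whole $\word[i]$ is associated with $\run$. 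The $\supseteq$ direction: given a run $\run$ witnessing association with $\word[i]$ and of duration $\tau_{i}$, I truncate it at time $\tau_{i-1}$; its endpoint is an element of $\Conf_{i-1}$ by the inductive hypothesis (after possibly splitting the continuous transition containing $\tau_{i-1}$ into two continuous transitions at time $\tau_{i-1}$, which does not change the trajectory and enables condition~\ref{def:monitoredLang:1} to apply at $\tau_{i-1}$); the tail of $\run$ then witnesses membership in the reachable set $\Conf'_{i}$, and the restriction $\varval=\wordVal_{i}$ is satisfied because $\word[i]$ is associated with $\run$.

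The main obstacle is handling the discrepancy between the two association conditions in \cref{def:monitoredLang}: a sample at $\tau_{j}$ might lie strictly inside a continuous transition (condition~2) rather than at a breakpoint of $\run$ (condition~1), whereas the algorithm effectively forces every sample to be a breakpoint by restricting $\Conf_{i}$ to states with valuation $\wordVal_{i}$ at the exact time $\tau_{i}$. The resolution, used implicitly above, is a \emph{splitting lemma}: any continuous transition $(\loc,\varval)\longueflecheRel{d,\flow}(\loc,\varval')$ that passes through a valuation $\varval''$ at an intermediate time $d''<d$ can be split into two continuous transitions $(\loc,\varval)\longueflecheRel{d'',\flow}(\loc,\varval'')\longueflecheRel{d-d'',\flow}(\loc,\varval')$ using the same flow $\flow$, without violating the flow constraint $\Flow(\loc)$ or the invariant $\invariant(\loc)$ since both conditions are preserved along the trajectory. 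This shows that condition~\ref{def:monitoredLang:1} can be assumed without loss of generality for every sample, which is precisely what the algorithm exploits. The other minor point I would flag is that \cref{algorithm:incremental:outline:reachability} is stated as an unbounded-time computation which is undecidable in general (as noted in the text and referencing \cite{BDGORW11}), so the correctness statement is relative to this reachability oracle being computed exactly; termination is separately discussed as a partial-algorithm caveat.
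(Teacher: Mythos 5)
Your proposal is correct and follows essentially the same route as the paper: the invariant $(\ast)$ is precisely the paper's key lemma characterizing $\Conf_{i}$ as the set of endpoints of runs of duration $\tau_{i}$ associated with $\word[i]$, proved by the same induction with concatenation for one inclusion and truncation for the other, and the theorem then follows from \cref{algo:accepting}. Your explicit splitting lemma for continuous transitions makes precise a step the paper's truncation argument leaves implicit, but it does not change the structure of the proof.
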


To prove \cref{theorem:correctness:incremental}, we show the following lemma.

\begin{lemma}%
 \label{lemma:correctness:run_and_state}
 Let
 $\word$ be a timed quantitative word $\word = (\wordVal_1, \tau_1), \dots, (\wordVal_m, \tau_m)$,
 $\A$ be  an LHA $\A = (\Loc, \LocFinal, \Variables, \VarValInit, \Flow, \invariant, \Edges)$, and
 $\Conf_{i} \in \Loc \times \Val$ be the $\Conf_{i}$ in \cref{algorithm:incremental:outline:restrict} of \cref{algorithm:incremental:outline},
 where $i \in \{1,2,\dots,m\}$.
 For any $i \in \{1,2,\dots,m\}$, $\loc \in \Loc$, and $\varval \in \Val$,
 we have $(\loc, \varval) \in \Conf_{i}$ if and only if 
 there is a run $\run_i = s_0, \flecheRel_1, s_1, \dots, \flecheRel_n, s_n = (\loc, \varval)$ of $\A$ associated with $\word[i]$ and satisfying $\Duration(\run_i) = \tau_i$.
\end{lemma}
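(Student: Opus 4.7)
The plan is to prove the biconditional by induction on $i$, treating \cref{algorithm:incremental:outline:init} of \cref{algorithm:incremental:outline} as a base case ``$i=0$'' with the convention $\tau_0 = 0$ and $\word[0]$ the empty prefix. For $i=0$, $\Conf_0$ consists exactly of initial concrete states by \cref{def:concreteSem}, which are precisely the last (and only) states of length-zero runs of duration $0$, trivially associated with the empty prefix.

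For the inductive step, assume the statement holds at index $i-1$. \textbf{Forward direction:} if $(\loc,\varval) \in \Conf_i$, then by \cref{algorithm:incremental:outline:restrict} we have $\varval = \wordVal_i$ and $(\loc,\varval) \in \Conf'_i$, so there exist $(\loc',\varval') \in \Conf_{i-1}$ and a sequence of transitions of total delay $\tau_i - \tau_{i-1}$ from $(\loc',\varval')$ to $(\loc,\varval)$ in the concrete semantics. By the induction hypothesis, there is a run $\run_{i-1}$ of duration $\tau_{i-1}$ associated with $\word[i-1]$ ending at $(\loc',\varval')$. Concatenating $\run_{i-1}$ with the reaching sequence yields a run $\run_i$ of duration $\tau_i$ ending at $(\loc,\varval)$; and since $\wordVal_i = \varval$ matches the last state of $\run_i$ at time $\tau_i$, the sample $(\wordVal_i,\tau_i)$ is associated with $\run_i$ via condition~\ref{def:monitoredLang:1} of \cref{def:monitoredLang}, so $\word[i]$ is associated with $\run_i$.

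\textbf{Backward direction:} given a run $\run_i$ of duration $\tau_i$ associated with $\word[i]$ and ending at $(\loc,\varval)$, we need to exhibit a run $\run_{i-1}$ of duration $\tau_{i-1}$ associated with $\word[i-1]$ whose last state lies in $\Conf_{i-1}$; then the reachability computation at \cref{algorithm:incremental:outline:reachability} captures $(\loc,\varval) \in \Conf'_i$, and the final restriction keeps it since $\varval = \wordVal_i$ by association of the $i$-th sample. To obtain $\run_{i-1}$, split $\run_i$ at absolute time $\tau_{i-1}$: if this instant coincides with a state of $\run_i$, take the corresponding prefix; otherwise, $\tau_{i-1}$ falls strictly inside a continuous transition $\longueflecheRel{d,\flow}$, which we split into two continuous transitions $\longueflecheRel{d',\flow}\longueflecheRel{d-d',\flow}$ with $d'$ chosen so the intermediate state occurs at time~$\tau_{i-1}$. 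The obtained prefix is a valid run of duration $\tau_{i-1}$, and its association with $\word[i-1]$ is inherited from the association of $\run_i$ with $\word[i]$ applied to the earlier samples.

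\textbf{Main obstacle.} The delicate point is the backward direction: one must justify splitting a continuous transition at an arbitrary intermediate time while preserving \cref{def:concreteSem} (invariant satisfaction along the subdivided pieces follows from the original $\forall d' \in [0,d]$ condition) and preserving the association of earlier samples. In particular, samples whose timestamps equal $\tau_{i-1}$ may toggle between satisfying condition~\ref{def:monitoredLang:1} in $\run_i$ and the continuous condition in the prefix (or vice versa), so we must verify both definitions yield the same required valuation; this follows directly from $\varval + d'\flow$ being the common intermediate state. Once this bookkeeping is handled, the inductive step closes and \cref{theorem:correctness:incremental} is immediate: $\Resulti{i} = \top$ iff some $(\loc,\varval) \in \Conf_i$ satisfies $\loc \in \LocFinal$, which by the lemma is iff some run associated with $\word[i]$ is accepting, \ie{} $\word[i] \in \Lg(\A)$.
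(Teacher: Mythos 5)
Your proposal is correct and follows essentially the same route as the paper's proof: induction on $i$, with the forward direction obtained by concatenating the run given by the induction hypothesis with the reaching sequence of delay $\tau_i-\tau_{i-1}$, and the backward direction by extracting a prefix of duration $\tau_{i-1}$ whose endpoint lies in $\Conf_{i-1}$. You are in fact more explicit than the paper about the one delicate point---splitting a continuous transition at absolute time $\tau_{i-1}$ and checking that invariants and the association of earlier samples are preserved---whereas the paper simply asserts the existence of the decomposition $\run_i = \run_{i-1}\cdot\tilde{\run}_i$.
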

\begin{proof}
 We prove \cref{lemma:correctness:run_and_state} by induction on $i$.
 When $i = 1$, $\Conf_{1}$ is the set of states reachable from $\{(\loc_0,\varval_0) \mid \loc_0 \in \Loc, \varval_0 \in \VarValInit(\loc)\}$ in $\tau_1$ and 
 satisfying $\varval = \wordVal_1$.
 Therefore, for any $(\loc_1,\varval_1) \in \Conf_{1}$, there are $\loc_0 \in \Loc$, $\varval_0 \in \VarValInit(\loc)$, and
 a run $\run_1 = (\loc_0,\varval_0), \flecheRel_1, s_1, \dots, \flecheRel_n, (\loc_1, \varval_1)$ of $\A$ satisfying $\Duration(\run_1) = \tau_1$.
 By \cref{def:monitoredLang}, such $\run_1$ is associated to $\word[1]$, and thus, 
 when we have $(\loc_1, \varval_1) \in \Conf_{1}$,
 there is a run $\run_1 = s_0, \flecheRel_1, s_1, \dots, \flecheRel_n, s_n = (\loc_1, \varval_1)$ of $\A$ associated with $\word[1]$ and satisfying $\Duration(\run_1) = \tau_1$.
 When there is a run $\run_1 = s_0, \flecheRel_1, s_1, \dots, \flecheRel_n, s_n = (\loc_1, \varval_1)$ of $\A$ associated with $\word[1]$ and satisfying $\Duration(\run_1) = \tau_1$,
 since $s_0 \in \Conf_{0}$ holds, we have $(\loc_1, \varval_1) \in \Conf_{1}$.

 When $i > 1$, by definition in \cref{algorithm:incremental:outline}, $\Conf_{i}$ is the set of states $(\loc_{i}, \varval_{i}) \in \Loc \times \Val$ satisfying the following.
 \begin{itemize}
  \item There is a state $(\loc_{i-1}, \varval_{i-1}) \in \Conf_{i-1}$ such that $(\loc_{i}, \varval_{i})$ is reachable from $(\loc_{i-1}, \varval_{i-1})$ in $\tau_i - \tau_{i-1}$.
  \item We have $\varval_i = \wordVal_i$.
 \end{itemize}
 By definition of the reachability in $\A$, $\Conf_{i}$ is the set of states $(\loc_{i}, \varval_{i})$ such that there is an alternating sequence 
 $\tilde{\run}_i = s_0, \flecheRel_1, s_1, \dots, \flecheRel_n, s_n = (\loc_{i}, \varval_{i})$ 
 of concrete states $s_i \in \States$ and transitions $\flecheRel_i \in \flecheRel$ satisfying the following.
 \begin{itemize}
  \item $s_0 \to_1 s_1 \dots \to_n s_n$
  \item $s_0 \in \Conf_{i-1}$
  \item Sum of the delays in $\tilde{\run}_i$ is $\tau_i - \tau_{i-1}$
  \item $\varval_i = \wordVal_i$
 \end{itemize}
 By induction hypothesis, for any state $(\loc_{i-1}, \varval_{i-1}) \in \Conf_{i-1}$, 
 there is a run $\run_{i-1} = s'_0, \flecheRel'_1, s'_1, \dots, \flecheRel'_n, s'_n = (\loc_{i-1}, \varval_{i-1})$ of $\A$ associated with $\word[i-1]$ and satisfying $\Duration(\run_{i-1}) = \tau_{i-1}$.
 Therefore, for any $\loc \in \Loc$ and $\varval \in \Val$ satisfying  $(\loc, \varval) \in \Conf_{i}$,
 there is a run 
 \begin{align*}
  \run_i = \run_{i-1} \cdot \tilde{\run}_{i} = s'_0, \flecheRel'_1, s'_1, \dots, \flecheRel'_n, s'_n, \flecheRel_1, s_1, \dots, \flecheRel_n, s_n = (\loc_{i}, \varval_{i})
 \end{align*}
 of $\A$. 
 Moreover, since $\Duration(\run_{i-1}) = \tau_{i-1}$ holds, $\run_{i-1}$ is associated with $\word[i-1]$, and the sum of the delays in $\tilde{\run}_i$ is $\tau_i - \tau_{i-1}$,
 such $\run_{i}$ satisfies $\Duration(\run_{i}) = \tau_{i}$ and $\run_{i}$ is associated with $\word[i]$.

 For any run $\run_i = s_0, \flecheRel_1, s_1, \dots, \flecheRel_n, s_n = (\loc_i, \varval_i)$ of $\A$ associated with $\word[i]$ and satisfying $\Duration(\run_i) = \tau_i$,
 there is a run $\run_{i-1} = s_0, \flecheRel_1, s_1, \dots, \flecheRel_k, s_k = (\loc_{i-1}, \varval_{i-1})$ of $\A$ associated with $\word[i-1]$, and 
 an alternating sequence $\tilde{\run}_i = s_k, \flecheRel_{k+1}, s_{k+1}, \dots, \flecheRel_n, s_n$ 
 satisfying the following.
 \begin{itemize}
  \item $\Duration(\run_{i-1}) = \tau_{i-1}$
  \item sum of the delays in $\tilde{\run}_i$ is $\tau_i - \tau_{i-1}$
 \end{itemize}
 By induction hypothesis, we have $(\loc_{i-1}, \varval_{i-1}) \in \Conf_{i-1}$.
 Moreover, by the existence of $\tilde{\run}_i$, $(\loc_i, \varval_i)$ is reachable from $(\loc_{i-1}, \varval_{i-1}) \in \Conf_{i-1}$, and thus, we have
 $(\loc_{i}, \varval_{i}) \in \Conf_{i}$.
\end{proof}

Let us now go back to the proof of \cref{theorem:correctness:incremental}.

\begin{proof}%
 [Proof of \cref{theorem:correctness:incremental}]
 For each $i\in \{1,2,\dots,m\}$, by \cref{algorithm:incremental:outline:result} of \cref{algorithm:incremental:outline}, 
 we have $\Resulti{i} = \top$ if and only if there is $(\loc,\varval) \in \Conf_{i}$ satisfying $\loc \in \LocFinal$.
 By \cref{lemma:correctness:run_and_state}, for each $(\loc,\varval) \in \Conf_{i}$, 
 there is a run $\run_i = s_0, \flecheRel_1, s_1, \dots, \flecheRel_n, s_n = (\loc, \varval)$ of $\A$ associated with $\word[i]$.
 Therefore, we have $\Resulti{i} = \top$ if and only if there is an accepting run $\run_i$ of $\A$ associated with $\word[i]$, which is equivalent to $\word[i] \in \Lg(\A)$.
\end{proof}
\section{Experimental Evaluation}\label{section:experiments}

We experimentally evaluated our model-bounded monitoring scheme using the two procedures for $\Lmon$ membership.
For the first %
 procedure via reachability analysis
 (%
 in \cref{section:algorithm1}), we used \phaverlite{}~\cite{BZ19} for reachability analysis.
For the second direct procedure (in \cref{section:algorithm2}), 
we implemented a prototypical tool \masakiTool{}.

We pose the following research questions.
\begin{description}
 \item[RQ1] Is our dedicated implementation \masakiTool{} worthwhile, performance-wise?
 \item[RQ2] Is \masakiTool{} scalable \wrt{} the length of the input log?
 \item[RQ3] How is the scalability of \phaverlite{} and \masakiTool{} \wrt{} the dimension of the bounding model?
 \item[RQ4] Is there any relationship between the robustness of the log and the precision of model-bounded monitoring? Moreover, can we reduce the sampling interval in the model-bounded monitoring scheme in \cref{figure:modelAwareMonitoring} without causing significant false alarms?
\end{description}

\subsection{Benchmarks}
\begin{table}[tbp]
 \centering
 \caption{Summary of the benchmarks}%
 \label{table:benchmarks_summary}
 \begin{tabular}[t]{c c c r}
  \toprule
  Name & Dimension ($= d$) & \# of locs.& max.\ length of \ACMVersion{the }logs\\
  \midrule
  \Accc{} & 5,10,15& $d+1$& 1,000 \\
  \Acci{} & 2& 4& 100,000 \\
  \AccdSafety{} & $2,3,\dots,7$& $2^{d}$& 1,000 \\
  \NAV{} & $4$& $18$& 150 \\
  \UnsafeNAV{} & $4$& $18$& 150 \\
  \SharedGasBurner{} & $2$& $6$& 5,000 \\
  \bottomrule
\end{tabular}
\end{table}
\cref{table:benchmarks_summary} summarizes the benchmarks for both scalability and precision experiments.

\subsubsection{Benchmarks for RQ1--3 %
   }
In the scalability experiments to answer RQ1--3, we used the following three benchmarks on adaptive cruise controller: Piecewise-Constant ACC (\Accc); Interval ACC (\Acci{}); and Diagonal ACC (\Accd). %
The bounding models for the benchmarks are mostly taken from the literature (see below); they express keeping the inter-vehicular distance by switching between the normal cruise mode and the recovery mode. This is much like \cref{figure:LHAsystemmodelExample}. 

 The input logs $w$ were randomly generated by following the flows and the transitions of the bounding model $\A$. This means that our SUM is the bounding model itself (\cref{figure:modelAwareMonitoring}). We note that this coincidence is not mandatory. %
\begin{figure}[tbp]
 \begin{minipage}[b]{\ACMVersion{0.6\linewidth}\LongVersion{0.95\linewidth}}
  \centering
  \scalebox{.5}{
  \begin{tikzpicture}[shorten >=1pt, scale=\ACMVersion{2}\LongVersion{3}, xscale=1, auto] %
   \node[location] (cruise) at (0,5) [align=center]{
   $\variabledoti{1} = 8$, $\variabledoti{2} = 8.5$,\\ $\variabledoti{3} = 9,$
   $\variabledoti{4} = 9.5$,\\ $\variabledoti{5} = 10$,\\ $2 \leq x_{i} - x_{i+1} \leq 10$};
   \draw[<-] (cruise) -- node[above, align=center] {
   $\stylevar{\variablei{1}} = 40$,
   $\stylevar{\variablei{2}} = 35$,\\
   $\stylevar{\variablei{3}} = 30$,
   $\stylevar{\variablei{4}} = 25$,\\
   $\stylevar{\variablei{5}} = 20$
   } ++(-2.25cm,0);
   \node[location] (rec1) at (-3.3,3) [align=center]{
   $\variabledoti{1} = 12$, $\variabledoti{2} = 10$,\\ $\variabledoti{3} = 8$,
   $\variabledoti{4} = 9$,\\ $\variabledoti{5} = 10$,\\ $0 \leq x_{i} - x_{i+1} \leq 10$};
   \node[location] (rec2) at (-1.3,3) [align=center]{
   $\variabledoti{1} = 12$, $\variabledoti{2} = 12$,\\ $\variabledoti{3} = 10$,
   $\variabledoti{4} = 8.5$,\\ $\variabledoti{5} = 9.5$,\\ $0 \leq x_{i} - x_{i+1} \leq 10$};
   \node[location] (rec3) at (1.3,3) [align=center]{
   $\variabledoti{1} = 12$, $\variabledoti{2} = 12$,\\
   $\variabledoti{3} = 12$, $\variabledoti{4} = 10$,\\ $\variabledoti{5} = 9$,\\
   $0 \leq x_{i} - x_{i+1} \leq 10$};
   \node[location] (rec4) at (3.3,3) [align=center]{
   $\variabledoti{1} = 12$, $\variabledoti{2} = 12$,\\
   $\variabledoti{3} = 12$, $\variabledoti{4} = 12$,\\
   $\variabledoti{5} = 10$,\\ 
   $0 \leq x_{i} - x_{i+1} \leq 10$};
   \node[location,accepting] (unsafe) at (0,1.5) {unsafe};

   \path[->] 
   (cruise) edge [bend right=23] node[above left] {$\stylevar{\variablei{1}} - \stylevar{\variablei{2}} \leq 4$} (rec1)
   (rec1) edge [bend left=20] node[below right, pos=0.2] {$\stylevar{\variablei{1}} - \stylevar{\variablei{2}} \geq 4$} (cruise)
   (cruise) edge [bend right=20] node[left] {$\stylevar{\variablei{2}} - \stylevar{\variablei{3}} \leq 4$} (rec2)
   (rec2) edge [bend left=15] node[right, pos=0.2] {$\stylevar{\variablei{2}} - \stylevar{\variablei{3}} \geq 4$} (cruise)
   (cruise) edge [bend left=20] node[right] {$\stylevar{\variablei{3}} - \stylevar{\variablei{4}} \leq 4$} (rec3)
   (rec3) edge [bend right=10] node[left, pos=0.2] {$\stylevar{\variablei{3}} - \stylevar{\variablei{4}} \geq 4$} (cruise)
   (cruise) edge [bend left=20] node[below left, pos=0.8] {$\stylevar{\variablei{4}} - \stylevar{\variablei{5}} \leq 4$} (rec4)
   (rec4) edge [bend right=23] node[above right] {$\stylevar{\variablei{4}} - \stylevar{\variablei{5}} \geq 4$} (cruise)
   (rec1) edge [bend right=10] node[below left] {$\stylevar{\variablei{1}} - \stylevar{\variablei{2}} \geq 1$} (unsafe)
   (rec2) edge [bend left=0] node[left=10pt] {$\stylevar{\variablei{2}} - \stylevar{\variablei{3}} \geq 1$} (unsafe)
   (rec3) edge [bend left=0] node[right] {$\stylevar{\variablei{3}} - \stylevar{\variablei{4}} \geq 1$} (unsafe)
   (rec4) edge [bend left=10] node[below right] {$\stylevar{\variablei{4}} - \stylevar{\variablei{5}} \geq 1$} (unsafe)
   ;
  \end{tikzpicture}
  }
  \caption{The LHA of dimension~5 in \Accc{}, where $i \in \{1,2,3,4\}$}%
  \label{figure:benchmark:ACCC}
 \end{minipage}
\ACMVersion{\hfill}%
\LongVersion{%

}
 \begin{minipage}[b]{\ACMVersion{0.39\linewidth}\LongVersion{0.95\linewidth}}
  \centering
  \small
  \scalebox{0.75}{\begin{tikzpicture}[shorten >=1pt, scale=\ACMVersion{1.5}\LongVersion{2.5}, yscale=1, auto] %
                  \node[location] (crs) at (0,0) [align=center]{
                  $\variabledoti{1} = 36, 0 \leq \variabledoti{2},$ \\
                  $|\variabledoti{1} - \variabledoti{2}| \leq 1$\\
                  $1 \leq \stylevar{\variablei{1}} - \stylevar{\variablei{2}}$};
                  \draw[<-] (crs) -- node[above, align=center] {$\stylevar{\variablei{1}} = 3,$ \\ $\stylevar{\variablei{2}} = 0$} ++(-1.5cm,0);
                  \draw[<-] (crs) -- node[above, align=center] {$\variablei{1} = 3,$ \\ $\variablei{2} = 0$} ++(-1.5cm,0);
                  \node[location] (rcv) at (2.5,0) [align=center]{
                  $\variabledoti{1} = 36, 0 \leq \variabledoti{2},$ \\
                  $|\variabledoti{1} - \variabledoti{2} - \varepsilon| \leq 1$\\
                  $\variablei{1} - \variablei{2} \leq 3$};

                  \node[location,accepting] (crs_crash) at (0,-1.75) [align=center]{
                  $\variabledoti{1} = 36, 0 \leq \variabledoti{2},$ \\
                  $|\variabledoti{1} - \variabledoti{2}| \leq 1$\\
                  $1 \leq \variablei{1} - \variablei{2}$};
                  \node[location,accepting] (rcv_crash) at (2.5,-1.75) [align=center]{
                  $\variabledoti{1} = 36, 0 \leq \variabledoti{2},$ \\
                  $|\variabledoti{1} - \variabledoti{2} - \varepsilon| \leq 1$\\
                  $\variablei{1} - \variablei{2} \leq 3$};

                  \path[->] 
                  (crs) edge [bend left=10] node {$\stylevar{\variablei{1}} - \stylevar{\variablei{2}} \leq 2$} (rcv)
                  (rcv) edge [bend left=10] node {$\stylevar{\variablei{1}} - \stylevar{\variablei{2}} \geq 2$} (crs)
                  (crs) edge node {$\stylevar{\variablei{1}} \leq \stylevar{\variablei{2}}$} (crs_crash)
                  (rcv) edge node {$\stylevar{\variablei{1}} \leq \stylevar{\variablei{2}}$} (rcv_crash)
                  (crs_crash) edge [bend left=10] node {$\stylevar{\variablei{1}} - \stylevar{\variablei{2}} \leq 2$} (rcv_crash)
                  (rcv_crash) edge [bend left=10] node {$\stylevar{\variablei{1}} - \stylevar{\variablei{2}} \geq 2$} (crs_crash)
                  ;
                 \end{tikzpicture}}
  \caption{The LHA of dimension~2 in \AccdSafety{}}%
  \label{figure:benchmark:ACCD_RP11}
 \end{minipage}
\end{figure}

\myparagraph{Piecewise-Constant ACC (\Accc{})}
The bounding models for \Accc{} are
taken from~\cite{DBLP:conf/cpsweek/Bu0S19}.
This model keeps the inter-vehicular distance by switching between the normal cruise behavior and the recovery behavior.
Namely, when the distance between the cars $x_i$ and $x_{i+1}$ becomes shorter than or equal to 4, the model can start the recovery behavior to make the distance longer.
When the distance between the cars $x_i$ and $x_{i+1}$ becomes larger than or equal to 4, the model can go back to the normal cruise behavior.
The accepting locations of \Accc{} happen to be unreachable, thus there will be no alerts. This is no problem for the scalability evaluation. %
In \Accc{}, the velocities of the cars at each location are constant.
\Accc{} contains three LHAs of dimensions 5, 10, and 15. \cref{figure:benchmark:ACCC} %
 is the LHA of dimension~5.

\myparagraph{Interval ACC  (\Acci{})}
 \Acci{} is a variant of the \Accc{} benchmark.
In the bounding model for \Acci{}, the velocities of the cars at each location are nondeterministically chosen from the given interval. It is shown in 
\cref{figure:collision:product}. 

\myparagraph{Diagonal ACC (\Accd{})}
The bounding models for \Accd{} are taken from~\cite{DBLP:conf/cpsweek/FrehseAABBCGGMM19}.
In \Accd{}, the velocities of the cars at each location are constrained by the following diagonal constraints (\ie{} constraints of the form $\variablei{i} - \variablei{j} \compOp n$, $n \in \setZ$): when recovering the distance between $x_i$ and $x_{i+1}$, 
we have $|\dot{x}_i - \variabledoti{i+1} - \varepsilon| < 1$, where $\varepsilon$ is the slow-down parameter; otherwise, we have $|\dot{x}_i - \variabledoti{i+1}| < 1$.
We used $\varepsilon = 0.9$ and $\varepsilon = 2.0$.
The safety specification in \AccdSafety{} is $x_i > x_{i-1}$ for each $i$.
\AccdSafety{} contains six LHAs of dimensions from 2 to 7.
The LHA of dimension~2 is shown in \cref{figure:benchmark:ACCD_RP11}.

\subsubsection{Benchmarks for RQ4}
\begin{figure}[tbp]
 \centering
 \begin{minipage}[b]{\ACMVersion{.48\linewidth}\LongVersion{.95\linewidth}}
  \begin{tikzpicture}[shorten >=1pt, scale=\ACMVersion{0.57}\LongVersion{1}, auto, every node/.style={transform shape}, node distance=0.5cm]
   \node[location] (loc1_1) [align=center]{
   $x \in [0,1]$,\\ $y \in [0,1]$,\\
   $v_x = \dot{x}$, $v_y = \dot{y}$\\
   $(\dot{v}_x, \dot{v}_y)^\top = \dot{\mathbf{v}}_2$};
   \node[location] (loc2_1) [right=of loc1_1] [align=center]{
   $x \in [1,2]$,\\ $y \in [0,1]$,\\
   $v_x = \dot{x}$, $v_y = \dot{y}$\\
   $(\dot{v}_x, \dot{v}_y)^\top = \dot{\mathbf{v}}_2$};
   \node[location] (loc3_1) [right=of loc2_1] [align=center]{
   $x \in [2,3]$,\\ $y \in [0,1]$,\\ 
   $v_x = \dot{x}$, $v_y = \dot{y}$\\
   $(\dot{v}_x, \dot{v}_y)^\top = \dot{\mathbf{v}}_0$};

   \node[location] (loc1_2) [above=of loc1_1] [align=center]{
   $x \in [0,1]$,\\ $y \in [1,2]$,\\ 
   $v_x = \dot{x}$, $v_y = \dot{y}$\\
   $(\dot{v}_x, \dot{v}_y)^\top = \dot{\mathbf{v}}_4$};
   \node[location] (loc2_2) [above=of loc2_1] [align=center]{
   $x \in [1,2]$,\\ $y \in [1,2]$,\\ 
   $v_x = \dot{x}$, $v_y = \dot{y}$\\
   $(\dot{v}_x, \dot{v}_y)^\top = \dot{\mathbf{v}}_3$};
   \node[location] (loc3_2) [above=of loc3_1] [align=center]{
   $x \in [2,3]$,\\ $y \in [1,2]$,\\ 
   $v_x = \dot{x}$, $v_y = \dot{y}$\\
   $(\dot{v}_x, \dot{v}_y)^\top = \dot{\mathbf{v}}_4$};

   \node[location] (loc1_3) [above=of loc1_2] [align=center]{
   $x \in [0,1]$,\\ $y \in [2,3]$,\\ 
   $v_x = \dot{x}$, $v_y = \dot{y}$\\
   $(\dot{v}_x, \dot{v}_y)^\top = \dot{\mathbf{v}}_3$};
   \node[location] (loc2_3) [above=of loc2_2] [align=center]{
   $x \in [1,2]$,\\ $y \in [2,3]$,\\ 
   $v_x = \dot{x}$, $v_y = \dot{y}$\\
   $(\dot{v}_x, \dot{v}_y)^\top = \dot{\mathbf{v}}_2$};
   \node[location] (loc3_3) [above=of loc3_2] [align=center]{
   $x \in [2,3]$,\\ $y \in [2,3]$,\\ 
   $v_x = \dot{x}$, $v_y = \dot{y}$\\
   $(\dot{v}_x, \dot{v}_y)^\top = \dot{\mathbf{v}}_3$};

   \draw[<-] (loc1_3) -- node[above, align=center] {
   $x \in [0,1]$,\\
   $y \in [2,3]$,\\
   $v_x \in [-1,1]$,\\
   $v_y \in [-1,1]$
   } ++(-3.25cm,0);

   \path[->] 
   (loc1_1) edge[bend left] node[left] {$y \geq 1$} (loc1_2)
   (loc1_2) edge[bend left] node[left] {$y \leq 1$} (loc1_1)
   (loc2_1) edge[bend left] node[left] {$y \geq 1$} (loc2_2)
   (loc2_2) edge[bend left] node[left] {$y \leq 1$} (loc2_1)
   (loc3_1) edge[bend left] node[left] {$y \geq 1$} (loc3_2)
   (loc3_2) edge[bend left] node[left] {$y \leq 1$} (loc3_1)

   (loc1_2) edge[bend left] node[left] {$y \geq 2$} (loc1_3)
   (loc1_3) edge[bend left] node[left] {$y \leq 2$} (loc1_2)
   (loc2_2) edge[bend left] node[left] {$y \geq 2$} (loc2_3)
   (loc2_3) edge[bend left] node[left] {$y \leq 2$} (loc2_2)
   (loc3_2) edge[bend left] node[left] {$y \geq 2$} (loc3_3)
   (loc3_3) edge[bend left] node[left] {$y \leq 2$} (loc3_2)

   (loc1_1) edge[bend left] node[above] {$x \geq 1$} (loc2_1)
   (loc2_1) edge[bend left] node[below] {$x \leq 1$} (loc1_1)
   (loc1_2) edge[bend left] node[above] {$x \geq 1$} (loc2_2)
   (loc2_2) edge[bend left] node[below] {$x \leq 1$} (loc1_2)
   (loc1_3) edge[bend left] node[above] {$x \geq 1$} (loc2_3)
   (loc2_3) edge[bend left] node[below] {$x \leq 1$} (loc1_3)

   (loc2_1) edge[bend left] node[above] {$x \geq 2$} (loc3_1)
   (loc3_1) edge[bend left] node[below] {$x \leq 2$} (loc2_1)
   (loc2_2) edge[bend left] node[above] {$x \geq 2$} (loc3_2)
   (loc3_2) edge[bend left] node[below] {$x \leq 2$} (loc2_2)
   (loc2_3) edge[bend left] node[above] {$x \geq 2$} (loc3_3)
   (loc3_3) edge[bend left] node[below] {$x \leq 2$} (loc2_3)
   ;
  \end{tikzpicture}
                                                                                                              \caption[dummy caption text]{The affine hybrid automaton for the original model in \NAV{}, where $A = \left(\begin{array}{c c} -1.2& 0.1\\ 0.1& -1.2\end{array}\right)$, $v_{d,i} = (\sin(i\pi/4), \cos(i\pi/4))^\top$, $\dot{\mathbf{v}}_i = A \bigl((v_x, v_y)^\top - v_{d,i}\bigr)$ for $i \in \{2,3,4\}$, and $\dot{\mathbf{v}}_0 = A (v_x, v_y)^\top$}%
  \label{figure:benchmark:NAV}
 \end{minipage}
\ACMVersion{\hfill}%
\LongVersion{%

}
 \begin{minipage}[b]{\ACMVersion{.5\linewidth}\LongVersion{.95\linewidth}}
  \centering
  \begin{tikzpicture}[shorten >=1pt, scale=\ACMVersion{0.6}\LongVersion{1}, auto, every node/.style={transform shape}, node distance=1.3cm]
   \node[location] (loc1) [align=center]{
   $\dot{x}_1 = \mathrm{ON}_1$,\\
   $\dot{x}_2 = \mathrm{OFF}_2$,\\
   $x_1 \in [0,100]$,\\
   $x_2 \in [0,100]$};
   \node[location] (loc0) [right=of loc1] [align=center]{
   $\dot{x}_1 = \mathrm{OFF}_1$,\\
   $\dot{x}_2 = \mathrm{OFF}_2$,\\
   $x_1 \in [80,100]$,\\
   $x_2 \in [80,100]$};
   \node[location] (loc2) [right=of loc0] [align=center]{
   $\dot{x}_1 = \mathrm{OFF}_1$,\\
   $\dot{x}_2 = \mathrm{ON}_2$,\\
   $x_1 \in [0,100]$,\\
   $x_2 \in [0,100]$};

   \draw[<-] (loc1) -- node[above, align=center] {
   $x_1 = 0$,\\
   $x_2 = 50$
   } ++(-2.5cm,0);

   \path[->] 
   (loc1) edge[bend left=40] node[above] {$(x_1 \geq 100 \land x_2 \leq 80) \lor (x_1 \geq 20 \land x_2 \leq 0)$} (loc2)
   (loc1) edge[bend left] node[below] {$x_1 \geq 100$} (loc0)
   (loc0) edge[bend left] node[above] {$x_1 \leq 80$} (loc1)
   (loc2) edge[bend left] node[above] {$x_2 \geq 100$} (loc0)
   (loc0) edge[bend left] node[below] {$x_2 \leq 80$} (loc2)
   (loc2) edge[bend left=40] node[below] {$(x_2 \geq 100 \land x_1 \leq 80) \lor (x_2 \geq 20 \land x_1 \leq 0)$} (loc1)
   ;
  \end{tikzpicture}
  \caption{The affine hybrid automaton for the original model in \SharedGasBurner{}, where $h = 2$, $a = 0.01$, $b = 0.005$, $\mathrm{ON}_1 = h - ax_1 + bx_2$, $\mathrm{ON}_2 = h - ax_2 + bx_1$, $\mathrm{OFF}_1 = - ax_1 + bx_2$, and $\mathrm{OFF}_2 = - ax_2 + bx_1$.}%
  \label{figure:benchmark:SharedGasBurner}
 \end{minipage}
\end{figure}
In the precision experiments to answer RQ4, we used the robot navigation benchmark (\NAV), its variant with unsafe behavior (\UnsafeNAV), and the shared gas-burners benchmark (\SharedGasBurner{}).
The original system models are \emph{affine} hybrid automata~\cite{DHR05}, and 
the bounding models are constructed by the \emph{projection} overapproximation in~\cite[Section~3.2]{Frehse2008}.
We use \cref{def:Mlnotvarphi} to construct $\A_{\lnot\varphi}$.
The input logs $w$ were generated by a simulation of a Simulink implementation of the original system model.

\myparagraph{Navigation (\NAV{})}
The original model  for \NAV{} is taken from~\cite{FI04}.
To obtain an LHA $\A$, we used the projection overapproximation in~\cite{Frehse2008} with an additional invariant $v_x \in [-1,1], v_y \in [-1,1]$. 
We constructed the bounding model $\A_{\lnot\varphi}$ by the construction in \cref{def:Mlnotvarphi}, where
the monitored safety specification $\varphi$ is $v_x \in [-1,1] \land v_y \in [-1,1]$. 
We note that the bounding model $\A_{\neg \varphi}$ does not contain the invariant $v_x \in [-1,1], v_y \in [-1,1]$, and the accepting locations for the safety violation are reachable.
The affine hybrid automaton in \cref{figure:benchmark:NAV} represents the original model of \NAV{}. 

\myparagraph{Unsafe Navigation (\UnsafeNAV) Benchmark}
 \UnsafeNAV{} is a variant of the \NAV{} benchmark with potential safety violation.
 The differences are as follows: 
 \begin{inparaenum}
  \item All the acceleration $\dot{\mathbf{v}}_i$ are multiplied by~1.1;
  \item The flow in the center location is $\dot{\mathbf{v}}_2$ instead of~$\dot{\mathbf{v}}_3$.
 \end{inparaenum}

\myparagraph{Shared Gas-Burner (\SharedGasBurner) Benchmark} 
The original model  for \SharedGasBurner{} is taken from~\cite{DHR05}.
In order to obtain an LHA $\A$, we used the projection overapproximation in~\cite{Frehse2008}.
We constructed the bounding model $\A_{\lnot\varphi}$ by the construction in \cref{def:Mlnotvarphi}, where
the monitored safety specification $\varphi$ is $x_1 \in [0,100.1) \land x_2 \in [0,100.1)$. %
We note that all the upper bounds $100$ of the invariants in bounding model $\A_{\neg \varphi}$ are replaced with $150$, and thus, the accepting locations for the safety violation are reachable.
The affine hybrid automaton in \cref{figure:benchmark:SharedGasBurner} represents the original model of \SharedGasBurner{}.
\subsection{Experiments}\label{subsec:experiments}

\subsubsection{Scalability experiments}\label{subsubsec:scalability_experiments}
For the reachability analysis in the procedure presented in \cref{section:algorithm1}, we used \phaverlite{} 0.2.1.
\phaverlite{} relies on PPLite~\cite{BZ19} to compute symbolic states.
We implemented an OCaml program and a Python script to construct a \phaverlite{} model from an LHA $\A$ and a timed quantitative word~$\word$. %
For the procedure presented in \cref{section:algorithm2}, 
we implemented \masakiTool{} in C++ with Parma Polyhedra Library (PPL)~\cite{BHZ08} and compiled using GCC 7.4.0.
In both \phaverlite{} and \masakiTool{}, closed convex polyhedra are used to analyze the reachability~\cite{BZ19}.

Since the difficulty of the $\Lmon$ membership problem depends on the given timed quantitative word $\word$,
we randomly generated 30 logs for each experiment setting and measured the average of the execution time.
The sampling interval of $\word$ is from 1 to 5 seconds, uniformly distributed.
The timeout is 10 minutes.
We conducted the experiments on an Amazon EC2 c4.large instance (2.9\,GHz Intel Xeon E5-2666 v3, 2 vCPUs, and 3.75\,GiB RAM) that runs Ubuntu 18.04 LTS (64\,bit).
\cref{table:result:accc,table:result:acci,table:result:accd} summarize the experiment results.

\begin{table}[tbp]
 \begin{minipage}[c]{.6\linewidth}
  \caption{Experiment result on \Accc{} [sec.]}%
  \label{table:result:accc}
  \begin{minipage}[c]{.47\linewidth}
   \centering
   \scriptsize
   \begin{tabular}[t]{c c c c}
\toprule
dim.&len.&\phaverlite{}&\masakiTool{}\\
\midrule
5&10&500.02&\tbcolor0.46\\
5&25&540.02&\tbcolor0.86\\
5&50&480.06&\tbcolor0.78\\
5&75&580.01&\tbcolor0.79\\
5&100&520.07&\tbcolor1.21\\
5&200&560.06&\tbcolor1.00\\
5&300&540.13&\tbcolor1.37\\
5&400&\TO{}&\tbcolor1.77\\
\bottomrule
\end{tabular}

  \end{minipage}
  \ACMVersion{\hfill}\LongVersion{%

  }
  \begin{minipage}[c]{.47\linewidth}
   \centering
   \scriptsize
   \begin{tabular}[t]{c c c c}
\toprule
dim.&len.&\phaverlite{}&\masakiTool{}\\
\midrule
5&500&480.41&\tbcolor2.27\\
5&600&500.40&\tbcolor2.12\\
5&700&520.40&\tbcolor2.37\\
5&800&540.33&\tbcolor2.58\\
5&900&580.12&\tbcolor2.76\\
5&1000&560.26&\tbcolor3.26\\
10&10&267.60&\tbcolor204.79\\
15&10&\TO{}&\TO{}\\
\bottomrule
\end{tabular}

  \end{minipage}
 \end{minipage}
 \hfill
 \begin{minipage}[c]{.39\linewidth}
  \caption{Experiment result on \Acci{} [sec.]}%
  \label{table:result:acci}
   \centering
   \scriptsize
   \begin{tabular}[t]{c c c}
\toprule
len.&\phaverlite{}&\masakiTool{}\\
\midrule
10,000&13.42&\tbcolor2.18\\
20,000&40.29&\tbcolor4.29\\
30,000&83.39&\tbcolor6.43\\
40,000&141.55&\tbcolor8.56\\
50,000&225.23&\tbcolor10.76\\
60,000&116.67&\tbcolor12.86\\
70,000&26.60&\tbcolor15.00\\
80,000&227.29&\tbcolor17.14\\
90,000&259.17&\tbcolor19.28\\
10,0000&227.12&\tbcolor21.45\\
\bottomrule
\end{tabular}

 \end{minipage}
\end{table}
\begin{table}[tbp]
 \caption{Experiment result on \Accd{} [sec.]}%
 \label{table:result:accd}
 \begin{minipage}{.49\linewidth}
  \centering
  \LongVersion{\setlength{\tabcolsep}{2pt}}
  \scalebox{\ACMVersion{0.71}\LongVersion{.65}}{
  \begin{tabular}[t]{c c c c c c}
   \toprule
   &  & \multicolumn{2}{c}{$\varepsilon = 0.9$}& \multicolumn{2}{c}{$\varepsilon = 2.0$} \\ 
   \cmidrule{3-6}
   dim.&len. & \phaverlite{} &\masakiTool{} & \phaverlite{} &\masakiTool{}\\
   \midrule
   2&25&0.03&\tbcolor0.00&0.03&\tbcolor0.00\\
2&50&0.04&\tbcolor0.00&0.04&\tbcolor0.00\\
2&75&0.05&\tbcolor0.01&0.05&\tbcolor0.00\\
2&100&0.06&\tbcolor0.01&0.06&\tbcolor0.01\\
2&200&0.12&\tbcolor0.02&0.11&\tbcolor0.01\\
2&300&0.16&\tbcolor0.02&0.16&\tbcolor0.02\\
2&400&0.21&\tbcolor0.03&0.20&\tbcolor0.02\\
2&500&0.27&\tbcolor0.04&0.26&\tbcolor0.03\\
2&600&0.32&\tbcolor0.04&0.30&\tbcolor0.03\\
2&700&0.35&\tbcolor0.05&0.36&\tbcolor0.04\\
2&800&0.40&\tbcolor0.05&0.40&\tbcolor0.05\\
2&900&0.46&\tbcolor0.06&0.45&\tbcolor0.05\\
2&1000&0.51&\tbcolor0.07&0.50&\tbcolor0.06\\
\midrule
3&10&0.05&\tbcolor0.02&0.04&\tbcolor0.01\\
3&25&0.08&\tbcolor0.03&0.08&\tbcolor0.02\\
3&50&0.14&\tbcolor0.04&0.12&\tbcolor0.03\\
3&75&0.19&\tbcolor0.05&0.18&\tbcolor0.03\\
3&100&0.23&\tbcolor0.05&0.22&\tbcolor0.04\\
3&200&0.44&\tbcolor0.07&0.42&\tbcolor0.05\\
3&300&0.65&\tbcolor0.09&0.62&\tbcolor0.06\\
3&400&0.84&\tbcolor0.12&0.82&\tbcolor0.09\\
3&500&1.07&\tbcolor0.15&1.00&\tbcolor0.10\\
3&600&1.25&\tbcolor0.16&1.24&\tbcolor0.11\\
3&700&1.46&\tbcolor0.16&1.42&\tbcolor0.13\\
3&800&1.73&\tbcolor0.20&1.61&\tbcolor0.13\\
3&900&1.84&\tbcolor0.19&1.83&\tbcolor0.16\\
3&1000&2.04&\tbcolor0.20&2.00&\tbcolor0.17\\
\midrule
4&10&\tbcolor0.28&0.35&\tbcolor0.22&0.42\\
4&25&0.46&0.46&0.34&\tbcolor0.22\\
4&50&0.66&\tbcolor0.42&0.59&\tbcolor0.38\\
4&75&0.92&\tbcolor0.47&0.82&\tbcolor0.36\\
4&100&1.21&\tbcolor0.64&1.05&\tbcolor0.35\\
4&200&2.13&\tbcolor0.82&2.02&\tbcolor0.87\\
\bottomrule
  \end{tabular}}
 \end{minipage}
 \hfill
 \begin{minipage}{.49\linewidth}
  \centering
  \LongVersion{\setlength{\tabcolsep}{2pt}}
  \scalebox{\ACMVersion{0.71}\LongVersion{.65}}{
  \begin{tabular}[t]{c c c c c c}
   \toprule
   &  & \multicolumn{2}{c}{$\varepsilon = 0.9$}& \multicolumn{2}{c}{$\varepsilon = 2.0$} \\ 
   \cmidrule{3-6}
   dim.&len. & \phaverlite{} &\masakiTool{} & \phaverlite{} &\masakiTool{}\\
   \midrule
   4&300&2.98&\tbcolor0.74&2.81&\tbcolor0.40\\
4&400&3.98&\tbcolor0.92&3.76&\tbcolor0.62\\
4&500&4.79&\tbcolor0.85&4.69&\tbcolor0.65\\
4&600&5.71&\tbcolor0.86&5.63&\tbcolor0.68\\
4&700&6.60&\tbcolor0.81&6.53&\tbcolor0.93\\
4&800&7.67&\tbcolor1.15&7.28&\tbcolor0.97\\
4&900&8.39&\tbcolor1.04&\tbcolor0.02&0.87\\
4&1000&9.26&\tbcolor1.18&9.15&\tbcolor0.89\\
\midrule
5&10&\tbcolor2.65&7.51&\tbcolor22.38&29.33\\
5&25&\tbcolor3.84&10.13&\tbcolor3.05&9.22\\
5&50&\tbcolor5.25&9.68&\tbcolor12.93&31.06\\
5&75&\tbcolor7.13&14.41&\tbcolor18.32&29.93\\
5&100&\tbcolor8.14&12.87&\tbcolor7.58&21.13\\
5&200&11.39&\tbcolor8.94&\tbcolor29.52&36.17\\
5&300&17.33&\tbcolor15.03&15.38&\tbcolor14.56\\
5&400&20.44&\tbcolor9.81&19.47&\tbcolor15.34\\
5&500&24.87&\tbcolor11.72&36.83&\tbcolor34.12\\
5&600&28.78&\tbcolor8.79&28.47&\tbcolor16.37\\
5&700&35.14&\tbcolor17.05&32.86&\tbcolor16.10\\
5&800&37.76&\tbcolor12.41&41.97&\tbcolor39.07\\
5&900&42.22&\tbcolor9.83&48.26&\tbcolor46.68\\
5&1000&47.11&\tbcolor11.55&57.49&\tbcolor54.44\\
\midrule
6&10&\tbcolor47.42&221.55&\tbcolor80.26&428.66\\
6&25&\tbcolor41.18&165.63&\tbcolor76.40&510.55\\
6&50&\tbcolor65.36&243.97&\tbcolor120.77&518.01\\
6&75&\tbcolor58.21&173.46&\tbcolor125.61&480.19\\
6&100&\tbcolor87.47&209.53&\tbcolor131.88&421.59\\
\midrule
7&10&\tbcolor525.07&560.69&\TO{}&\tbcolor594.05\\
7&25&\tbcolor489.35&559.56&\TO{}&\tbcolor526.14\\
7&50&\tbcolor514.10&562.68&\TO{}&\tbcolor566.01\\
7&75&\TO{}&\tbcolor588.23&\TO{}&\tbcolor583.49\\
7&100&\TO{}&\tbcolor577.29&\TO{}&\tbcolor578.39\\
\bottomrule
  \end{tabular}}
 \end{minipage}
\end{table}

\myparagraph{RQ1: worthwhileness of a dedicated implementation}
In \cref{table:result:accc,table:result:acci,table:result:accd}, we observe that \masakiTool{} tends to outperform \phaverlite{}.
Especially, in \cref{table:result:accc}, %
we observe that for \Accc{}, %
\masakiTool{} performed drastically faster than \phaverlite{} for dimension~5. %
This is because \phaverlite{} is not a specific tool for the  $\Lmon$ membership problem but a tool for reachability analysis in general.
Thus, despite the engineering cost for the implementation that is not small, a dedicated solver, \ie{} \masakiTool{}, is worthwhile.

However, in \cref{table:result:accd},
we also observe that in \Accd{}, 
when the dimension of the LHA is relatively large and the timed quantitative word is not too long,
\phaverlite{} often outperformed \masakiTool{}.
This is because of the %
 optimized reachability analysis algorithm in \phaverlite{}.
Optimization of the reachability analysis algorithm in \masakiTool{}, \eg{} utilizing the techniques in~\cite{BZ19,CAF11}, is future work.

\begin{figure}[tbp]
 \begin{minipage}[b]{0.45\linewidth}
  \centering 
  \scalebox{0.37}{\begin{tikzpicture}[gnuplot]
\tikzset{every node/.append style={font={\fontsize{22.0pt}{26.4pt}\selectfont}}}
\path (0.000,0.000) rectangle (12.500,8.750);
\gpcolor{color=gp lt color border}
\gpsetlinetype{gp lt border}
\gpsetdashtype{gp dt solid}
\gpsetlinewidth{1.00}
\draw[gp path] (2.905,2.169)--(3.085,2.169);
\draw[gp path] (11.284,2.169)--(11.104,2.169);
\node[gp node right,font={\fontsize{32.0pt}{38.4pt}\selectfont}] at (2.500,2.169) {$0$};
\draw[gp path] (2.905,3.012)--(3.085,3.012);
\draw[gp path] (11.284,3.012)--(11.104,3.012);
\node[gp node right,font={\fontsize{32.0pt}{38.4pt}\selectfont}] at (2.500,3.012) {$0.5$};
\draw[gp path] (2.905,3.855)--(3.085,3.855);
\draw[gp path] (11.284,3.855)--(11.104,3.855);
\node[gp node right,font={\fontsize{32.0pt}{38.4pt}\selectfont}] at (2.500,3.855) {$1$};
\draw[gp path] (2.905,4.698)--(3.085,4.698);
\draw[gp path] (11.284,4.698)--(11.104,4.698);
\node[gp node right,font={\fontsize{32.0pt}{38.4pt}\selectfont}] at (2.500,4.698) {$1.5$};
\draw[gp path] (2.905,5.542)--(3.085,5.542);
\draw[gp path] (11.284,5.542)--(11.104,5.542);
\node[gp node right,font={\fontsize{32.0pt}{38.4pt}\selectfont}] at (2.500,5.542) {$2$};
\draw[gp path] (2.905,6.385)--(3.085,6.385);
\draw[gp path] (11.284,6.385)--(11.104,6.385);
\node[gp node right,font={\fontsize{32.0pt}{38.4pt}\selectfont}] at (2.500,6.385) {$2.5$};
\draw[gp path] (2.905,7.228)--(3.085,7.228);
\draw[gp path] (11.284,7.228)--(11.104,7.228);
\node[gp node right,font={\fontsize{32.0pt}{38.4pt}\selectfont}] at (2.500,7.228) {$3$};
\draw[gp path] (2.905,8.071)--(3.085,8.071);
\draw[gp path] (11.284,8.071)--(11.104,8.071);
\node[gp node right,font={\fontsize{32.0pt}{38.4pt}\selectfont}] at (2.500,8.071) {$3.5$};
\draw[gp path] (2.905,2.169)--(2.905,2.349);
\draw[gp path] (2.905,8.071)--(2.905,7.891);
\node[gp node center,font={\fontsize{32.0pt}{38.4pt}\selectfont}] at (2.905,1.491) {$0$};
\draw[gp path] (3.743,2.169)--(3.743,2.349);
\draw[gp path] (3.743,8.071)--(3.743,7.891);
\node[gp node center,font={\fontsize{32.0pt}{38.4pt}\selectfont}] at (3.743,1.491) {$1$};
\draw[gp path] (4.581,2.169)--(4.581,2.349);
\draw[gp path] (4.581,8.071)--(4.581,7.891);
\node[gp node center,font={\fontsize{32.0pt}{38.4pt}\selectfont}] at (4.581,1.491) {$2$};
\draw[gp path] (5.419,2.169)--(5.419,2.349);
\draw[gp path] (5.419,8.071)--(5.419,7.891);
\node[gp node center,font={\fontsize{32.0pt}{38.4pt}\selectfont}] at (5.419,1.491) {$3$};
\draw[gp path] (6.257,2.169)--(6.257,2.349);
\draw[gp path] (6.257,8.071)--(6.257,7.891);
\node[gp node center,font={\fontsize{32.0pt}{38.4pt}\selectfont}] at (6.257,1.491) {$4$};
\draw[gp path] (7.095,2.169)--(7.095,2.349);
\draw[gp path] (7.095,8.071)--(7.095,7.891);
\node[gp node center,font={\fontsize{32.0pt}{38.4pt}\selectfont}] at (7.095,1.491) {$5$};
\draw[gp path] (7.932,2.169)--(7.932,2.349);
\draw[gp path] (7.932,8.071)--(7.932,7.891);
\node[gp node center,font={\fontsize{32.0pt}{38.4pt}\selectfont}] at (7.932,1.491) {$6$};
\draw[gp path] (8.770,2.169)--(8.770,2.349);
\draw[gp path] (8.770,8.071)--(8.770,7.891);
\node[gp node center,font={\fontsize{32.0pt}{38.4pt}\selectfont}] at (8.770,1.491) {$7$};
\draw[gp path] (9.608,2.169)--(9.608,2.349);
\draw[gp path] (9.608,8.071)--(9.608,7.891);
\node[gp node center,font={\fontsize{32.0pt}{38.4pt}\selectfont}] at (9.608,1.491) {$8$};
\draw[gp path] (10.446,2.169)--(10.446,2.349);
\draw[gp path] (10.446,8.071)--(10.446,7.891);
\node[gp node center,font={\fontsize{32.0pt}{38.4pt}\selectfont}] at (10.446,1.491) {$9$};
\draw[gp path] (11.284,2.169)--(11.284,2.349);
\draw[gp path] (11.284,8.071)--(11.284,7.891);
\node[gp node center,font={\fontsize{32.0pt}{38.4pt}\selectfont}] at (11.284,1.491) {$10$};
\draw[gp path] (2.905,8.071)--(2.905,2.169)--(11.284,2.169)--(11.284,8.071)--cycle;
\node[gp node center,rotate=-270,font={\fontsize{30.0pt}{36.0pt}\selectfont}] at (0.642,5.120) {Exec. time [sec.]};
\node[gp node center,font={\fontsize{30.0pt}{36.0pt}\selectfont}] at (7.094,0.474) {Length [$ \times 100$]};
\node[gp node right,font={\fontsize{22.0pt}{26.4pt}\selectfont}] at (8.269,7.552) {\Accc{}, dim 5};
\gpcolor{rgb color={0.580,0.000,0.827}}
\gpsetlinewidth{10.00}
\draw[gp path] (8.674,7.552)--(10.474,7.552);
\draw[gp path] (2.989,2.945)--(3.114,3.619)--(3.324,3.484)--(3.533,3.501)--(3.743,4.209)%
  --(4.581,3.855)--(5.419,4.479)--(6.257,5.154)--(7.095,5.997)--(7.932,5.744)--(8.770,6.165)%
  --(9.608,6.520)--(10.446,6.823)--(11.284,7.666);
\gpsetpointsize{12.00}
\gppoint{gp mark 7}{(2.989,2.945)}
\gppoint{gp mark 7}{(3.114,3.619)}
\gppoint{gp mark 7}{(3.324,3.484)}
\gppoint{gp mark 7}{(3.533,3.501)}
\gppoint{gp mark 7}{(3.743,4.209)}
\gppoint{gp mark 7}{(4.581,3.855)}
\gppoint{gp mark 7}{(5.419,4.479)}
\gppoint{gp mark 7}{(6.257,5.154)}
\gppoint{gp mark 7}{(7.095,5.997)}
\gppoint{gp mark 7}{(7.932,5.744)}
\gppoint{gp mark 7}{(8.770,6.165)}
\gppoint{gp mark 7}{(9.608,6.520)}
\gppoint{gp mark 7}{(10.446,6.823)}
\gppoint{gp mark 7}{(11.284,7.666)}
\gppoint{gp mark 7}{(9.574,7.552)}
\gpcolor{color=gp lt color border}
\gpsetlinewidth{1.00}
\draw[gp path] (2.905,8.071)--(2.905,2.169)--(11.284,2.169)--(11.284,8.071)--cycle;
\gpdefrectangularnode{gp plot 1}{\pgfpoint{2.905cm}{2.169cm}}{\pgfpoint{11.284cm}{8.071cm}}
\end{tikzpicture}}
 \end{minipage}
 \hfill
 \begin{minipage}[b]{0.45\linewidth}
  \centering 
  \scalebox{0.37}{\begin{tikzpicture}[gnuplot]
\tikzset{every node/.append style={font={\fontsize{22.0pt}{26.4pt}\selectfont}}}
\path (0.000,0.000) rectangle (12.500,8.750);
\gpcolor{color=gp lt color border}
\gpsetlinetype{gp lt border}
\gpsetdashtype{gp dt solid}
\gpsetlinewidth{1.00}
\draw[gp path] (2.500,2.169)--(2.680,2.169);
\draw[gp path] (11.284,2.169)--(11.104,2.169);
\node[gp node right,font={\fontsize{32.0pt}{38.4pt}\selectfont}] at (2.095,2.169) {$0$};
\draw[gp path] (2.500,3.153)--(2.680,3.153);
\draw[gp path] (11.284,3.153)--(11.104,3.153);
\node[gp node right,font={\fontsize{32.0pt}{38.4pt}\selectfont}] at (2.095,3.153) {$4$};
\draw[gp path] (2.500,4.136)--(2.680,4.136);
\draw[gp path] (11.284,4.136)--(11.104,4.136);
\node[gp node right,font={\fontsize{32.0pt}{38.4pt}\selectfont}] at (2.095,4.136) {$8$};
\draw[gp path] (2.500,5.120)--(2.680,5.120);
\draw[gp path] (11.284,5.120)--(11.104,5.120);
\node[gp node right,font={\fontsize{32.0pt}{38.4pt}\selectfont}] at (2.095,5.120) {$12$};
\draw[gp path] (2.500,6.104)--(2.680,6.104);
\draw[gp path] (11.284,6.104)--(11.104,6.104);
\node[gp node right,font={\fontsize{32.0pt}{38.4pt}\selectfont}] at (2.095,6.104) {$16$};
\draw[gp path] (2.500,7.087)--(2.680,7.087);
\draw[gp path] (11.284,7.087)--(11.104,7.087);
\node[gp node right,font={\fontsize{32.0pt}{38.4pt}\selectfont}] at (2.095,7.087) {$20$};
\draw[gp path] (2.500,8.071)--(2.680,8.071);
\draw[gp path] (11.284,8.071)--(11.104,8.071);
\node[gp node right,font={\fontsize{32.0pt}{38.4pt}\selectfont}] at (2.095,8.071) {$24$};
\draw[gp path] (2.500,2.169)--(2.500,2.349);
\draw[gp path] (2.500,8.071)--(2.500,7.891);
\node[gp node center,font={\fontsize{32.0pt}{38.4pt}\selectfont}] at (2.500,1.491) {$0$};
\draw[gp path] (4.257,2.169)--(4.257,2.349);
\draw[gp path] (4.257,8.071)--(4.257,7.891);
\node[gp node center,font={\fontsize{32.0pt}{38.4pt}\selectfont}] at (4.257,1.491) {$20$};
\draw[gp path] (6.014,2.169)--(6.014,2.349);
\draw[gp path] (6.014,8.071)--(6.014,7.891);
\node[gp node center,font={\fontsize{32.0pt}{38.4pt}\selectfont}] at (6.014,1.491) {$40$};
\draw[gp path] (7.770,2.169)--(7.770,2.349);
\draw[gp path] (7.770,8.071)--(7.770,7.891);
\node[gp node center,font={\fontsize{32.0pt}{38.4pt}\selectfont}] at (7.770,1.491) {$60$};
\draw[gp path] (9.527,2.169)--(9.527,2.349);
\draw[gp path] (9.527,8.071)--(9.527,7.891);
\node[gp node center,font={\fontsize{32.0pt}{38.4pt}\selectfont}] at (9.527,1.491) {$80$};
\draw[gp path] (11.284,2.169)--(11.284,2.349);
\draw[gp path] (11.284,8.071)--(11.284,7.891);
\node[gp node center,font={\fontsize{32.0pt}{38.4pt}\selectfont}] at (11.284,1.491) {$100$};
\draw[gp path] (2.500,8.071)--(2.500,2.169)--(11.284,2.169)--(11.284,8.071)--cycle;
\node[gp node center,rotate=-270,font={\fontsize{30.0pt}{36.0pt}\selectfont}] at (0.642,5.120) {Exec. time [sec.]};
\node[gp node center,font={\fontsize{30.0pt}{36.0pt}\selectfont}] at (6.892,0.474) {Length [$\times 1,000$]};
\node[gp node right,font={\fontsize{30.0pt}{36.0pt}\selectfont}] at (7.387,7.429) {\Acci{}};
\gpcolor{rgb color={0.580,0.000,0.827}}
\gpsetlinewidth{10.00}
\draw[gp path] (7.939,7.429)--(10.327,7.429);
\draw[gp path] (3.378,2.705)--(4.257,3.224)--(5.135,3.750)--(6.014,4.274)--(6.892,4.815)%
  --(7.770,5.331)--(8.649,5.858)--(9.527,6.384)--(10.406,6.910)--(11.284,7.444);
\gpsetpointsize{12.00}
\gppoint{gp mark 7}{(3.378,2.705)}
\gppoint{gp mark 7}{(4.257,3.224)}
\gppoint{gp mark 7}{(5.135,3.750)}
\gppoint{gp mark 7}{(6.014,4.274)}
\gppoint{gp mark 7}{(6.892,4.815)}
\gppoint{gp mark 7}{(7.770,5.331)}
\gppoint{gp mark 7}{(8.649,5.858)}
\gppoint{gp mark 7}{(9.527,6.384)}
\gppoint{gp mark 7}{(10.406,6.910)}
\gppoint{gp mark 7}{(11.284,7.444)}
\gppoint{gp mark 7}{(9.133,7.429)}
\gpcolor{color=gp lt color border}
\gpsetlinewidth{1.00}
\draw[gp path] (2.500,8.071)--(2.500,2.169)--(11.284,2.169)--(11.284,8.071)--cycle;
\gpdefrectangularnode{gp plot 1}{\pgfpoint{2.500cm}{2.169cm}}{\pgfpoint{11.284cm}{8.071cm}}
\end{tikzpicture}}
 \end{minipage}
  \caption{Execution time of \masakiTool{} for \Accc{} dimension 5 (left) and \Acci{} (right)%
}%
 \label{figure:result:rp11_accc_acci:dimension}
\end{figure}
\begin{figure}[tbp]
 \begin{subfigure}[b]{0.32\linewidth}
  \centering
  \scalebox{0.37}{\begin{tikzpicture}[gnuplot]
\tikzset{every node/.append style={font={\fontsize{24.0pt}{24.0pt}\selectfont}}}
\path (0.000,0.000) rectangle (12.500,8.750);
\gpcolor{color=gp lt color border}
\gpsetlinetype{gp lt border}
\gpsetdashtype{gp dt solid}
\gpsetlinewidth{1.00}
\draw[gp path] (2.705,1.772)--(2.885,1.772);
\draw[gp path] (11.506,1.772)--(11.326,1.772);
\node[gp node right,font={\fontsize{24.0pt}{38.4pt}\selectfont}] at (2.374,1.772) {$0$};
\draw[gp path] (2.705,2.835)--(2.885,2.835);
\draw[gp path] (11.506,2.835)--(11.326,2.835);
\node[gp node right,font={\fontsize{24.0pt}{38.4pt}\selectfont}] at (2.374,2.835) {$0.04$};
\draw[gp path] (2.705,3.898)--(2.885,3.898);
\draw[gp path] (11.506,3.898)--(11.326,3.898);
\node[gp node right,font={\fontsize{24.0pt}{38.4pt}\selectfont}] at (2.374,3.898) {$0.08$};
\draw[gp path] (2.705,4.961)--(2.885,4.961);
\draw[gp path] (11.506,4.961)--(11.326,4.961);
\node[gp node right,font={\fontsize{24.0pt}{38.4pt}\selectfont}] at (2.374,4.961) {$0.12$};
\draw[gp path] (2.705,6.024)--(2.885,6.024);
\draw[gp path] (11.506,6.024)--(11.326,6.024);
\node[gp node right,font={\fontsize{24.0pt}{38.4pt}\selectfont}] at (2.374,6.024) {$0.16$};
\draw[gp path] (2.705,7.087)--(2.885,7.087);
\draw[gp path] (11.506,7.087)--(11.326,7.087);
\node[gp node right,font={\fontsize{24.0pt}{38.4pt}\selectfont}] at (2.374,7.087) {$0.2$};
\draw[gp path] (2.705,1.772)--(2.705,1.952);
\draw[gp path] (2.705,7.087)--(2.705,6.907);
\node[gp node center,font={\fontsize{24.0pt}{38.4pt}\selectfont}] at (2.705,1.218) {$0$};
\draw[gp path] (3.585,1.772)--(3.585,1.952);
\draw[gp path] (3.585,7.087)--(3.585,6.907);
\node[gp node center,font={\fontsize{24.0pt}{38.4pt}\selectfont}] at (3.585,1.218) {$1$};
\draw[gp path] (4.465,1.772)--(4.465,1.952);
\draw[gp path] (4.465,7.087)--(4.465,6.907);
\node[gp node center,font={\fontsize{24.0pt}{38.4pt}\selectfont}] at (4.465,1.218) {$2$};
\draw[gp path] (5.345,1.772)--(5.345,1.952);
\draw[gp path] (5.345,7.087)--(5.345,6.907);
\node[gp node center,font={\fontsize{24.0pt}{38.4pt}\selectfont}] at (5.345,1.218) {$3$};
\draw[gp path] (6.225,1.772)--(6.225,1.952);
\draw[gp path] (6.225,7.087)--(6.225,6.907);
\node[gp node center,font={\fontsize{24.0pt}{38.4pt}\selectfont}] at (6.225,1.218) {$4$};
\draw[gp path] (7.106,1.772)--(7.106,1.952);
\draw[gp path] (7.106,7.087)--(7.106,6.907);
\node[gp node center,font={\fontsize{24.0pt}{38.4pt}\selectfont}] at (7.106,1.218) {$5$};
\draw[gp path] (7.986,1.772)--(7.986,1.952);
\draw[gp path] (7.986,7.087)--(7.986,6.907);
\node[gp node center,font={\fontsize{24.0pt}{38.4pt}\selectfont}] at (7.986,1.218) {$6$};
\draw[gp path] (8.866,1.772)--(8.866,1.952);
\draw[gp path] (8.866,7.087)--(8.866,6.907);
\node[gp node center,font={\fontsize{24.0pt}{38.4pt}\selectfont}] at (8.866,1.218) {$7$};
\draw[gp path] (9.746,1.772)--(9.746,1.952);
\draw[gp path] (9.746,7.087)--(9.746,6.907);
\node[gp node center,font={\fontsize{24.0pt}{38.4pt}\selectfont}] at (9.746,1.218) {$8$};
\draw[gp path] (10.626,1.772)--(10.626,1.952);
\draw[gp path] (10.626,7.087)--(10.626,6.907);
\node[gp node center,font={\fontsize{24.0pt}{38.4pt}\selectfont}] at (10.626,1.218) {$9$};
\draw[gp path] (11.506,1.772)--(11.506,1.952);
\draw[gp path] (11.506,7.087)--(11.506,6.907);
\node[gp node center,font={\fontsize{24.0pt}{38.4pt}\selectfont}] at (11.506,1.218) {$10$};
\draw[gp path] (2.705,7.087)--(2.705,1.772)--(11.506,1.772)--(11.506,7.087)--cycle;
\node[gp node center,rotate=-270,font={\fontsize{24.0pt}{36.0pt}\selectfont}] at (0.524,4.429) {Exec. time [sec.]};
\node[gp node center,font={\fontsize{24.0pt}{36.0pt}\selectfont}] at (7.105,0.387) {Length [$\times 100$]};
\node[gp node right,font={\fontsize{21.0pt}{24.0pt}\selectfont}] at (9.001,8.293) {\AccdSafety, dim. 3, $\varepsilon = 2.0$};
\gpcolor{rgb color={0.580,0.000,0.827}}
\gpsetlinewidth{5.00}
\draw[gp path] (9.332,8.293)--(10.836,8.293);
\draw[gp path] (2.793,2.038)--(2.925,2.304)--(3.145,2.569)--(3.365,2.569)--(3.585,2.835)%
  --(4.465,3.101)--(5.345,3.367)--(6.225,4.164)--(7.106,4.430)--(7.986,4.695)--(8.866,5.227)%
  --(9.746,5.227)--(10.626,6.024)--(11.506,6.290);
\gpsetpointsize{6.00}
\gppoint{gp mark 7}{(2.793,2.038)}
\gppoint{gp mark 7}{(2.925,2.304)}
\gppoint{gp mark 7}{(3.145,2.569)}
\gppoint{gp mark 7}{(3.365,2.569)}
\gppoint{gp mark 7}{(3.585,2.835)}
\gppoint{gp mark 7}{(4.465,3.101)}
\gppoint{gp mark 7}{(5.345,3.367)}
\gppoint{gp mark 7}{(6.225,4.164)}
\gppoint{gp mark 7}{(7.106,4.430)}
\gppoint{gp mark 7}{(7.986,4.695)}
\gppoint{gp mark 7}{(8.866,5.227)}
\gppoint{gp mark 7}{(9.746,5.227)}
\gppoint{gp mark 7}{(10.626,6.024)}
\gppoint{gp mark 7}{(11.506,6.290)}
\gppoint{gp mark 7}{(10.084,8.293)}
\gpcolor{color=gp lt color border}
\node[gp node right,font={\fontsize{21.0pt}{24.0pt}\selectfont}] at (9.001,7.739) {\AccdSafety, dim. 3, $\varepsilon = 0.9$};
\gpcolor{rgb color={0.000,0.620,0.451}}
\draw[gp path] (9.332,7.739)--(10.836,7.739);
\draw[gp path] (2.793,2.304)--(2.925,2.569)--(3.145,2.835)--(3.365,3.101)--(3.585,3.101)%
  --(4.465,3.632)--(5.345,4.164)--(6.225,4.961)--(7.106,5.758)--(7.986,6.024)--(8.866,6.024)%
  --(9.746,7.087)--(10.626,6.821)--(11.506,7.087);
\gppoint{gp mark 7}{(2.793,2.304)}
\gppoint{gp mark 7}{(2.925,2.569)}
\gppoint{gp mark 7}{(3.145,2.835)}
\gppoint{gp mark 7}{(3.365,3.101)}
\gppoint{gp mark 7}{(3.585,3.101)}
\gppoint{gp mark 7}{(4.465,3.632)}
\gppoint{gp mark 7}{(5.345,4.164)}
\gppoint{gp mark 7}{(6.225,4.961)}
\gppoint{gp mark 7}{(7.106,5.758)}
\gppoint{gp mark 7}{(7.986,6.024)}
\gppoint{gp mark 7}{(8.866,6.024)}
\gppoint{gp mark 7}{(9.746,7.087)}
\gppoint{gp mark 7}{(10.626,6.821)}
\gppoint{gp mark 7}{(11.506,7.087)}
\gppoint{gp mark 7}{(10.084,7.739)}
\gpcolor{color=gp lt color border}
\gpsetlinewidth{1.00}
\draw[gp path] (2.705,7.087)--(2.705,1.772)--(11.506,1.772)--(11.506,7.087)--cycle;
\gpdefrectangularnode{gp plot 1}{\pgfpoint{2.705cm}{1.772cm}}{\pgfpoint{11.506cm}{7.087cm}}
\end{tikzpicture}}
  \caption{\Accd{} of dimension~3}%
  \label{figure:result:accd:dimension3}
 \end{subfigure}
 \hfill
 \begin{subfigure}[b]{0.32\linewidth}
  \centering
  \scalebox{0.37}{\begin{tikzpicture}[gnuplot]
\tikzset{every node/.append style={font={\fontsize{18.0pt}{21.6pt}\selectfont}}}
\path (0.000,0.000) rectangle (12.500,8.750);
\gpcolor{color=gp lt color border}
\gpsetlinetype{gp lt border}
\gpsetdashtype{gp dt solid}
\gpsetlinewidth{1.00}
\draw[gp path] (2.374,1.772)--(2.554,1.772);
\draw[gp path] (11.506,1.772)--(11.326,1.772);
\node[gp node right,font={\fontsize{24.0pt}{38.4pt}\selectfont}] at (2.043,1.772) {$0.2$};
\draw[gp path] (2.374,2.835)--(2.554,2.835);
\draw[gp path] (11.506,2.835)--(11.326,2.835);
\node[gp node right,font={\fontsize{24.0pt}{38.4pt}\selectfont}] at (2.043,2.835) {$0.4$};
\draw[gp path] (2.374,3.898)--(2.554,3.898);
\draw[gp path] (11.506,3.898)--(11.326,3.898);
\node[gp node right,font={\fontsize{24.0pt}{38.4pt}\selectfont}] at (2.043,3.898) {$0.6$};
\draw[gp path] (2.374,4.961)--(2.554,4.961);
\draw[gp path] (11.506,4.961)--(11.326,4.961);
\node[gp node right,font={\fontsize{24.0pt}{38.4pt}\selectfont}] at (2.043,4.961) {$0.8$};
\draw[gp path] (2.374,6.024)--(2.554,6.024);
\draw[gp path] (11.506,6.024)--(11.326,6.024);
\node[gp node right,font={\fontsize{24.0pt}{38.4pt}\selectfont}] at (2.043,6.024) {$1$};
\draw[gp path] (2.374,7.087)--(2.554,7.087);
\draw[gp path] (11.506,7.087)--(11.326,7.087);
\node[gp node right,font={\fontsize{24.0pt}{38.4pt}\selectfont}] at (2.043,7.087) {$1.2$};
\draw[gp path] (2.374,1.772)--(2.374,1.952);
\draw[gp path] (2.374,7.087)--(2.374,6.907);
\node[gp node center,font={\fontsize{24.0pt}{38.4pt}\selectfont}] at (2.374,1.218) {$0$};
\draw[gp path] (3.287,1.772)--(3.287,1.952);
\draw[gp path] (3.287,7.087)--(3.287,6.907);
\node[gp node center,font={\fontsize{24.0pt}{38.4pt}\selectfont}] at (3.287,1.218) {$1$};
\draw[gp path] (4.200,1.772)--(4.200,1.952);
\draw[gp path] (4.200,7.087)--(4.200,6.907);
\node[gp node center,font={\fontsize{24.0pt}{38.4pt}\selectfont}] at (4.200,1.218) {$2$};
\draw[gp path] (5.114,1.772)--(5.114,1.952);
\draw[gp path] (5.114,7.087)--(5.114,6.907);
\node[gp node center,font={\fontsize{24.0pt}{38.4pt}\selectfont}] at (5.114,1.218) {$3$};
\draw[gp path] (6.027,1.772)--(6.027,1.952);
\draw[gp path] (6.027,7.087)--(6.027,6.907);
\node[gp node center,font={\fontsize{24.0pt}{38.4pt}\selectfont}] at (6.027,1.218) {$4$};
\draw[gp path] (6.940,1.772)--(6.940,1.952);
\draw[gp path] (6.940,7.087)--(6.940,6.907);
\node[gp node center,font={\fontsize{24.0pt}{38.4pt}\selectfont}] at (6.940,1.218) {$5$};
\draw[gp path] (7.853,1.772)--(7.853,1.952);
\draw[gp path] (7.853,7.087)--(7.853,6.907);
\node[gp node center,font={\fontsize{24.0pt}{38.4pt}\selectfont}] at (7.853,1.218) {$6$};
\draw[gp path] (8.766,1.772)--(8.766,1.952);
\draw[gp path] (8.766,7.087)--(8.766,6.907);
\node[gp node center,font={\fontsize{24.0pt}{38.4pt}\selectfont}] at (8.766,1.218) {$7$};
\draw[gp path] (9.680,1.772)--(9.680,1.952);
\draw[gp path] (9.680,7.087)--(9.680,6.907);
\node[gp node center,font={\fontsize{24.0pt}{38.4pt}\selectfont}] at (9.680,1.218) {$8$};
\draw[gp path] (10.593,1.772)--(10.593,1.952);
\draw[gp path] (10.593,7.087)--(10.593,6.907);
\node[gp node center,font={\fontsize{24.0pt}{38.4pt}\selectfont}] at (10.593,1.218) {$9$};
\draw[gp path] (11.506,1.772)--(11.506,1.952);
\draw[gp path] (11.506,7.087)--(11.506,6.907);
\node[gp node center,font={\fontsize{24.0pt}{38.4pt}\selectfont}] at (11.506,1.218) {$10$};
\draw[gp path] (2.374,7.087)--(2.374,1.772)--(11.506,1.772)--(11.506,7.087)--cycle;
\node[gp node center,rotate=-270,font={\fontsize{24.0pt}{36.0pt}\selectfont}] at (0.524,4.429) {Exec. time [sec.]};
\node[gp node center,font={\fontsize{24.0pt}{36.0pt}\selectfont}] at (6.940,0.387) {Length [$\times 100$]};
\node[gp node right,font={\fontsize{18.0pt}{21.6pt}\selectfont}] at (8.836,8.293) {\AccdSafety, dim. 4, $\varepsilon = 2.0$};
\gpcolor{rgb color={0.580,0.000,0.827}}
\gpsetlinewidth{5.00}
\draw[gp path] (9.167,8.293)--(10.671,8.293);
\draw[gp path] (2.465,2.941)--(2.602,1.878)--(2.831,2.729)--(3.059,2.622)--(3.287,2.569)%
  --(4.200,5.333)--(5.114,2.835)--(6.027,4.004)--(6.940,4.164)--(7.853,4.323)--(8.766,5.652)%
  --(9.680,5.865)--(10.593,5.333)--(11.506,5.439);
\gpsetpointsize{6.00}
\gppoint{gp mark 7}{(2.465,2.941)}
\gppoint{gp mark 7}{(2.602,1.878)}
\gppoint{gp mark 7}{(2.831,2.729)}
\gppoint{gp mark 7}{(3.059,2.622)}
\gppoint{gp mark 7}{(3.287,2.569)}
\gppoint{gp mark 7}{(4.200,5.333)}
\gppoint{gp mark 7}{(5.114,2.835)}
\gppoint{gp mark 7}{(6.027,4.004)}
\gppoint{gp mark 7}{(6.940,4.164)}
\gppoint{gp mark 7}{(7.853,4.323)}
\gppoint{gp mark 7}{(8.766,5.652)}
\gppoint{gp mark 7}{(9.680,5.865)}
\gppoint{gp mark 7}{(10.593,5.333)}
\gppoint{gp mark 7}{(11.506,5.439)}
\gppoint{gp mark 7}{(9.919,8.293)}
\gpcolor{color=gp lt color border}
\node[gp node right,font={\fontsize{18.0pt}{21.6pt}\selectfont}] at (8.836,7.739) {\AccdSafety, dim. 4, $\varepsilon = 0.9$};
\gpcolor{rgb color={0.000,0.620,0.451}}
\draw[gp path] (9.167,7.739)--(10.671,7.739);
\draw[gp path] (2.465,2.569)--(2.602,3.154)--(2.831,2.941)--(3.059,3.207)--(3.287,4.111)%
  --(4.200,5.067)--(5.114,4.642)--(6.027,5.599)--(6.940,5.227)--(7.853,5.280)--(8.766,5.014)%
  --(9.680,6.821)--(10.593,6.237)--(11.506,6.981);
\gppoint{gp mark 7}{(2.465,2.569)}
\gppoint{gp mark 7}{(2.602,3.154)}
\gppoint{gp mark 7}{(2.831,2.941)}
\gppoint{gp mark 7}{(3.059,3.207)}
\gppoint{gp mark 7}{(3.287,4.111)}
\gppoint{gp mark 7}{(4.200,5.067)}
\gppoint{gp mark 7}{(5.114,4.642)}
\gppoint{gp mark 7}{(6.027,5.599)}
\gppoint{gp mark 7}{(6.940,5.227)}
\gppoint{gp mark 7}{(7.853,5.280)}
\gppoint{gp mark 7}{(8.766,5.014)}
\gppoint{gp mark 7}{(9.680,6.821)}
\gppoint{gp mark 7}{(10.593,6.237)}
\gppoint{gp mark 7}{(11.506,6.981)}
\gppoint{gp mark 7}{(9.919,7.739)}
\gpcolor{color=gp lt color border}
\gpsetlinewidth{1.00}
\draw[gp path] (2.374,7.087)--(2.374,1.772)--(11.506,1.772)--(11.506,7.087)--cycle;
\gpdefrectangularnode{gp plot 1}{\pgfpoint{2.374cm}{1.772cm}}{\pgfpoint{11.506cm}{7.087cm}}
\end{tikzpicture}}
  \caption{\Accd{} of dimension~4}%
  \label{figure:result:accd:dimension4}
 \end{subfigure}
 \hfill
 \begin{subfigure}[b]{0.32\linewidth}
  \centering 
  \scalebox{0.37}{\begin{tikzpicture}[gnuplot]
\tikzset{every node/.append style={font={\fontsize{24.0pt}{21.6pt}\selectfont}}}
\path (0.000,0.000) rectangle (12.500,8.750);
\gpcolor{color=gp lt color border}
\gpsetlinetype{gp lt border}
\gpsetdashtype{gp dt solid}
\gpsetlinewidth{1.00}
\draw[gp path] (2.043,1.772)--(2.223,1.772);
\draw[gp path] (11.506,1.772)--(11.326,1.772);
\node[gp node right,font={\fontsize{24.0pt}{38.4pt}\selectfont}] at (1.712,1.772) {$0$};
\draw[gp path] (2.043,2.658)--(2.223,2.658);
\draw[gp path] (11.506,2.658)--(11.326,2.658);
\node[gp node right,font={\fontsize{24.0pt}{38.4pt}\selectfont}] at (1.712,2.658) {$10$};
\draw[gp path] (2.043,3.544)--(2.223,3.544);
\draw[gp path] (11.506,3.544)--(11.326,3.544);
\node[gp node right,font={\fontsize{24.0pt}{38.4pt}\selectfont}] at (1.712,3.544) {$20$};
\draw[gp path] (2.043,4.430)--(2.223,4.430);
\draw[gp path] (11.506,4.430)--(11.326,4.430);
\node[gp node right,font={\fontsize{24.0pt}{38.4pt}\selectfont}] at (1.712,4.430) {$30$};
\draw[gp path] (2.043,5.315)--(2.223,5.315);
\draw[gp path] (11.506,5.315)--(11.326,5.315);
\node[gp node right,font={\fontsize{24.0pt}{38.4pt}\selectfont}] at (1.712,5.315) {$40$};
\draw[gp path] (2.043,6.201)--(2.223,6.201);
\draw[gp path] (11.506,6.201)--(11.326,6.201);
\node[gp node right,font={\fontsize{24.0pt}{38.4pt}\selectfont}] at (1.712,6.201) {$50$};
\draw[gp path] (2.043,7.087)--(2.223,7.087);
\draw[gp path] (11.506,7.087)--(11.326,7.087);
\node[gp node right,font={\fontsize{24.0pt}{38.4pt}\selectfont}] at (1.712,7.087) {$60$};
\draw[gp path] (2.043,1.772)--(2.043,1.952);
\draw[gp path] (2.043,7.087)--(2.043,6.907);
\node[gp node center,font={\fontsize{24.0pt}{38.4pt}\selectfont}] at (2.043,1.218) {$0$};
\draw[gp path] (2.989,1.772)--(2.989,1.952);
\draw[gp path] (2.989,7.087)--(2.989,6.907);
\node[gp node center,font={\fontsize{24.0pt}{38.4pt}\selectfont}] at (2.989,1.218) {$1$};
\draw[gp path] (3.936,1.772)--(3.936,1.952);
\draw[gp path] (3.936,7.087)--(3.936,6.907);
\node[gp node center,font={\fontsize{24.0pt}{38.4pt}\selectfont}] at (3.936,1.218) {$2$};
\draw[gp path] (4.882,1.772)--(4.882,1.952);
\draw[gp path] (4.882,7.087)--(4.882,6.907);
\node[gp node center,font={\fontsize{24.0pt}{38.4pt}\selectfont}] at (4.882,1.218) {$3$};
\draw[gp path] (5.828,1.772)--(5.828,1.952);
\draw[gp path] (5.828,7.087)--(5.828,6.907);
\node[gp node center,font={\fontsize{24.0pt}{38.4pt}\selectfont}] at (5.828,1.218) {$4$};
\draw[gp path] (6.775,1.772)--(6.775,1.952);
\draw[gp path] (6.775,7.087)--(6.775,6.907);
\node[gp node center,font={\fontsize{24.0pt}{38.4pt}\selectfont}] at (6.775,1.218) {$5$};
\draw[gp path] (7.721,1.772)--(7.721,1.952);
\draw[gp path] (7.721,7.087)--(7.721,6.907);
\node[gp node center,font={\fontsize{24.0pt}{38.4pt}\selectfont}] at (7.721,1.218) {$6$};
\draw[gp path] (8.667,1.772)--(8.667,1.952);
\draw[gp path] (8.667,7.087)--(8.667,6.907);
\node[gp node center,font={\fontsize{24.0pt}{38.4pt}\selectfont}] at (8.667,1.218) {$7$};
\draw[gp path] (9.613,1.772)--(9.613,1.952);
\draw[gp path] (9.613,7.087)--(9.613,6.907);
\node[gp node center,font={\fontsize{24.0pt}{38.4pt}\selectfont}] at (9.613,1.218) {$8$};
\draw[gp path] (10.560,1.772)--(10.560,1.952);
\draw[gp path] (10.560,7.087)--(10.560,6.907);
\node[gp node center,font={\fontsize{24.0pt}{38.4pt}\selectfont}] at (10.560,1.218) {$9$};
\draw[gp path] (11.506,1.772)--(11.506,1.952);
\draw[gp path] (11.506,7.087)--(11.506,6.907);
\node[gp node center,font={\fontsize{24.0pt}{38.4pt}\selectfont}] at (11.506,1.218) {$10$};
\draw[gp path] (2.043,7.087)--(2.043,1.772)--(11.506,1.772)--(11.506,7.087)--cycle;
\node[gp node center,rotate=-270,font={\fontsize{24.0pt}{36.0pt}\selectfont}] at (0.524,4.429) {Exec. time [sec.]};
\node[gp node center,font={\fontsize{24.0pt}{36.0pt}\selectfont}] at (6.774,0.387) {Length [$\times 100$]};
\node[gp node right,font={\fontsize{21.0pt}{21.6pt}\selectfont}] at (8.670,8.293) {\AccdSafety, dim. 5, $\varepsilon = 2.0$};
\gpcolor{rgb color={0.580,0.000,0.827}}
\gpsetlinewidth{5.00}
\draw[gp path] (9.001,8.293)--(10.505,8.293);
\draw[gp path] (2.138,4.370)--(2.280,2.589)--(2.516,4.523)--(2.753,4.423)--(2.989,3.644)%
  --(3.936,4.976)--(4.882,3.062)--(5.828,3.131)--(6.775,4.794)--(7.721,3.222)--(8.667,3.198)%
  --(9.613,5.233)--(10.560,5.907)--(11.506,6.594);
\gpsetpointsize{6.00}
\gppoint{gp mark 7}{(2.138,4.370)}
\gppoint{gp mark 7}{(2.280,2.589)}
\gppoint{gp mark 7}{(2.516,4.523)}
\gppoint{gp mark 7}{(2.753,4.423)}
\gppoint{gp mark 7}{(2.989,3.644)}
\gppoint{gp mark 7}{(3.936,4.976)}
\gppoint{gp mark 7}{(4.882,3.062)}
\gppoint{gp mark 7}{(5.828,3.131)}
\gppoint{gp mark 7}{(6.775,4.794)}
\gppoint{gp mark 7}{(7.721,3.222)}
\gppoint{gp mark 7}{(8.667,3.198)}
\gppoint{gp mark 7}{(9.613,5.233)}
\gppoint{gp mark 7}{(10.560,5.907)}
\gppoint{gp mark 7}{(11.506,6.594)}
\gppoint{gp mark 7}{(9.753,8.293)}
\gpcolor{color=gp lt color border}
\node[gp node right,font={\fontsize{21.0pt}{21.6pt}\selectfont}] at (8.670,7.739) {\AccdSafety, dim. 5, $\varepsilon = 0.9$};
\gpcolor{rgb color={0.000,0.620,0.451}}
\draw[gp path] (9.001,7.739)--(10.505,7.739);
\draw[gp path] (2.138,2.437)--(2.280,2.669)--(2.516,2.629)--(2.753,3.048)--(2.989,2.912)%
  --(3.936,2.564)--(4.882,3.103)--(5.828,2.641)--(6.775,2.810)--(7.721,2.551)--(8.667,3.282)%
  --(9.613,2.871)--(10.560,2.643)--(11.506,2.795);
\gppoint{gp mark 7}{(2.138,2.437)}
\gppoint{gp mark 7}{(2.280,2.669)}
\gppoint{gp mark 7}{(2.516,2.629)}
\gppoint{gp mark 7}{(2.753,3.048)}
\gppoint{gp mark 7}{(2.989,2.912)}
\gppoint{gp mark 7}{(3.936,2.564)}
\gppoint{gp mark 7}{(4.882,3.103)}
\gppoint{gp mark 7}{(5.828,2.641)}
\gppoint{gp mark 7}{(6.775,2.810)}
\gppoint{gp mark 7}{(7.721,2.551)}
\gppoint{gp mark 7}{(8.667,3.282)}
\gppoint{gp mark 7}{(9.613,2.871)}
\gppoint{gp mark 7}{(10.560,2.643)}
\gppoint{gp mark 7}{(11.506,2.795)}
\gppoint{gp mark 7}{(9.753,7.739)}
\gpcolor{color=gp lt color border}
\gpsetlinewidth{1.00}
\draw[gp path] (2.043,7.087)--(2.043,1.772)--(11.506,1.772)--(11.506,7.087)--cycle;
\gpdefrectangularnode{gp plot 1}{\pgfpoint{2.043cm}{1.772cm}}{\pgfpoint{11.506cm}{7.087cm}}
\end{tikzpicture}}
  \caption{\Accd{} of dimension~5}%
  \label{figure:result:accd:dimension5}
 \end{subfigure}
 \caption{Execution time of \masakiTool{} for \Accd{}}%
 \label{figure:result:accd:dimension}
\end{figure}
\myparagraph{RQ2: scalability \wrt{} the word length}
\cref{figure:result:rp11_accc_acci:dimension,figure:result:accd:dimension} show the execution time of \masakiTool{} \wrt{} the length of the timed quantitative word for selected experiment settings.

In \cref{figure:result:rp11_accc_acci:dimension,figure:result:accd:dimension3}, we observe that when the dimension of the LHA is not large, the execution time was more or less linear to the word length.
This is because, when the number of the intermediate states ($\Conf_{i}$ in \cref{algorithm:incremental:outline}) is constant, the execution time of the bounded-time reachability analysis (\cref{algorithm:incremental:outline:reachability} of \cref{algorithm:incremental:outline}) is constant for each iteration, and
the execution time of \cref{algorithm:incremental:outline} is linear to the word length.
Thanks to the merging of the convex polyhedra, such saturation often happens when the word length is long enough for the complexity of the LHA.\@

In \cref{figure:result:accd:dimension5}, we observe that for \Acci{} of dimension~5, the execution time was more or less constant \wrt{} the word length.
Such behavior also happens for other benchmarks when the word length is short, \eg{} when the word length is less than 200 for \Accc{} of dimension~5.

In \cref{figure:result:accd:dimension4}, we observe the behavior between them. Namely, for \Acci{} of dimension~4, the execution time was more or less constant \wrt{} the word length for short logs and more or less linear to the word length.

Overall, in our experiments, the execution time was at most linear, and  we conclude that at least for many benchmarks, \masakiTool{} is scalable to the word length.
\begin{figure}[tbp]
 \begin{minipage}{.45\linewidth}
  \centering  
  \scalebox{0.38}{\begin{tikzpicture}[gnuplot]
\tikzset{every node/.append style={font={\fontsize{20.0pt}{24.0pt}\selectfont}}}
\path (0.000,0.000) rectangle (17.000,11.000);
\gpcolor{color=gp lt color border}
\gpsetlinetype{gp lt border}
\gpsetdashtype{gp dt solid}
\gpsetlinewidth{1.00}
\draw[gp path] (3.376,1.971)--(3.556,1.971);
\draw[gp path] (14.195,1.971)--(14.015,1.971);
\node[gp node right] at (3.008,1.971) {$0.001$};
\draw[gp path] (3.376,2.259)--(3.466,2.259);
\draw[gp path] (14.195,2.259)--(14.105,2.259);
\draw[gp path] (3.376,2.428)--(3.466,2.428);
\draw[gp path] (14.195,2.428)--(14.105,2.428);
\draw[gp path] (3.376,2.548)--(3.466,2.548);
\draw[gp path] (14.195,2.548)--(14.105,2.548);
\draw[gp path] (3.376,2.641)--(3.466,2.641);
\draw[gp path] (14.195,2.641)--(14.105,2.641);
\draw[gp path] (3.376,2.716)--(3.466,2.716);
\draw[gp path] (14.195,2.716)--(14.105,2.716);
\draw[gp path] (3.376,2.781)--(3.466,2.781);
\draw[gp path] (14.195,2.781)--(14.105,2.781);
\draw[gp path] (3.376,2.836)--(3.466,2.836);
\draw[gp path] (14.195,2.836)--(14.105,2.836);
\draw[gp path] (3.376,2.885)--(3.466,2.885);
\draw[gp path] (14.195,2.885)--(14.105,2.885);
\draw[gp path] (3.376,2.929)--(3.556,2.929);
\draw[gp path] (14.195,2.929)--(14.015,2.929);
\node[gp node right] at (3.008,2.929) {$0.01$};
\draw[gp path] (3.376,3.217)--(3.466,3.217);
\draw[gp path] (14.195,3.217)--(14.105,3.217);
\draw[gp path] (3.376,3.386)--(3.466,3.386);
\draw[gp path] (14.195,3.386)--(14.105,3.386);
\draw[gp path] (3.376,3.506)--(3.466,3.506);
\draw[gp path] (14.195,3.506)--(14.105,3.506);
\draw[gp path] (3.376,3.599)--(3.466,3.599);
\draw[gp path] (14.195,3.599)--(14.105,3.599);
\draw[gp path] (3.376,3.674)--(3.466,3.674);
\draw[gp path] (14.195,3.674)--(14.105,3.674);
\draw[gp path] (3.376,3.739)--(3.466,3.739);
\draw[gp path] (14.195,3.739)--(14.105,3.739);
\draw[gp path] (3.376,3.794)--(3.466,3.794);
\draw[gp path] (14.195,3.794)--(14.105,3.794);
\draw[gp path] (3.376,3.843)--(3.466,3.843);
\draw[gp path] (14.195,3.843)--(14.105,3.843);
\draw[gp path] (3.376,3.887)--(3.556,3.887);
\draw[gp path] (14.195,3.887)--(14.015,3.887);
\node[gp node right] at (3.008,3.887) {$0.1$};
\draw[gp path] (3.376,4.175)--(3.466,4.175);
\draw[gp path] (14.195,4.175)--(14.105,4.175);
\draw[gp path] (3.376,4.344)--(3.466,4.344);
\draw[gp path] (14.195,4.344)--(14.105,4.344);
\draw[gp path] (3.376,4.464)--(3.466,4.464);
\draw[gp path] (14.195,4.464)--(14.105,4.464);
\draw[gp path] (3.376,4.557)--(3.466,4.557);
\draw[gp path] (14.195,4.557)--(14.105,4.557);
\draw[gp path] (3.376,4.632)--(3.466,4.632);
\draw[gp path] (14.195,4.632)--(14.105,4.632);
\draw[gp path] (3.376,4.697)--(3.466,4.697);
\draw[gp path] (14.195,4.697)--(14.105,4.697);
\draw[gp path] (3.376,4.752)--(3.466,4.752);
\draw[gp path] (14.195,4.752)--(14.105,4.752);
\draw[gp path] (3.376,4.801)--(3.466,4.801);
\draw[gp path] (14.195,4.801)--(14.105,4.801);
\draw[gp path] (3.376,4.845)--(3.556,4.845);
\draw[gp path] (14.195,4.845)--(14.015,4.845);
\node[gp node right] at (3.008,4.845) {$1$};
\draw[gp path] (3.376,5.133)--(3.466,5.133);
\draw[gp path] (14.195,5.133)--(14.105,5.133);
\draw[gp path] (3.376,5.302)--(3.466,5.302);
\draw[gp path] (14.195,5.302)--(14.105,5.302);
\draw[gp path] (3.376,5.422)--(3.466,5.422);
\draw[gp path] (14.195,5.422)--(14.105,5.422);
\draw[gp path] (3.376,5.515)--(3.466,5.515);
\draw[gp path] (14.195,5.515)--(14.105,5.515);
\draw[gp path] (3.376,5.590)--(3.466,5.590);
\draw[gp path] (14.195,5.590)--(14.105,5.590);
\draw[gp path] (3.376,5.655)--(3.466,5.655);
\draw[gp path] (14.195,5.655)--(14.105,5.655);
\draw[gp path] (3.376,5.710)--(3.466,5.710);
\draw[gp path] (14.195,5.710)--(14.105,5.710);
\draw[gp path] (3.376,5.759)--(3.466,5.759);
\draw[gp path] (14.195,5.759)--(14.105,5.759);
\draw[gp path] (3.376,5.803)--(3.556,5.803);
\draw[gp path] (14.195,5.803)--(14.015,5.803);
\node[gp node right] at (3.008,5.803) {$10$};
\draw[gp path] (3.376,6.091)--(3.466,6.091);
\draw[gp path] (14.195,6.091)--(14.105,6.091);
\draw[gp path] (3.376,6.260)--(3.466,6.260);
\draw[gp path] (14.195,6.260)--(14.105,6.260);
\draw[gp path] (3.376,6.380)--(3.466,6.380);
\draw[gp path] (14.195,6.380)--(14.105,6.380);
\draw[gp path] (3.376,6.473)--(3.466,6.473);
\draw[gp path] (14.195,6.473)--(14.105,6.473);
\draw[gp path] (3.376,6.548)--(3.466,6.548);
\draw[gp path] (14.195,6.548)--(14.105,6.548);
\draw[gp path] (3.376,6.613)--(3.466,6.613);
\draw[gp path] (14.195,6.613)--(14.105,6.613);
\draw[gp path] (3.376,6.668)--(3.466,6.668);
\draw[gp path] (14.195,6.668)--(14.105,6.668);
\draw[gp path] (3.376,6.717)--(3.466,6.717);
\draw[gp path] (14.195,6.717)--(14.105,6.717);
\draw[gp path] (3.376,6.761)--(3.556,6.761);
\draw[gp path] (14.195,6.761)--(14.015,6.761);
\node[gp node right] at (3.008,6.761) {$100$};
\draw[gp path] (3.376,7.049)--(3.466,7.049);
\draw[gp path] (14.195,7.049)--(14.105,7.049);
\draw[gp path] (3.376,7.218)--(3.466,7.218);
\draw[gp path] (14.195,7.218)--(14.105,7.218);
\draw[gp path] (3.376,7.338)--(3.466,7.338);
\draw[gp path] (14.195,7.338)--(14.105,7.338);
\draw[gp path] (3.376,7.431)--(3.466,7.431);
\draw[gp path] (14.195,7.431)--(14.105,7.431);
\draw[gp path] (3.376,7.506)--(3.466,7.506);
\draw[gp path] (14.195,7.506)--(14.105,7.506);
\draw[gp path] (3.376,7.571)--(3.466,7.571);
\draw[gp path] (14.195,7.571)--(14.105,7.571);
\draw[gp path] (3.376,7.626)--(3.466,7.626);
\draw[gp path] (14.195,7.626)--(14.105,7.626);
\draw[gp path] (3.376,7.675)--(3.466,7.675);
\draw[gp path] (14.195,7.675)--(14.105,7.675);
\draw[gp path] (3.376,7.719)--(3.556,7.719);
\draw[gp path] (14.195,7.719)--(14.015,7.719);
\node[gp node right] at (3.008,7.719) {$1000$};
\draw[gp path] (3.376,1.971)--(3.376,2.151);
\draw[gp path] (3.376,7.719)--(3.376,7.539);
\node[gp node center] at (3.376,1.355) {$2$};
\draw[gp path] (4.728,1.971)--(4.728,2.151);
\draw[gp path] (4.728,7.719)--(4.728,7.539);
\node[gp node center] at (4.728,1.355) {$2.5$};
\draw[gp path] (6.081,1.971)--(6.081,2.151);
\draw[gp path] (6.081,7.719)--(6.081,7.539);
\node[gp node center] at (6.081,1.355) {$3$};
\draw[gp path] (7.433,1.971)--(7.433,2.151);
\draw[gp path] (7.433,7.719)--(7.433,7.539);
\node[gp node center] at (7.433,1.355) {$3.5$};
\draw[gp path] (8.786,1.971)--(8.786,2.151);
\draw[gp path] (8.786,7.719)--(8.786,7.539);
\node[gp node center] at (8.786,1.355) {$4$};
\draw[gp path] (10.138,1.971)--(10.138,2.151);
\draw[gp path] (10.138,7.719)--(10.138,7.539);
\node[gp node center] at (10.138,1.355) {$4.5$};
\draw[gp path] (11.490,1.971)--(11.490,2.151);
\draw[gp path] (11.490,7.719)--(11.490,7.539);
\node[gp node center] at (11.490,1.355) {$5$};
\draw[gp path] (12.843,1.971)--(12.843,2.151);
\draw[gp path] (12.843,7.719)--(12.843,7.539);
\node[gp node center] at (12.843,1.355) {$5.5$};
\draw[gp path] (14.195,1.971)--(14.195,2.151);
\draw[gp path] (14.195,7.719)--(14.195,7.539);
\node[gp node center] at (14.195,1.355) {$6$};
\draw[gp path] (3.376,7.719)--(3.376,1.971)--(14.195,1.971)--(14.195,7.719)--cycle;
\node[gp node center,rotate=-270] at (0.584,4.845) {Execution time [sec.]};
\node[gp node center] at (8.785,0.431) {Dimension};
\node[gp node right] at (11.823,10.312) {\phaverlite{}, \AccdSafety, len. 100, $\varepsilon = 2.0$};
\gpcolor{rgb color={0.580,0.000,0.827}}
\gpsetlinewidth{10.00}
\draw[gp path] (12.191,10.312)--(13.843,10.312);
\draw[gp path] (3.376,3.688)--(6.081,4.219)--(8.786,4.864)--(11.490,5.688)--(14.195,6.876);
\gpsetpointsize{12.00}
\gppoint{gp mark 9}{(3.376,3.688)}
\gppoint{gp mark 9}{(6.081,4.219)}
\gppoint{gp mark 9}{(8.786,4.864)}
\gppoint{gp mark 9}{(11.490,5.688)}
\gppoint{gp mark 9}{(14.195,6.876)}
\gppoint{gp mark 9}{(13.017,10.312)}
\gpcolor{color=gp lt color border}
\node[gp node right] at (11.823,9.696) {\phaverlite{}, \AccdSafety, len. 100, $\varepsilon = 0.9$};
\gpcolor{rgb color={0.000,0.620,0.451}}
\draw[gp path] (12.191,9.696)--(13.843,9.696);
\draw[gp path] (3.376,3.695)--(6.081,4.242)--(8.786,4.924)--(11.490,5.717)--(14.195,6.705);
\gppoint{gp mark 9}{(3.376,3.695)}
\gppoint{gp mark 9}{(6.081,4.242)}
\gppoint{gp mark 9}{(8.786,4.924)}
\gppoint{gp mark 9}{(11.490,5.717)}
\gppoint{gp mark 9}{(14.195,6.705)}
\gppoint{gp mark 9}{(13.017,9.696)}
\gpcolor{color=gp lt color border}
\node[gp node right] at (11.823,9.080) {\masakiTool{}, \AccdSafety, len. 100, $\varepsilon = 2.0$};
\gpcolor{rgb color={0.337,0.706,0.914}}
\draw[gp path] (12.191,9.080)--(13.843,9.080);
\draw[gp path] (3.376,2.818)--(6.081,3.458)--(8.786,4.408)--(11.490,6.114)--(14.195,7.360);
\gppoint{gp mark 7}{(3.376,2.818)}
\gppoint{gp mark 7}{(6.081,3.458)}
\gppoint{gp mark 7}{(8.786,4.408)}
\gppoint{gp mark 7}{(11.490,6.114)}
\gppoint{gp mark 7}{(14.195,7.360)}
\gppoint{gp mark 7}{(13.017,9.080)}
\gpcolor{color=gp lt color border}
\node[gp node right] at (11.823,8.464) {\masakiTool{}, \AccdSafety, len. 100, $\varepsilon = 0.9$};
\gpcolor{rgb color={0.902,0.624,0.000}}
\draw[gp path] (12.191,8.464)--(13.843,8.464);
\draw[gp path] (3.376,2.915)--(6.081,3.582)--(8.786,4.658)--(11.490,5.908)--(14.195,7.069);
\gppoint{gp mark 7}{(3.376,2.915)}
\gppoint{gp mark 7}{(6.081,3.582)}
\gppoint{gp mark 7}{(8.786,4.658)}
\gppoint{gp mark 7}{(11.490,5.908)}
\gppoint{gp mark 7}{(14.195,7.069)}
\gppoint{gp mark 7}{(13.017,8.464)}
\gpcolor{color=gp lt color border}
\gpsetlinewidth{1.00}
\draw[gp path] (3.376,7.719)--(3.376,1.971)--(14.195,1.971)--(14.195,7.719)--cycle;
\gpdefrectangularnode{gp plot 1}{\pgfpoint{3.376cm}{1.971cm}}{\pgfpoint{14.195cm}{7.719cm}}
\end{tikzpicture}}
 \caption{Execution time of \phaverlite{} and \masakiTool{} for \Accd{} fixing the word length to be 100}%
 \label{figure:result:ACCI-length} 
\end{minipage}
 \hfill
 \begin{minipage}{.52\linewidth}
 \centering
 \scalebox{0.47}{ \input{fig/shortest_interval_robustness.tikz}}
 \caption{Robustness (vertical) vs.\ shortest sampling interval with no false alarms [ms.] (horizontal)}%
 \label{figure:result:navigation}
\end{minipage}
\end{figure}
\begin{table}[tbp]
 \caption{The number and the ratio of the false and total alarms in \UnsafeNAV{} for each sampling interval}%
 \label{table:result:false_alarm}
 \centering
 \begin{tabular}{lrrrrrrrrrrrr}
  \toprule
  & 50\,ms & 100\,ms & 150\,ms & 200\,ms & 250\,ms & 300\,ms\\
  \midrule
  \# of false alarms / \# of safe behaviors & 0.19 & 0.28 & 0.40 & 0.50 & 0.59 & 0.72\\
  \# of false alarms / \# of total alarms & 0.20 & 0.28 & 0.35 & 0.41 & 0.45 & 0.50\\
  \# of the false alarms & 11 & 16 & 23 & 29 & 34 & 42\\
  \# of the total alarms & 53 & 58 & 65 & 71 & 76 & 84\\
  \bottomrule
 \end{tabular}
 \vspace{0.5em}
 \begin{tabular}{lrrrrrrrrrrrr}
  \toprule
  & 350\,ms & 400\,ms & 450\,ms & 500\,ms & 550\,ms & 600\,ms\\
  \midrule
  \# of false alarms / \# of safe behaviors & 0.79 & 0.81 & 0.91 & 0.95 & 0.98 & 1.0\\
  \# of false alarms / \# of total alarms & 0.52 & 0.53 & 0.56 & 0.57 & 0.58 & 0.58\\
  \# of the false alarms & 46 & 47 & 53 & 55 & 57 & 58\\
  \# of the total alarms & 88 & 89 & 95 & 97 & 99 & 100\\
  \bottomrule
 \end{tabular}
\end{table}

\myparagraph{RQ3: scalability \wrt{} the dimensionality of the LHA}
\cref{figure:result:ACCI-length} shows the execution time of \phaverlite{} and \masakiTool{} to the dimension of the LHA, where the length of the log is fixed to be 100.
As we mentioned in \cref{subsubsec:scalability_experiments}, the sampling interval is from 1 to 5 seconds, uniformly distributed; thus, each $\word$ spans 300 seconds on average.
Note that the $y$-axis of \cref{figure:result:ACCI-length} follows a logarithmic scale.

In \cref{figure:result:ACCI-length}, we observe that the execution time is more or less exponential to the dimension of the LHA.\@
This is due to the exponential complexity of the convex polyhedra operations.
However, in \cref{table:result:accc,table:result:acci,table:result:accd}, we observe that for any benchmark (excluding \Acci{}, which consists of only one LHA and is not suitable for this discussion), \masakiTool{} can effectively process a huge log up to around 5~dimensions.
This is important for monitoring where the log tends to be huge while the dimension of the bounding model may not be much large.

In contrast, in \cref{figure:result:ACCI-length}, we observe that \phaverlite{} is more scalable to the model dimension than \masakiTool{}.
This is thanks to the optimized convex polyhedra algorithms and implementations in \phaverlite{}.
Again, our future work will consist in optimizing \masakiTool{}, \eg{} by using the techniques from~\cite{BZ19,CAF11}.

Overall, our experiment results suggest that for signals of up to 5 dimensions, our monitoring works more or less in real-time because both implementations handle the logs spanning at least 100 seconds (a few paragraphs ago) in less than 30 seconds on average (\cref{figure:result:ACCI-length}).

\subsubsection{Precision Experiments}\label{subsec:precision_experiments}
In the precision experiments, since the result of the procedures presented in \cref{section:algorithm1,section:algorithm2} are the same, we only used \masakiTool{} to evaluate the precision of model-bounded monitoring.
Since \NAV{} and \UnsafeNAV{} have nondeterminism in the initial configuration, we randomly generated 100 logs for each of them.
We note that \NAV{} always satisfies the specification while \UnsafeNAV{} may violate it.
Since the initial configuration of \SharedGasBurner{} is unique and the simulation by Simulink is deterministic, we generated only one log for \SharedGasBurner{}.
From a dense enough raw log, we generated timed quantitative words $\word$ with constant sampling interval.
We tried the following 10 sampling intervals for \NAV{} and \SharedGasBurner{}: $100, 200, 300, \dots, 1000$ milliseconds.
We tried the following 12 sampling intervals for \UnsafeNAV{}: $50, 100, 150, \dots, 600$ milliseconds.
\myparagraph{RQ4: robustness vs.\ precision}
\cref{figure:result:navigation} shows the shortest sampling interval with no false alarms and the robustness for each log of \NAV{} and \UnsafeNAV{}. Since the monitored specification is $v_x \in [-1,1] \land v_y \in [-1,1]$, the robustness, \ie{} the satisfaction degree of the monitored specification, is the minimum value of $\min\{1 - |v_x|, 1 - |v_y|\}$ in the whole execution in the raw log.
For \SharedGasBurner{}, the shortest sampling interval without any false alarms is 600 ms and the robustness is 0. %

In \cref{figure:result:navigation}, we observe that when the robustness is small, \ie{} the execution is close to violation, model-bounded monitoring tends to cause false alarms with a short sampling interval. This is because when the sampled log is coarse, the overapproximation in the model-bounded monitoring is rough, and near unsafe behavior is deemed to be unsafe.
 This observation suggests the following process to decide a reasonable sampling interval:
\begin{ienumeration}
 \item assume false alarms with small robustness (\eg{} robustness $\leq 0.15$) are acceptable,
 \item we plot the relationship between the robustness and the shortest sampling interval without false alarms, \eg{} \cref{figure:result:navigation}, and
 \item we decide the sampling interval so that the amount of the false alarms with high robustness becomes reasonably small, \eg{} in \cref{figure:result:navigation}, if the false alarms with robustness $\leq 0.15$ are acceptable, 600 ms is a reasonable sampling interval.
\end{ienumeration}

\cref{table:result:false_alarm} shows the rate of the false alarms in \UnsafeNAV{} compared with the number of the safe behaviors and the number of the total alarms.
Although the monitored behaviors are very close to the violation (the mean robustness is $6.20 \times 10^{-3}$ and the standard deviation is $3.00 \times 10^{-2}$), when the sampling interval is~50\,ms, 80\,\% of the alarms are true alarms.
We believe that this amount of false alarms is acceptable in many applications.
If about 50\,\% of false alarms is acceptable, we can increase the sampling interval to~300\,ms.

For \SharedGasBurner{}, the shortest sampling interval without any false alarms is 600 ms and the robustness is 0. %
In \SharedGasBurner{}, even though the robustness is 0, we have no false alarms for a relatively large sampling interval (600\,ms). This is because:
\begin{ienumeration}
 \item the low robustness is due to the initial value $x_1 = 0$,
 \item initially, $\dot{x}_1 > 0$ holds in the bounding model, and we have no false alarm there;,
 \item we potentially have false alarm only around the switching, and 
 \item around the switching, the robustness is 0.099, which is not very small.
\end{ienumeration}

We note that our LHA construction does not utilize any runtime information, and the precision of the approximation is limited. It is a future work to reduce the false alarms by improving the approximation utilizing the runtime information by an online construction of the LHAs \eg{} the iterative refinement in~\cite[Section~3.2]{Frehse2008}.

\begin{PartialObservationVersion}
\section{Model-Bounded Monitoring with Partial Observations}\label{section:hamoni_partial}

 Here, we briefly show that it is straightforward to generalize model-bounded monitoring to monitor the signals with partial observations.
We note that by using partial observations, we can also use parameters both in the model and the specification because a parameter is equivalent to a variable $\variable$ with flow $\dot{\variable} = 0 $ and never observed in the log $\word$.
Moreover, we can conduct \emph{timed pattern matching}~\cite{UFAM14,WAH16} by constructing the \emph{matching automaton} in~\cite{BFNMA18,Waga19} with parameters $t$ and $t'$.

\subsection{Partial timed quantitative word}
 We introduce partial timed quantitative word to model the sampling with \emph{partial} observation. A partial timed quantitative word is a timed quantitative word where the valuation function may be partial.
 For partial timed quantitative words, we use the same notation as timed quantitative words.

 \begin{definition}
  [partial timed quantitative words]%
  \label{def:partial_timed_quantitative_word}
  A \emph{partial timed quantitative word} $\word$ is a sequence $(\wordVal_1, \tau_1), (\wordVal_2, \tau_2), \dots,(\wordVal_m, \tau_m)$ of pairs $(\wordVal_i, \tau_i)$ of a partial valuation $\wordVal_i\colon \Variables \partfun \R$ and a timestamp $\tau_i \in \Rnn$ satisfying $\tau_i \leq \tau_{i+1}$ for each $i \in \{1,2,\dots,m-1\}$.
 For partial valuations $\wordVal,\wordVal'\colon \Variables \partfun \R$,
 we denote $\wordVal \sqsubseteq \wordVal'$ if
 for any $\variable \in \domain{\wordVal}$, $\wordVal(\variable) = \wordVal'(\variable)$ holds.
 \end{definition}
\subsection{Partial monitored language}

We extend the monitored language in \cref{def:monitoredLang} to the set of partial timed quantitative words. For partial monitored languages, we use the same notation as monitored languages.

\begin{definition}[partial monitored language $\Lmon(\A)$]%
\label{def:partial_monitoredLang}
Let $\run = s_0 \to_1 s_1 \to_2 \dots \to_n s_n$ be a run of an LHA $\A$ (\cref{def:concreteSem}), and
 $\word = (\wordVal_1,\tau_1),(\wordVal_2,\tau_2), \dots, (\wordVal_m,\tau_m) $ be a partial timed quantitative word.
We say $\word$
is \emph{associated} to $\run$
 if, for each $j \in \{1,2,\dots,m\}$, we have either of the following two. Here $\loc_{i}, \varval_{i}$ are so that  $s_{i} = (\loc_{i}, \varval_{i})$ for each $i \in\{0,1,\dots,n\}$. 
 \begin{enumerate}
 \item There exists $i \in \{0,1,2,\dots, n\}$ such that $\Duration(\run[i]) = \tau_j$ and $\wordVal_j \sqsubseteq \varval_{i}$; or
 \item There exists $i \in \{0,1,2,\dots, n-1\}$ such that $\Duration(\run[i]) < \tau_j < \Duration(\run[i+1])$ and for any $\variable \in \Variables$, $\wordVal_j(\variable) = \varval_{i}(\variable) + (\tau_j - \Duration(\run[i]))f_{i}(\dot{\variable})$ holds if $\wordVal_{j}(\variable)$ is defined, where ${\flecheRel_i} = {\longueflecheRel{d_i,\flow_i}}$. 
 \end{enumerate}
 Finally, the \emph{partial monitored language} $\partialLg(\A)$ of an LHA $\A$ is the set of partial timed quantitative words associated with some accepting run of~$\A$.
\end{definition}
\subsection{Membership constraint problem for partial monitored languages}

Then, we formulate the monitoring problem of a partial timed quantitative word $\word$ against a specification in an LHA $\A$. 
In the monitoring problem,
the total valuations $\completeWordVal\colon\Variables \to \R$ compatible with the observation are synthesized as well as 
the indices $i$ satisfying $\word[i] \in \partialLg(\A)$ are returned.

\smallskip
\defProblem{The $\partialLg$ membership constraint} {
An LHA $\A$ and
a partial timed quantitative word $\word = (\wordVal_1, \tau_1), \dots, (\wordVal_m, \tau_m)$.}{
Returns  (constraints that express) the set $\matching(\word, \A)$, consisting of all the pairs $(i, \completeWordVal)$ of 
an index $i \in \{1,2,\dots,m\}$ and a total valuation $\completeWordVal\colon \Variables \to \R$ satisfying
$\wordVal_i \sqsubseteq \completeWordVal$ and
$(\wordVal_1,\tau_1),(\wordVal_2,\tau_2),\dots,(\wordVal_{i-1},\tau_{i-1}),(\completeWordVal,\tau_i) \in \partialLg(\A)$.
}
\medskip
\subsection{Algorithms for the membership constraint problem}

Here we show how to modify the algorithms in \cref{section:algorithm1,section:algorithm2} to solve the membership constraint problem. 

\myparagraph{Algorithm I in \cref{section:algorithm1}}

We can directly reuse the algorithm in \cref{section:algorithm1} because of the following reasons.

\begin{itemize}
 \item Although the valuation in the partial timed quantitative word $\word$ can be partial, the construction of the LHA $\A_{\word}$ is the same; we simply convert each partial valuation to a guard.
 \item The construction of the synchronized product $\A_{\word} \product \A$ is the same.
 \item We need to synthesize the constraint for the reachability rather than deciding the reachability. We can use \phaverlite{} to synthesize such a constraint.
\end{itemize}
\myparagraph{Algorithm II in \cref{section:algorithm2}}

Since \cref{algorithm:incremental:outline} does not conduct the constraint synthesis, we need the following slight  modification.

\begin{itemize}
 \item The constraint $v = \wordVal_i$ in \cref{algorithm:incremental:outline:restrict} of \cref{algorithm:incremental:outline} is replaced with $\wordVal_i \sqsubseteq v$.
 \item We return the valuation rather than the Boolean acceptance. Namely, \cref{algorithm:incremental:outline:result} of \cref{algorithm:incremental:outline} becomes  $\Resulti{i} \gets \{\varval \mid \exists (\loc, \varval) \in \Conf_{i}.\, \loc\in\LocFinal \}$.
\end{itemize}
\end{PartialObservationVersion}

\section{Conclusions}\label{section:conclusion}
Based on a novel language notion $\Lmon$ for LHAs, we formulated what we call the \emph{model-bounded monitoring} scheme for hybrid systems.
It features the use of a bounding model $\A$ of the system to bridge the gap between (continuous-time) system behaviors and (discrete-time) logs that a monitor can access.
While the $\Lmon$ membership problem is undecidable, our two partial algorithms (especially our dedicated \masakiTool{}) work well for benchmarks on automotive platooning, robot navigation, and gas burner controlling. 
Although it remains exponential in the dimension, mainly due to polyhedra operations, our technique is \emph{scalable}, in the sense that it is linear (and even less than linear) in the size of the input timed quantitative word.
When the bounding model $\A$ is imprecise, there can be false alarms, \ie{} safe behaviors can be detected as a safety violation.
In contrast, our experiment results show that when the sampling interval is short enough, false alarms occur only for the behavior with small robustness.
Overall, our results show the power of symbolic manipulation of polyhedra in the monitoring of cyber-physical systems. 

\section{Perspectives}\label{section:perspectives}
The following summarizes some of the future directions of model-bounded monitoring.

\paragraph{Optimizing \masakiTool{}}
So far, \masakiTool{} is a quickly developed prototype.
Optimization of the reachability analysis with the technique in~\cite{BZ19,CAF11} is a first future work.
In addition, importing the latest optimizations from \phaverlite{}~\cite{BZ19} into \masakiTool{} is on our agenda.
Further evaluation of the precision of model-bounded monitoring using a real-world industry example is also a future work.

\paragraph{Monitoring against temporal properties}
Another immediate future work is an extension for monitoring against a temporal property, \eg{} represented by a temporal logic formula or an automaton instead of a conjunction of linear (in)equalities.
Since many realistic properties are temporal (see \eg{}~\cite{KJDDYIKKS2016,ARCH21:ARCH_COMP_2021_Category_Report}),
the practical benefit by such an extension would be high.
We note that if the violation of the property is represented by an LHA $\A'$, such a temporal extension is straightforward---the LHA $\A_{\neg \varphi}$ is the synchronous product of the bounding model $\A$ and the property LHA $\A'$.

\paragraph{Further uncertainties in the observation}
By using a bounding model, we conduct monitoring taking into account of the uncertain behavior between the samples due to the lack of the observation.
In practice, there are many other uncertainties in the observation, \eg{} some values are unobservable in some samples, we have to predict the future behavior, and the obtained values have some measurement error.
One of the future directions is to utilize a bounding model to take into account of such uncertainties.

\paragraph{Uncertainty in the specification}
One of the practical obstacles in monitoring is to precisely decide the specification.
For instance, although it is natural to require that the distance between any two cars should be kept large enough in an ACC system, it is not always straightforward to define the exact threshold of the safe distance.
Utilizing \emph{parameters} in the specification is a solution for such an issue~\cite{AHW18,WA19,WAH19}.
Parametric monitoring consists in exhibiting parameter valuations for which a specification is violated (or correct). Parametrization of the current framework is future work.

\paragraph{Quantitative monitoring}
One potential  extension of the $\Lmon$ membership problem is to make it more quantitative: returning a \emph{distance} between the monitored language $\Lg(\A)$ and the observation $\word$ rather than checking the membership of the observation $\word$ to the monitored language $\Lg(\A)$.
This makes the result of the analysis even more useful.
We note that this future direction is related to the algorithm~\cite[Section 4]{FP09} to estimate the worst-case robustness, but the use of LHAs would give a finer estimation.

\paragraph{Bounding models for specifying signal frequencies}
When we specify the behavior of periodic signals, signal \emph{frequency} is one of the essential features.
One of the future directions is to use a bounding model to specify the frequencies of the signals, \eg{} using the LHA construction in~\cite{CSF11}.

\paragraph{Testing-based approach}
One of the limitations of our approach is the scalability, especially \wrt{} the dimension of the bounding model.
A testing-based approach, \eg{} in~\cite{DN09}, instead of the current \emph{guaranteed} approach by the convex polyhedra analysis, is future work for more efficiency at the expense of the safety guarantee.
\ifdefined\VersionForArXiV\else
\begin{acks}
	\ouracks{}
\end{acks}
\fi
	\newcommand{\CCIS}{Communications in Computer and Information Science}
	\newcommand{\ENTCS}{Electronic Notes in Theoretical Computer Science}
	\newcommand{\FMSD}{Formal Methods in System Design}
	\newcommand{\IJFCS}{International Journal of Foundations of Computer Science}
	\newcommand{\IJSSE}{International Journal of Secure Software Engineering}
	\newcommand{\JLAP}{Journal of Logic and Algebraic Programming}
	\newcommand{\JLC}{Journal of Logic and Computation}
	\newcommand{\LMCS}{Logical Methods in Computer Science}
	\newcommand{\LNCS}{LNCS}
	\newcommand{\RESS}{Reliability Engineering \& System Safety}
	\newcommand{\STTT}{International Journal on Software Tools for Technology Transfer}
	\newcommand{\TCS}{Theoretical Computer Science}
	\newcommand{\ToPNoC}{Transactions on Petri Nets and Other Models of Concurrency}
	\newcommand{\TSE}{IEEE Transactions on Software Engineering}
        \newcommand{\FGCS}{Future Generation Computing Systems}
        \newcommand{\SCP}{Science of Computer Programming}

\ifdefined\VersionForArXiV
	\renewcommand*{\bibfont}{\small}
	\printbibliography[title={References}]
\else
      \bibliographystyle{ACM-Reference-Format}
	\bibliography{ParamHAMon}
\fi

\end{document}